\theoremstyle{plain}
\renewcommand{\subparagraph}[1]{\paragraph{#1}}
\newtheorem{observation}{Observation}
\newcommand{\qedhere}{\qed}
\def\lowhook#1{\lower.47em\hbox{$#1$}}
\def\LEorL{\mathrel{\lowhook\lhook\kern-.1em{\le}\kern-.08em\lowhook\rhook}}
\def\GEorG{\mathrel{\lowhook\lhook\kern-.08em{\ge}\kern-.1em\lowhook\rhook}}
\let\doendproof\endproof
\renewcommand\endproof{~\hfill\qed\doendproof}
\renewcommand{\paragraph}[1]{\smallskip\noindent\textbf{#1}\xspace}
\Crefname{observation}{Observation}{Observations}
\Crefname{algorithm}{Algorithm}{Algorithms}
\Crefname{section}{Section}{Sections}
\Crefname{observation}{Observation}{Observations}
\Crefname{lemma}{Lemma}{Lemmas}
\Crefname{claim}{Claim}{Claims}
\Crefname{figure}{Fig.}{Figs.}
\Crefname{figure}{Fig.}{Figs.}
\Crefname{enumi}{Condition}{Conditions}
\newcommand{\RGB}{\ensuremath{\{r,g,b\}}}
\newcommand{\ytop}{\text{top}}
\newcommand{\leftcornersymb}{\raisebox{-1pt}{\includegraphics[page=2,scale=0.5]{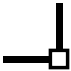}}\hspace{1pt}}
\newcommand{\topcornersymb}{\raisebox{-1pt}{\includegraphics[page=3,scale=0.5]{symbols}}\hspace{1pt}}
\newcommand{\rightcornersymb}{\raisebox{-1pt}{\includegraphics[page=1,scale=0.5]{symbols}}\hspace{1pt}}
\newcommand{\leftcorner}[2]{\ensuremath{\leftcornersymb_{#1}(#2)}\xspace}
\newcommand{\rightcorner}[2]{\ensuremath{\rightcornersymb_{#1}(#2)}\xspace}
\newcommand{\topcorner}[2]{\ensuremath{\topcornersymb_{#1}(#2)}\xspace}
\newcommand{\hsidesymb}{\raisebox{0pt}{\includegraphics[page=5,scale=0.5]{symbols}}\hspace{1pt}}
\newcommand{\dsidesymb}{\raisebox{0pt}{\includegraphics[page=4,scale=0.5]{symbols}}\hspace{1pt}}
\newcommand{\vsidesymb}{\raisebox{0pt}{\includegraphics[page=6,scale=0.5]{symbols}}\hspace{1pt}}
\newcommand{\hside}[2]{\ensuremath{\hsidesymb_{#1}(#2)}\xspace}
\newcommand{\dside}[2]{\ensuremath{\dsidesymb_{#1}(#2)}\xspace}
\newcommand{\vside}[2]{\ensuremath{\vsidesymb_{#1}(#2)}\xspace}
\DeclareMathOperator{\DAG}{DAG}
\DeclareMathOperator{\RT}{RT}
\renewcommand{\todo}[2][]{\@bsphack\@todo[#1]{\textcolor{black}{#2}}\@esphack\ignorespaces}
\newcommand{\remove}[1]{}
\definecolor{realblue}{rgb}{0,0,1}
\definecolor{blue}{rgb}{0.274,0.392,0.666}
\definecolor{darkerblue}{rgb}{0.094,0.455,0.804}
\definecolor{darkblue}{rgb}{0.063,0.306,0.545}
\definecolor{red}{rgb}{0.627,0.117,0.156}
\definecolor{green}{rgb}{0,0.588,0.509}
\definecolor{orange}{rgb}{0.903,0.739,0.382}
\definecolor{realred}{rgb}{1,0,0}
\definecolor{lipicsblue}{rgb}{0.08235294118,0.3098039216,0.537254902}
\definecolor{ourred}{rgb}{1,0.3,0.3}
\definecolor{darkgreen}{rgb}{0, 0.7, 0}
\newcommand{\darkblue}[1]{{{\textcolor{darkblue}{#1}\xspace}}}
\renewcommand{\emph}[1]{\darkblue{\em #1}}
\begin{document}

\newcommand{\tuba}{$^1$}
\newcommand{\wurz}{$^2$}
\newcommand{\konst}{$^3$}
\newcommand{\rome}{$^4$}

\newcommand{\acks}{
This work was supported in part by DFG grant Ka812/17-1 and by MIUR-DAAD Joint Mobility Program n.57397196 (Angelini), 
in part by DFG grant WO$\,$758/11-1 (Chaplick),
and
in part by MIUR Project ``MODE'' under PRIN 20157EFM5C, 
by MIUR Project ``AHeAD'' under PRIN 20174LF3T8, 
by MIUR-DAAD JMP N$^\circ$ 34120, and
by H2020-MSCA-RISE project 734922 – ``CONNECT'' (Da Lozzo and Roselli).
\xspace}

\title{Morphing Contact Representations of Graphs\thanks{\acks}}

\author{
{Patrizio Angelini}\tuba,  
Steven Chaplick\wurz,
Sabine Cornelsen\konst,
{Giordano {Da Lozzo}}\rome, and
{Vincenzo {Roselli}}\rome
}
\institute{
University of T\"ubingen, T\"ubingen, Germany\\
\href{mailto:angelini@informatik.uni-tuebingen.de}{angelini@informatik.uni-tuebingen.de}\\
\smallskip
University of W\"urzburg, W\"urzburg, Germany\\
\href{mailto:steven.chaplick@uni-wuerzburg.de}{steven.chaplick@uni-wuerzburg.de}\\
\smallskip
University of Konstanz, Konstanz, Germany\\
\href{mailto:sabine.cornelsen@uni-konstanz.de}{sabine.cornelsen@uni-konstanz.de}\\
\smallskip
Roma Tre University, Rome, Italy\\
\href{mailto:giordano.dalozzo@uniroma3.it,vincenzo.roselli@uniroma3.it}{\{giordano.dalozzo,vincenzo.roselli\}@uniroma3.it}
}

% %%%%%%%%%%%%%% Removes the Paper Title from the Table of Contents
\let\oldaddcontentsline\addcontentsline
\def\addcontentsline#1#2#3{}
\maketitle
\def\addcontentsline#1#2#3{\oldaddcontentsline{#1}{#2}{#3}}
% %%%%%%%%%%%%%%%%%%%%%%%%%%%%%%%%%%%%%%%%%%%%%%%%%%%%%%%%%%%%%%%%%%%

\begin{abstract}
  We consider the problem of morphing between contact representations
  of a plane graph.  In an \emph{$\mathcal F$-contact representation}
  of a plane graph $G$, vertices are realized by internally disjoint
  elements from a family $\mathcal F$ of connected geometric objects.
  Two such elements touch if and only if their corresponding vertices
  are adjacent.  These touchings also induce the same embedding as in
  $G$.  In a morph between two $\mathcal F$-contact representations we
  insist that at each time step (continuously throughout the morph) we
  have an $\mathcal F$-contact representation.

  We focus on the case when $\mathcal{F}$ is the family of triangles
  in $\mathbb{R}^2$ that are the lower-right half of axis-parallel
  rectangles.  Such \emph{RT-representations} exist for every plane
  graph and right triangles are one of the simplest families of shapes
  supporting this property.  Thus, they provide a natural case to
  study regarding morphs of contact representations of plane graphs.

  We study \emph{piecewise linear morphs}, where each step is a
  \emph{linear morph} moving the endpoints of each triangle at
  constant speed along straight-line trajectories.  We provide a
  polynomial-time algorithm that decides whether there is a piecewise
  linear morph between two RT-representations of an $n$-vertex plane
  triangulation, and, if so, computes a morph with $\mathcal O(n^2)$
  linear morphs.  As a direct consequence, we obtain that for
  $4$-connected plane triangulations there is a morph between every
  pair of RT-representations where the ``top-most'' triangle in both
  representations corresponds to the same vertex.  This shows that the
  realization space of such RT-representations of any $4$-connected
  plane triangulation forms a connected set.
\end{abstract}

\section{Introduction}

We consider the morphing problem from the perspective of geometric representations of graphs. 
While a lot of work has been done to understand how to planarly morph the standard \emph{node-link} diagrams\footnote{where vertices are represented as points and edges as non-crossing curves} of plane graphs\footnote{the set of faces and the outer face are fixed} and to ``rigidly'' morph\footnote{scaling the objects is not allowed, e.g., as in \emph{bar-joint} systems~\cite{Laman1970} or in \emph{body-hinge} systems~\cite{DBLP:conf/gd/BowenDLR0T15,DBLP:journals/dcg/ConnellyDDFLMRR10,DBLP:books/daglib/0019278}} configurations of geometric objects, comparatively little has been explicitly done regarding (non-rigid) morphing of alternative representations of planar graphs, e.g., contact systems of geometric objects such as disks or triangles. 
In this case, the planarity constraint translates into the requirement of continuously maintaining a representation of the appropriate type throughout the morph. 

More formally, let $\mathcal F$ be a family of geometric objects homeomorphic to a disk. An \emph{$\mathcal F$-contact representation} of a plane graph $G$ maps vertices to internally disjoint elements of $\mathcal F$. We denote the geometric object representing a vertex $v$ by $\Delta(v)$. Objects $\Delta(v)$ and $\Delta(w)$ touch if and only if $\{v,w\}$ is an edge. 
The contact system of the objects must induce the same faces and outer face as in $G$.
A \emph{morph} 
between two $\mathcal F$-contact representations $R_0$ and $R_1$ of a plane graph $G$ is a continuously changing family of $\mathcal F$-contact representations $R_t$ of $G$ indexed by time $t \in [ 0,1 ]$.
An implication of the existence of morphs between any two representations of the same type is that the topological space defined by such representations is connected.
We are interested in elementary morphs, and in particular in \emph{linear morphs}, where the boundary points of the geometric objects move at constant speed along straight-line trajectories from their starting to their ending position. 
A \emph{\mbox{piecewise linear morph of length $\ell$}} between two
$\mathcal F$-contact representations $R_1$ and $R_{\ell+1}$ of a plane
graph $G$ is a sequence $\langle R_1,\dots,R_{\ell+1} \rangle$
of $\mathcal F$-contact representations of $G$ such that
$\langle R_i,R_{i+1} \rangle$ is a linear morph, for $i = 1,\dots,\ell$.
For a background on the mathematical aspects of morphing, see, e.g.,~\cite{alt2000}.

\subparagraph{Morphs of Node-Link Diagrams.}
F\'ary's theorem tells us that every plane graph has a node-link diagram where the edges are mapped to line segments. 
Of course, for a given plane graph $G$, there can be many such node-link diagrams of $G$, and the goal of the work in planar morphing is to study how (efficiently) one can create a smooth (animated) transition from one such node-link diagram to another while maintaining planarity.
Already in the 1940's Cairns~\cite{cairns1944} proved that, for plane triangulations, planar morphs exist between any pair of such node-link diagrams. 
However, the construction involved exponentially-many morphing steps. 
Floater and Gotsman~\cite{floater1999}, and  Gotsman and Surazhsky~\cite{DBLP:journals/cg/GotsmanS01,DBLP:journals/tog/SurazhskyG01} gave a different approach via Tutte's graph drawing algorithm~\cite{Tutte1963}, but this involves non-linear trajectories of unbounded complexity. 
Thomassen~\cite{thomassen1983} and Aranov et al.~\cite{aronov1993} independently showed that two node-link diagrams of the same plane graph have a \emph{compatible triangulation}\footnote{i.e., a way to triangulate both diagrams to produce the same plane triangulation} thereby lifting Cairns' result to plane graphs.
Of particular interest is the study of \emph{linear morphs}, where each vertex moves at a uniform speed along a straight-line.
After several intermediate results to improve the complexity of the morphs~\cite{DBLP:conf/soda/AlamdariACBFLPRSW13,DBLP:conf/icalp/AngeliniLBFPR14} and to
remove the necessity of computing compatible triangulations~\cite{DBLP:conf/gd/AngeliniFPR13}, the current state of the art~\cite{AlamdariABCLBFH17} is that there is a planar morph between any pair of node-link diagrams of any $n$-vertex plane graph using $\mathcal \theta(n)$ linear steps.
Planar morphs of other specialized plane node-link diagrams have also been considered, e.g., planar orthogonal drawings~\cite{Biedl:2013:MOP:2533288.2500118,vangoethem_et_al:LIPIcs:2018:8755}, convex drawings~\cite{DBLP:conf/compgeom/AngeliniLFLPR15}, upward planar drawings~\cite{ddfpr-upm-gd-18}, and so-called \emph{Schnyder drawings}~\cite{barreraCruz/haxel/lubiw:18}. In this latter result the lattice structure of all Schnyder woods of a plane triangulation~\cite{brehm:diplom,felsner:04} is exploited in order to obtain a sequence of linear morphs within a grid of quadratic size. Finally, planar morphs on the surface of a sphere~\cite{DBLP:journals/jgaa/KobourovL08} and in three dimensions have been investigated~\cite{abcdd-pd3-gd-18}. 

\subparagraph{Morphs of Contact Representations.}
Similar to F\'ary's theorem, the well-known Koebe-Andreev-Thurston theorem~\cite{Andreev,Koebe} states that every plane graph $G$ has a \emph{coin} representation, i.e.\ an $\mathcal F$-contact representation where $\mathcal F$ is the set of all disks. 
Additionally, for the case of 3-connected plane graphs, such coin representations of $G$ are unique up to \emph{M\"obius transformations}~\cite{BrightwellS93} -- see \cite{fr-pdcr-18} for a modern treatment. 
There has been a lot of work on how to intuitively understand and animate such transformations (see, e.g., the work of Arnold and Rogness~\cite{ArnoldRogness2008}), i.e., for our context, how to morph between two coin representations. 
Of course, ambiguity remains regarding how to formalize the complexity
of such morphs. 
In particular, this connection to M\"obius transformations appears to indicate that a theory of piecewise linear morphing for coin representations would be quite limited.

For this reason, we instead focus on contact representations of convex polygons.
% In fact,
These shapes still allow for representing all plane triangulations, as a direct consequence of the Koebe-Andreev-Thurston theorem, but are more amenable to piecewise linear morphs, where the linearity is defined on the trajectories of the corners.
% Most notably,
De Fraysseix et al.~\cite{fraysseix/mendez/rosenstiehl:94} showed that every plane graph $G$ has a contact representation by triangles, and observed that these triangle-contact representations correspond to the 3-orientations (i.e., the \emph{Schnyder woods}) of $G$. Schrenzenmaier~\cite{schrezenmaier:wg17} used Schnyder woods to show that each 4-connected triangulation has a contact representation with homothetic triangles.
%
% Interestingly, just as Felsner~\cite{felsner:04} generalized Schnyder woods from triangulations~\cite{schnyder:soda90} to 3-connected plane graphs,
Gon{\c{c}}alves et al.~\cite{DBLP:journals/dcg/GoncalvesLP12} extended the triangle-contact results from triangulations~\cite{fraysseix/mendez/rosenstiehl:94} to 3-connected plane graphs, by showing that Felsner's generalized Schnyder woods~\cite{felsner:04} correspond to \emph{primal-dual} triangle-contact representations. 
Note that triangles and coins are not the only families of shapes that have been studied from the perspective of contact representations. 
Some further examples include boxes in $\mathbb{R}^3$~\cite{DBLP:conf/wg/ChaplickKU13,Felsner:2011,Thomassen86}, line segments~\cite{DBLP:journals/algorithmica/FraysseixM07,kuv-soda2013}, and homothetic polygons~\cite{DBLP:conf/isaac/LozzoDEJ17,DBLP:conf/wg/FelsnerSS18,schramm1990}. 

The construction of triangle-contact representations~\cite{fraysseix/mendez/rosenstiehl:94} (and the correspondence to 3-orientations) can be adjusted so that each triangle is the lower-right half of an axis-parallel rectangle. These \emph{right-triangle representations (RT-representations)} are our focus; see \cref{FIG:schnyder-wood}. 

\subparagraph{Our Contribution and Outline.} The paper is
organized as follows. We start with some definitions in
\cref{SEC:preliminaries} and describe the relationship between
(degenerate) RT-representations and Schnyder woods of plane
triangulations in \cref{SEC:RT}. 
In \cref{SEC:charRTmorph}, we provide necessary and sufficient conditions for a linear morph between two RT-representations. The first condition is that each corner $c$ of a triangle touches the same side $s$ of another triangle in the two representations, that is, the morph happens within the same Schnyder wood. The contact between $c$ and $s$ is always maintained when $s$ has the same slope in the two RT-representations. Otherwise, we require the point of $s$ hosting $c$ to be defined by the same convex combination of the end-points of $s$ in both representations.
In \cref{SEC:morphing_algorithm}, we present our morphing algorithm. If the two input RT-representations correspond to different Schnyder woods, we consider a path between them in the lattice structure of all Schnyder woods, similar to~\cite{barreraCruz/haxel/lubiw:18}, that satisfies some properties (if it exists). When moving along this path, from a Schnyder wood to another, we construct intermediate RT-representations that simultaneously correspond to both woods. We provide an algorithm to construct such intermediate RT-representations that result in a linear morph at each step.
Finally, in \cref{SEC:decision_algorithm}, we show how to decide whether there is a path in the lattice structure that satisfies the required properties.
This results in an efficient testing algorithm for the existence of a piecewise linear morph between two RT-representations of a plane triangulation; in the positive case, the computed piecewise linear morph has at most quadratic length.
Consequently, for $4$-connected plane triangulations, under a natural condition on the outer face of their RT-representations, the topological space defined by such RT-representations is connected.

\smallskip
Full details for omitted or sketched proofs can be found in the Appendix.

\section{Definitions and Preliminaries}\label{SEC:preliminaries}

\paragraph{Basics.}
A \emph{plane triangulation} is a maximal planar graph with a distinguished outer face. 
A \emph{directed acyclic graph (DAG)} is an oriented graph with no directed cycles. A
\emph{topological ordering} of an $n$-vertex DAG $G=(V,E)$ is a
one-to-one map $\tau: V \rightarrow \mathds \{1,\dots,n\}$ such
that $\tau(v) < \tau(w)$ for $(v,w) \in E$.
Let $p$ and $q$ be two points in the plane. The \emph{line segment}
$\overline{pq}$ is the set
$\{(1-\lambda) p + \lambda q;\;0 \leq \lambda \leq 1\}$ of convex
combinations of $p$ and $q$. Considering $\overline{pq}$
\emph{oriented} from $p$ to $q$, we say that \emph{$x$ cuts $\overline{pq}$ with the ratio $\lambda$} if $x= (1-\lambda) p + \lambda q$.

\smallskip
In the case of polygons, a linear morph is
completely specified by the initial and final positions of the corners
of each polygon. If a corner $p$ is at position $p_0$ in the initial
representation (at time $t = 0$) and at position $p_1$ in the final
representation (at time $t = 1$), then its position at time $t$ during
a linear morph is $(1 - t)p_0 + t p_1$ for any $0 \leq t \leq 1$.

\smallskip
\paragraph{Schnyder Woods.}
\begin{figure}[t]
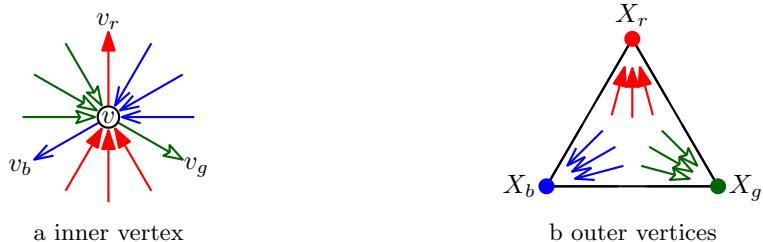

\centering
\begin{subfigure}[b]{.3\textwidth}\centering
\includegraphics[page=5]{figures}
\caption{\label{fig:schnyderInner}inner vertex}
\end{subfigure}\hfil
\begin{subfigure}[b]{.3\textwidth}\centering
\includegraphics[page=6]{figures}
\caption{\label{fig:schnyderOuter}outer vertices}
\end{subfigure}
\caption{\label{FIG:schnyderWood}The two conditions for a Schnyder wood.}
\end{figure}
A \emph{3-orientation} \cite{brehm:diplom,felsner:04} of a plane
triangulation is an orientation of the inner edges such that each
inner vertex has out-degree 3 and the three outer vertices have
out-degree $0$. A \emph{Schnyder wood} $T$ \cite{schnyder:soda90} of a
plane triangulation $G$ is a 3-orientation together with a partition
of the inner edges into three color classes, such that the three
outgoing edges of an inner vertex have distinct colors and all the incoming edges of an outer vertex have the same
color. Moreover, the color assignment around the vertices must be as
indicated in~\cref{FIG:schnyderWood}. We say that a cycle in a
Schnyder wood is \emph{oriented} if it is a directed~cycle.

The following well-known properties of Schnyder woods can directly be
deduced from the work of Schnyder~\cite{schnyder:soda90}.
  \begin{inparaenum}
\item\label{ITEM:SchnyderWoodsExist} Every plane triangulation has a 3-orientation. 
\item\label{ITEM:3oriEqualsSchnyderWood} For each 3-orientation of a plane triangulation there is exactly one partition of the inner edges into three color classes such that the pair yields a Schnyder wood. 
\item\label{ITEM:IsWood}
  Each color class of a Schnyder wood
  induces a directed spanning tree rooted at an outer vertex. 
   \item\label{ITEM:SchnyderDAG}
  Reversing the edges of two color classes
  and maintaining the orientation of the third color class yields a
  directed acyclic graph.
   \item\label{ITEM:orientedTriangle3colored}
  The edges of an oriented
  triangle in a Schnyder wood have three distinct colors and every
  triangle composed of edges of three different colors is oriented.
\end{inparaenum}

We call the color classes red
(r,\includegraphics[page=55]{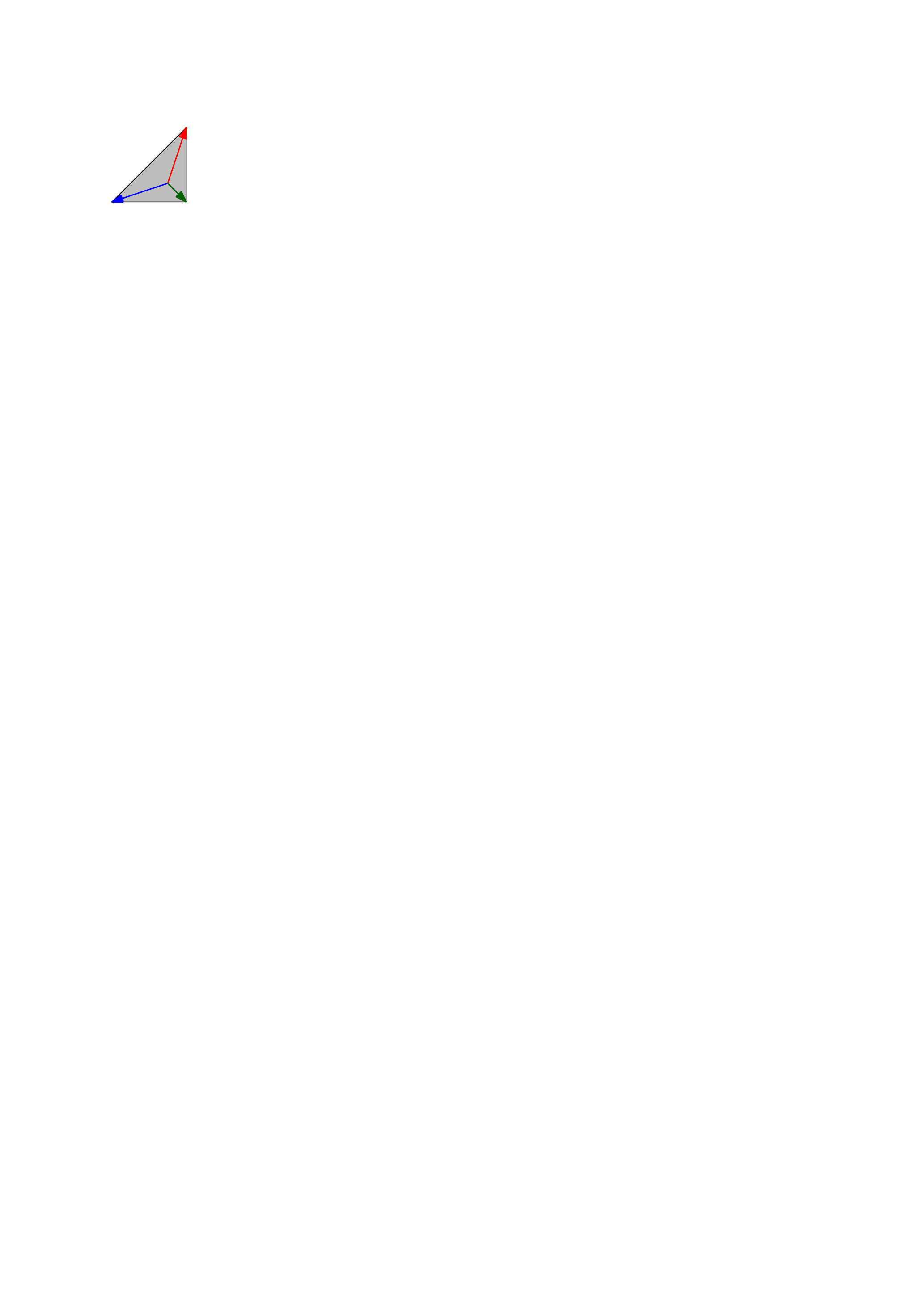}), blue
(b,\includegraphics[page=56]{figures.pdf}), and green
(g,\includegraphics[page=57]{figures.pdf}).  
The symbols $X_r$, $X_b$, and $X_g$
 denote the \emph{red}, \emph{blue}, and \emph{green
  outer vertex} of $G$, i.e., the outer vertices with
incoming red, blue, and green edges, respectively.
For an inner vertex $v$, let $v_r$, $v_b$, and $v_g$ be the respective neighbors of $v$ such that $(v,v_r)$ is red, $(v,v_b)$ is blue, and $(v,v_g)$ is
green. Finally, let $\DAG_r(T)$ ($\DAG_b(T)$) be the directed acyclic
graph obtained from $G$ by orienting all red (blue) edges as in $T$ while all blue (red) and green edges are reversed.

Let $C$ be an oriented triangle of a Schnyder wood
$T$. Reversing $C$ yields another 3-orientation with its unique
Schnyder wood $T_C$. If $C$ is a facial cycle, then $T$ differs from
$T_C$ by recoloring the edges on $C$ only.  More precisely, the former
outgoing edge of a vertex gets the color of the former incoming edge
of the same vertex. This procedure of reversing and recoloring is
called \emph{flipping}\footnote{Brehm~\cite{brehm:diplom} called
  flipping a counter clockwise triangle a flip, and flipping a
  clockwise triangle a flop.} an oriented triangle of a Schnyder wood.
Any Schnyder wood can be converted into any other Schnyder wood
of the same plane triangulation by flipping $\mathcal O(n^2)$ oriented
triangles~\cite{barreraCruz/haxel/lubiw:18,brehm:diplom}.
For two Schnyder woods $T_0$ and $T_\ell$, 
$\mathcal<C_1,\dots,C_{\ell}\mathcal>$ is a \emph{flip sequence
between $T_0$ and $T_\ell$} if there are Schnyder
woods $T_1,\dots,T_{\ell-1}$ such that $C_i$, $i=1,\dots,\ell$, is an
oriented triangle in $T_{i-1}$ and $T_i$ is obtained from $T_{i-1}$ by
flipping $C_i$.  
We say that \emph{a Schnyder wood $T'$ can be
  obtained from a Schnyder wood $T$ by a sequence of facial flips} if
there is a flip sequence between $T$ and $T'$ that contains only
facial cycles.

\section{RT-Representations of Plane Triangulations}\label{SEC:RT}

Let $R$ be an RT-representation of a plane triangulation $G$ and let $u$ be a vertex of $G$. Recall that $\Delta(u)$ is the triangle representing $u$ in $R$. We denote by \hside{}{u}, \vside{}{u}, and \dside{}{u} the \emph{horizontal}, \emph{vertical}, and \emph{diagonal} side of $\Delta(u)$.  
Further, we denote by \leftcorner{}{u}, \rightcorner{}{u}, and \topcorner{}{u}, the left, right, and top corner of $\Delta(u)$, respectively.
If two triangles touch each other in their corners, we say that these two corners \emph{coincide}. If there exist no two triangles whose corners coincide, then $R$ is \emph{non-degenerate}; otherwise, it is \emph{degenerate}.
Let $(c,s)$ be a pair with $c\in \{\leftcornersymb, \rightcornersymb, \topcornersymb \}$ and $s \in \{\vsidesymb,\hsidesymb,\dsidesymb\}$, we say that $(c,s)$ is a \emph{compatible pair} if it belongs to the set $\{ (\rightcornersymb, \dsidesymb), (\leftcornersymb, \vsidesymb), (\topcornersymb, \hsidesymb)\}$.
Observe that, in any RT-representation of $G$, if a corner $c$ of a triangle $\Delta(u)$ touches the side $s$ of a triangle $\Delta(v)$, with $(u,v) \in E(G)$, then $(c,s)$ is a compatible pair. 
We formally require this also in the case of a degeneracy. E.g., if $\rightcorner{}{v}$ coincides with $\leftcorner{}{u}$ for two vertices $u$ and $v$, then the respective compatible pair is either $(\leftcornersymb, \vsidesymb)$ or $(\rightcornersymb, \dsidesymb)$~--~even though $\rightcorner{}{v}$ also touches $\hside{}{u}$, and  $\leftcorner{}{u}$ touches $\hside{}{v}$.

In the next two subsections, we describe the relationship between RT-representations and Schnyder woods~\cite{fraysseix/mendez/rosenstiehl:94} and extend it to the case of degenerate RT-representations.

\subsection{From RT-Representations to Schnyder Woods}
\label{se:RT2wood}

Let $G=(V,E)$ be a plane triangulation with a given
RT-representation $R$. It is possible to orient and color the edges of $G$ in
order to obtain a Schnyder wood by considering the types of contacts between triangles in $R$ as follows.

First, consider the non-degenerate case; refer to \cref{fig:yieldsWood}.
Let $e=\{u,v\} \in E$ be an inner edge such that a corner
$c$ of $\Delta(u)$ touches a side $s$ of $\Delta(v)$. 
We use the following rules: We orient $e$ from $u$ to $v$, and
color $e$
\begin{inparaenum}
\item[\textbf{blue}] if $c$ is \leftcorner{}{u},
\item[\textbf{green}] if $c$ is \rightcorner{}{u},
\item[\textbf{red}] if $c$ is  \topcorner{}{u}.
\end{inparaenum}

\begin{lemma}[\cite{fraysseix/mendez/rosenstiehl:94}, Theorem 2.2]
  The above assignment yields a Schnyder wood.
\end{lemma}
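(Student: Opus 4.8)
The plan is to verify, one by one, the defining conditions of a Schnyder wood: the prescribed out-degrees, three outgoing edges of distinct colors at each inner vertex, a single color on all edges entering an outer vertex, and the local rotation scheme of \cref{FIG:schnyderWood}. It is convenient to first record the \emph{dual} form of the coloring rule, read from the touched side rather than from the touching corner: in the non-degenerate case each inner edge $\{u,v\}$ is realized by a unique contact between one corner and one side, so its orientation and color are well defined, and a contact lying on a \emph{vertical} side produces an incoming \emph{blue} edge, one on a \emph{diagonal} side an incoming \emph{green} edge, and one on a \emph{horizontal} side an incoming \emph{red} edge. That the three outgoing edges of an inner vertex carry distinct colors is then immediate, since the three corners of a triangle are of the three distinct types (left, right, top), which the rule maps to the three distinct colors (blue, green, red).

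First I would settle the out-degrees. Since $R$ induces the same outer face as $G$, the three triangles on the boundary of the arrangement are exactly $\Delta(X_r)$, $\Delta(X_b)$, and $\Delta(X_g)$, and these are pairwise in contact, realizing the three outer edges. The geometric heart of the argument is that each of these triangles meets the interior of the arrangement along a single side --- the horizontal side of $\Delta(X_r)$, the vertical side of $\Delta(X_b)$, and the diagonal side of $\Delta(X_g)$ --- while each of its three corners is either an extreme point of the arrangement, touching no other triangle, or realizes a contact with another outer triangle. In either case no corner of an outer triangle lies on the side of an interior triangle, so the outer vertices have out-degree $0$ among the inner edges. A counting step then fixes the inner out-degrees: every corner-touches-side contact contributes exactly $1$ to the out-degree of the vertex supplying the corner, hence $\sum_v \text{outdeg}(v)$ equals the number $3n-9$ of inner edges; as each inner vertex has out-degree at most $3$ and the three outer vertices contribute $0$, every one of the $n-3$ inner vertices has out-degree exactly $3$.

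Next I would check the rotation at an inner vertex $u$ (\cref{fig:schnyderInner}). Traversing the boundary of $\Delta(u)$ counter-clockwise meets, in cyclic order, the left corner, the horizontal side, the right corner, the vertical side, the top corner, and the diagonal side. The three corners emit the outgoing edges in the cyclic color order blue, green, red; by the dual rule every contact along one fixed side is incoming and monochromatic --- red along the horizontal side, blue along the vertical side, green along the diagonal side --- and these incoming edges appear as a single consecutive block, ordered as their contact points lie along the side. Hence the counter-clockwise sequence around $u$ is outgoing-blue, a block of incoming-red, outgoing-green, a block of incoming-blue, outgoing-red, a block of incoming-green, precisely the pattern prescribed by \cref{fig:schnyderInner}. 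For the outer vertices, the single interior-facing side identified above carries all of the incoming edges, which by the dual rule are all red at $X_r$, all blue at $X_b$, and all green at $X_g$; this gives both the uniform incoming color and the local picture of \cref{fig:schnyderOuter}.

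I expect the main obstacle to be exactly the structural claim used in the second paragraph: that the three boundary triangles sit in the arrangement so that each presents one prescribed side to the interior and keeps all of its corners off every interior side. Establishing this requires an extremal analysis of the bounding-triangle shapes of an RT-representation, after which the remaining conditions reduce to the local and counting checks described above.
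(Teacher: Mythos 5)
You should first know that the paper itself contains no proof of this statement: it is imported wholesale, as the bracketed citation indicates, from de~Fraysseix, Ossona de Mendez, and Rosenstiehl (Theorem~2.2 of the cited work). So your attempt is a reconstruction of that external result rather than something comparable to an internal argument. Most of your reconstruction is sound: the dual reading of the coloring rule (contacts on vertical, diagonal, horizontal sides are incoming blue, green, red edges), the counting step (in the non-degenerate case each inner edge is exactly one corner-on-side contact, so the out-degrees sum to $3n-9$; with the outer vertices contributing $0$ and each inner vertex bounded by $3$, all $n-3$ inner vertices have out-degree exactly $3$), and the rotation check (counter-clockwise along $\partial\Delta(u)$: left corner, horizontal side, right corner, vertical side, top corner, diagonal side, yielding the pattern out-blue, in-red block, out-green, in-blue block, out-red, in-green block) are all correct, and the resulting local pattern matches the one implicit in the paper's wood-to-representation construction of \cref{SEC:wood2RT}.

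The genuine gap is the one you flag and then defer: the structural claim that each outer triangle meets the interior along a single side of the prescribed type and that no corner of an outer triangle lies on a side of an inner triangle. This is not a peripheral detail. The out-degree count needs the outer vertices to contribute $0$, and the monochromatic-incoming condition needs all contacts on $\Delta(X_i)$ from inner triangles to lie on one side; so, as written, your proof establishes none of the Schnyder conditions involving the outer vertices, and the inner out-degree count collapses with them. Closing the gap needs two ingredients, both elementary but neither automatic. First, a local separation argument showing that a contact-free corner cannot lie on the boundary of an inner face: if a corner of $\Delta(a)$ lay strictly inside the arc of $\Delta(a)$ bounding the face $\left<a,b,d\right>$, then $\Delta(b)$ and $\Delta(d)$ would lie on opposite sides of the supporting line of a side of $\Delta(a)$ through that corner and could never touch, contradicting that $\{b,d\}$ is an edge; combined with the count that exactly $3n-(3n-6)=6$ corners are free and with the hypothesis that $R$ induces the same outer face as $G$, this forces the six free corners to be outer corners, accounts for all nine outer corners, and shows each outer triangle's interior-facing arc is a single side. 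Second, a normal-direction argument to pin down the three side types: the outward normals of the bounded region enclosed by the outer triangles can only point up (below a horizontal side), left (right of a vertical side), or down-right (upper-left of a diagonal side), and no two of these three directions positively span the plane, so a compact region must use all three, one per outer triangle. This pair of arguments is essentially the content of the cited Theorem~2.2; without it, your write-up proves only the inner-vertex half of the lemma.
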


Assume now that there exist two triangles $\Delta(u)$ and $\Delta(v)$ whose corners coincide. Observe that the assignment of colors and directions to the edge $\{u,v\}$ determined by the procedure above would be ambiguous. 
The next observation will be useful to resolve this ambiguity.
\begin{restatable}{observation}{threeInACorner}\label{obs:threeInAcorner}\label{obs:threeInAcorner}
  In an RT-representation of a plane triangulation, if the corner of a triangle $\Delta(u)$ coincides with the corner of a triangle $\Delta(v)$ in a point $p$, then there exists a triangle $\Delta(w)$, $w\neq u,v$, with a corner on $p$, unless $\{u,v\}$ is an edge of the outer face.
\end{restatable}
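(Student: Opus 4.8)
The plan is to reduce to a single local configuration using the three‑fold color symmetry of RT‑representations, and then to show that the coincidence of two corners forces the inner face lying ``behind'' the contact to collapse onto the point $p$. There is a cyclic symmetry that simultaneously permutes the three compatible pairs $(\rightcornersymb,\dsidesymb)$, $(\leftcornersymb,\vsidesymb)$, $(\topcornersymb,\hsidesymb)$; under it, the three possible corner coincidences are equivalent. Hence I may assume without loss of generality that the coinciding corners are $\rightcorner{}{v}$ and $\leftcorner{}{u}$, meeting at $p$ (this is exactly the running example in the text), the remaining two cases being obtained by relabeling.

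First I would fix the local picture at $p$. Since $\leftcorner{}{u}=p$, the side $\hside{}{u}$ emanates from $p$ to the right and the side $\dside{}{u}$ emanates up and to the right, so $\Delta(u)$ occupies an angular sector strictly above the horizontal line $L$ through $p$. Since $\rightcorner{}{v}=p$ is the right‑angle corner of $\Delta(v)$, the sides $\hside{}{v}$ and $\vside{}{v}$ emanate from $p$ to the left and upward, so $\Delta(v)$ also lies above $L$; moreover $\hside{}{u}$ and $\hside{}{v}$ are collinear, lying on $L$ on opposite sides of $p$. Consequently the region immediately below $L$ near $p$ is disjoint from $\Delta(u)\cup\Delta(v)$, and the angular gap below $p$ bounded by these two collinear horizontal sides spans a straight $180^\circ$. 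A short auxiliary check then shows that no triangle can meet $p$ from below through the relative interior of one of its sides: a horizontal, a vertical, or a diagonal side passing through $p$ would each force the corresponding triangle to overlap the interior of $\Delta(u)$ or of $\Delta(v)$. Hence any third triangle incident to $p$ from below must touch it with a \emph{corner}.

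Now I would invoke the correspondence with the Schnyder wood of $G$: the bounded complementary regions of $\bigcup_x\Delta(x)$ are exactly the inner faces, each realized as a hole bounded by the three triangles of its vertices, with the contacts of its three edges as the three corners of the hole. Because $\{u,v\}$ is an inner edge, the complementary region occupying the $180^\circ$ gap below $p$ is an inner face $f=\{u,v,w\}$, and the contact realizing $\{u,v\}$ is its corner at $p$. The key step is that this hole has interior angle $180^\circ$ at $p$, as its two bounding sides there, $\hside{}{u}$ and $\hside{}{v}$, are collinear; a triangular hole with a straight angle at one corner is degenerate, so its three corners are collinear. A final short argument rules out the collinear‑but‑distinct possibility: that would require $\Delta(w)$ to touch both $\hside{}{u}$ and $\hside{}{v}$ with a top corner, whereas $\Delta(w)$ has only one top corner. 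Thus the hole collapses entirely to $p$, so the contacts for $\{u,w\}$ and $\{v,w\}$ also lie at $p$ and $\Delta(w)$ (with $w\neq u,v$) has a corner at $p$, as required. The exception is precisely when $\{u,v\}$ is an outer edge: then the region below the contact may be the unbounded outer face rather than a triangular hole, and no third corner is forced. I expect the main obstacle to be exactly this last step, namely proving that the $180^\circ$ (hence degenerate) face does not merely flatten but collapses fully to the single point $p$.
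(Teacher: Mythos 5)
Your overall strategy is in essence the paper's: among the two faces incident to the edge $\{u,v\}$, look at the one realized in the ``flat'' gap at $p$ and show that its third triangle is pinched onto $p$, the outer face being the only escape (the paper phrases this by taking both incident faces, observing that their third triangles lie on different sides of $p$, and concluding that one of them must have a corner at $p$). However, your first step contains a genuine gap. The ``cyclic symmetry that simultaneously permutes the three compatible pairs'' does not exist: the family of lower-right triangles is preserved only by translations and independent positive scalings of the two coordinate axes, and all of these fix the three corner types; the affine map that cyclically permutes the corners of one such triangle sends the other triangles of the family to triangles that are not RT-shapes. Consequently the three possible coincidences -- (left,\,right), (top,\,left), (top,\,right) -- are genuinely inequivalent local configurations, and the facts you use are specific to the (left,\,right) case. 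In the (top,\,left) case, say $\topcorner{}{v}=\leftcorner{}{u}=p$, the relevant gap at $p$ is bounded by the two \emph{diagonal} sides $\dside{}{u}$ and $\dside{}{v}$, which need not be collinear (their slopes are independent), so there is no straight angle and no ``degenerate hole'': the picture your collapse argument is built on is simply absent. What does survive, uniformly, is the sharper form of your final observation: in each of the three cases the relevant gap is bounded by two sides of the \emph{same} type (two horizontals, two verticals, or two diagonals) whose intersection is exactly $\{p\}$, so both contacts of the third triangle with $\Delta(u)$ and $\Delta(v)$ would have to use that triangle's unique corner of the compatible type, forcing that single corner to lie on both sides, i.e.\ at $p$. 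But this has to be carried out in each of the three cases; the symmetry invoked to avoid doing so is not available.

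A second, smaller gap: even in the (left,\,right) case you only consider contacts in which a corner of $\Delta(w)$ touches a side of $\Delta(u)$ or $\Delta(v)$. The contact realizing $\{u,w\}$ or $\{v,w\}$ could a priori be of the opposite kind, e.g.\ $\rightcorner{}{u}$ lying on $\dside{}{w}$, or $\leftcorner{}{v}$ lying on $\vside{}{w}$; these contact points also lie on the line $L$, so neither your ``collinear corners'' step nor the top-corner counting excludes them. They can be ruled out by short arguments (e.g.\ if $\leftcorner{}{v}$ lies on $\vside{}{w}$, then $\Delta(w)$ lies entirely to the left of $\leftcorner{}{v}$ and cannot also touch $\Delta(u)$), but the exclusion must be stated. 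Finally, treating the face as a ``triangular hole'' whose $180^\circ$ angle forces its three corners to be collinear is loose reasoning -- the hole is a pseudo-triangle whose sides are polygonal arcs of the triangle boundaries, not segments -- although this step becomes unnecessary once the one-corner argument above is applied directly.
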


By \cref{obs:threeInAcorner}, in a degenerate RT-representation there exist three vertices $u$, $v$, and $w$ such that $\topcorner{}{u}$, $\leftcorner{}{v}$, $\rightcorner{}{w}$ lie on a point, see \cref{FIG:commoncornercoloring}. 
For each of the three edges, a choice of coloring and orientation corresponds to deciding which of the two triangles participates to the touching with its corner and which triangle with an extremal point of one of its sides. 
This yields two options as indicated in \cref{FIG:commoncornercoloring}, both resulting in a Schnyder wood. Note that the face $f=\left<u,v,w\right>$ is cyclic in both these Schnyder woods, and each of them can be obtained from the other by flipping $f$.
\begin{figure}[t]
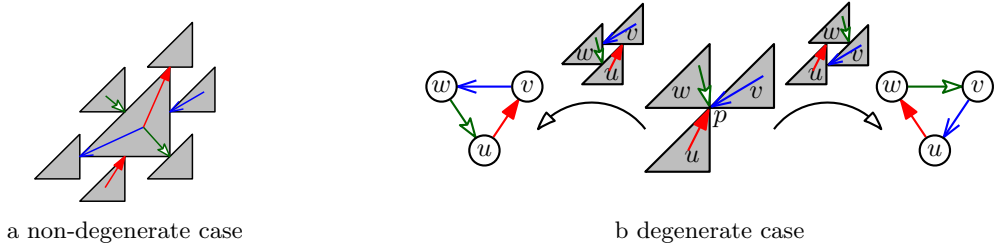

\centering
\begin{subfigure}[b]{.3\textwidth}\centering
\includegraphics[page=2, scale=.7]{figures}
    \subcaption{non-degenerate case}\label{fig:yieldsWood}
\end{subfigure}\hfil
\begin{subfigure}[b]{.6\textwidth}\centering
    \includegraphics[page=12]{figures}\linebreak\vfill
    \subcaption{degenerate case}\label{FIG:commoncornercoloring}
\end{subfigure}    
\caption{From an RT-representation to a Schnyder wood.}
\end{figure}
Summarizing, we get the following.
\begin{observation}\label{obs:lattice-dimension}
  Given an RT-representation $R$ of a plane triangulation $G$, let $P$
  be the set of points where three triangles meet. Then, $R$
  corresponds to a set $\mathcal{T}_R$ of $2^{|P|}$ different Schnyder woods on $G$, the points
  of $P$ correspond to $|P|$ edge-disjoint oriented triangles, and the
  Schnyder woods in $\mathcal T_R$ differ in flipping some of them.
\end{observation}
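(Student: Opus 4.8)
The plan is to reduce the entire statement to the local analysis already carried out for a single degenerate point, and then argue that distinct degeneracy points act independently. First I would fix the RT-representation $R$ and let $P = \{p_1, \dots, p_k\}$ be the set of points where three triangles meet, where $k = |P|$. By \cref{obs:threeInAcorner}, each point $p_i \in P$ is the common location of exactly three corners, namely $\topcorner{}{u_i}$, $\leftcorner{}{v_i}$, and $\rightcorner{}{w_i}$ for some triple of vertices $u_i, v_i, w_i$; the preceding discussion (the degenerate case of \cref{FIG:commoncornercoloring}) establishes that the three edges of the facial triangle $f_i = \langle u_i, v_i, w_i \rangle$ admit exactly two consistent colorings/orientations, each yielding a Schnyder wood, and that these two woods differ precisely by a flip of $f_i$. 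So each point of $P$ is a genuine binary choice corresponding to an oriented facial triangle.

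The key structural claim I would then establish is edge-disjointness of the triangles $f_1, \dots, f_k$. Each $f_i$ is a facial triangle of $G$, and each of its three edges is the edge on which one of the three coincident corners is ``resolved.'' Two distinct degeneracy points $p_i \neq p_j$ live at distinct geometric locations in the plane, so the corner-coincidences defining them involve distinct corners; since in any RT-representation an edge $\{x,y\}$ corresponds to a single corner-touches-side contact (and thus to at most one such resolution), no edge can be shared by $f_i$ and $f_j$. Given edge-disjointness, the choices at the different points do not interfere: fixing a coloring/orientation at $p_i$ only constrains the three edges of $f_i$, which are untouched by the resolution at any $p_j$. All non-degenerate contacts of $R$ are colored and oriented unambiguously by the rules of \cref{se:RT2wood}, so every combination of the $k$ independent binary choices extends to a unique consistent coloring/orientation of all of $E$.

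Finally I would verify that every such combination is in fact a valid Schnyder wood. This follows because each individual choice is locally valid (it satisfies the vertex-conditions of \cref{FIG:schnyderWood} at $u_i, v_i, w_i$, by the single-point analysis) while all vertices not incident to any $f_i$ inherit exactly the assignment of the non-degenerate rules, which yields a Schnyder wood by the Lemma of de Fraysseix et al. Counting the independent binary choices gives $2^k = 2^{|P|}$ distinct Schnyder woods in $\mathcal{T}_R$. That any two of them differ by flipping a subset of the $f_i$ is immediate: flipping $f_i$ toggles the choice at $p_i$ while leaving the resolutions at all other points fixed (again by edge-disjointness), so the flips commute and generate exactly the $2^{|P|}$ element set.

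I expect the main obstacle to be the edge-disjointness argument. The two-option analysis at a single point is already handed to us by the preamble, and the independence and counting are then formal; but to make ``distinct points yield distinct edges'' rigorous one must rule out the possibility that a single edge of $G$ participates in two different degenerate contacts, or that two facial triangles $f_i, f_j$ share an edge while their degeneracy points differ. The cleanest route is to observe that the edge resolved at $p_i$ is determined by the compatible-pair convention fixed at the start of \cref{SEC:RT}, which assigns each coinciding-corner contact a single edge, so the map from resolved edges to degeneracy points is well-defined and the three edges of each $f_i$ are pairwise disjoint across different~$i$.
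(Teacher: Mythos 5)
Your proposal is correct and follows essentially the same route as the paper, which states this as a summarizing observation whose justification is exactly the preceding discussion in \cref{se:RT2wood}: the two-way local resolution at each three-corner point (\cref{FIG:commoncornercoloring}), the fact that each resolution yields a Schnyder wood differing by a flip of the facial cycle, and the (implicit) independence of distinct degeneracy points. Your added argument for edge-disjointness --- each edge has a unique contact point, and all three edges of $f_i$ have their contact at $p_i$ --- is the right way to make precise what the paper leaves tacit.
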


\subsection{From Schnyder Woods to RT-Representations}
\label{SEC:wood2RT}

Assume now that we are given a Schnyder wood $T$ of a plane triangulation $G=(V,E)$. We describe a technique for constructing an RT-representation of $G$ corresponding to $T$ in which the y-coordinate of the horizontal side of each triangle is prescribed by a function $\tau : V \rightarrow \mathbb{R}$ satisfying some constraints; observe that in the non-degenerate case
in~\cite{fraysseix/mendez/rosenstiehl:94} $\tau$ is a topological labeling of $\DAG_r(T)$, i.e., a canonical ordering of $G$.

We call $\tau: V \longrightarrow \mathds R$
an \emph{Admissible Degenerate Topological} labeling of the graph $\DAG_r(T)$, for short \emph{ADT-labeling}, if for each directed edge $(u,v)$ of $\DAG_r(T)$, we have
\begin{inparaenum}
  \item \label{cond:tau-1} $\tau(u) \leq \tau(v)$ and
  \item \label{cond:tau-2} $\tau(u)=\tau(v)$ only if
  \begin{inparaenum}
    \item $(u,v)$ is green and belongs to a clockwise oriented facial cycle, or
    \item $(u,v)$ is blue and belongs to a counter-clockwise oriented facial cycle, and
          \end{inparaenum}
  \item \label{cond:tau-3} if $\tau(u_b) = \tau(u) = \tau(u_g)$ for a vertex $u$, and $u_1$ and $u_2$ are vertices such that $\left<u,u_g,u_1\right>$ is a clockwise facial cycle and  $\left<u,u_b,u_2\right>$ is a counter-clockwise facial cycle, then $u_1 \neq u_2$.
      \end{inparaenum}
      
\begin{lemma}\label{lem:from-rt-to-tau}
Let $R$ be an RT-representation of a plane triangulation $G=(V,E)$, let $T$ be a Schnyder wood corresponding to $R$, and let $\tau(v)$, $v \in V$, be the y-coordinate of $\hside{}{v}$. Then, $\tau$ is an ADT-labeling of $\DAG_r(T)$.
\end{lemma}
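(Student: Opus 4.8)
The plan is to translate every directed edge of $\DAG_r(T)$ into a geometric incidence between two triangles of $R$ and to read the three defining conditions of an ADT-labeling directly off this incidence, using throughout that every triangle of an RT-representation has strictly positive height and width. Recall from \cref{se:RT2wood} that, since $\DAG_r(T)$ orients red edges as in $T$ and reverses the blue and green ones, a directed edge $(u,v)$ of $\DAG_r(T)$ arises in exactly one of three ways: as a red edge $(u,v)$ of $T$, meaning \topcorner{}{u} lies on \hside{}{v}; as a blue edge $(v,u)$ of $T$, meaning \leftcorner{}{v} lies on \vside{}{u}; or as a green edge $(v,u)$ of $T$, meaning \rightcorner{}{v} lies on \dside{}{u}. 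Here $\tau$ records the $y$-coordinate of the horizontal (bottom) side of each triangle.

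First I would establish \cref{cond:tau-1} together with a precise description of when equality holds. For a red edge, $\tau(v)$ is the height of \topcorner{}{u}, which lies strictly above \hside{}{u}, so $\tau(v) > \tau(u)$ strictly. For a blue edge, \leftcorner{}{v} lies on the vertical segment \vside{}{u}, whose lowest point \rightcorner{}{u} has height $\tau(u)$, so $\tau(u) \le \tau(v)$. For a green edge, \rightcorner{}{v} lies on the diagonal \dside{}{u}, whose lowest point \leftcorner{}{u} likewise has height $\tau(u)$, so $\tau(u) \le \tau(v)$. In the latter two cases equality forces the touching corner to sit at that lowest endpoint, i.e.\ \leftcorner{}{v} coincides with \rightcorner{}{u} (blue) or \rightcorner{}{v} coincides with \leftcorner{}{u} (green); in particular a red edge never realizes equality.

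The main obstacle is \cref{cond:tau-2}. By the previous step an equality produces two coinciding corners, so --- provided $\{u,v\}$ is not an outer edge --- \cref{obs:threeInAcorner} supplies a third triangle with a corner at the common point $p$, and by the discussion around \cref{FIG:commoncornercoloring} the triple forms a collapsed, oriented facial triangle $f=\de{a,b,c}$ with \topcorner{}{a}, \leftcorner{}{b} and \rightcorner{}{c} all at $p$; the equality edge is the edge $\{b,c\}$ joining the left- and right-corner triangles. It remains to pin down the plane orientation of $f$. Representative interior points of $\Delta(a)$, $\Delta(b)$, $\Delta(c)$ lie, respectively, below $p$, to its upper right, and to its upper left, so the cyclic order of $a,b,c$ around $p$ is counter-clockwise precisely for the colouring in which $a\to b$ is red, $b\to c$ is blue and $c\to a$ is green, and clockwise for the flipped colouring in which $a\to c$ is red, $c\to b$ is green and $b\to a$ is blue. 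Matching this with the equality descriptions above --- \leftcorner{}{b} coinciding with \rightcorner{}{c} is the blue edge $b\to c$ of the counter-clockwise face, whereas \rightcorner{}{c} coinciding with \leftcorner{}{b} is the green edge $c\to b$ of the clockwise face --- yields exactly \cref{cond:tau-2}. This is the delicate part, as it hinges on correctly reading the combinatorial orientation of the collapsed face off the local arrangement of the three triangles.

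Finally, for \cref{cond:tau-3}, suppose $\tau(u_b)=\tau(u)=\tau(u_g)$. By the equality analysis, \leftcorner{}{u} coincides with \rightcorner{}{u_b} at a point $p_1$ and \rightcorner{}{u} coincides with \leftcorner{}{u_g} at a point $p_2$. Both $p_1$ and $p_2$ are endpoints of the positive-length side \hside{}{u}, so $p_1 \neq p_2$. \cref{obs:threeInAcorner} yields a top-corner triangle at each of them, namely the third vertex $u_2$ of the counter-clockwise face $\de{u,u_b,u_2}$ at $p_1$ and the third vertex $u_1$ of the clockwise face $\de{u,u_g,u_1}$ at $p_2$. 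If $u_1=u_2$, then a single triangle would have its unique top corner at both $p_1$ and $p_2$, a contradiction; hence $u_1\neq u_2$. The three outer edges, to which \cref{obs:threeInAcorner} does not apply, are verified directly against the fixed outer-triangle configuration of $R$. Together the three conditions show that $\tau$ is an ADT-labeling of $\DAG_r(T)$.
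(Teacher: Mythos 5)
Your proof is correct and takes essentially the same route as the paper's: \cref{cond:tau-1} from the compatible corner--side pairs, \cref{cond:tau-2} from \cref{obs:threeInAcorner} together with the orientation of the collapsed facial triangle, and \cref{cond:tau-3} from the two distinct top-corner triangles at the two endpoints of the positive-length horizontal side. The extra details you supply (the explicit geometric derivation of clockwise versus counter-clockwise for the degenerate face, and the check for outer edges) only make explicit what the paper leaves implicit.
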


\begin{proof}
Let $(u,v)$ be a directed edge of $\DAG_r(T)$. By the definition of $T$, we get immediately that $\tau(u) \leq \tau(v)$ independently of whether $(u,v)$ is red, green, or blue. In fact, if $(u,v)$ is red, then it is oriented from $u$ to $v$ in $T_r$. Thus, the compatible pair corresponding to such an edge in $R$ is $(\topcorner{}{u},\hside{}{v})$. Hence, $\hside{}{u}$ lies strictly below $\hside{}{v}$.
If $(u,v)$ is green (resp., blue), then it is oriented from $v$ to $u$ in $T$. Thus, the compatible pair corresponding to such an edge in $R$ is $(\rightcorner{}{v},\dside{}{u})$ (resp., $(\leftcorner{}{v},\vside{}{u})$).
Hence, $\hside{}{u}$ does not lie above $\hside{}{v}$.

Assume that $\tau(u) = \tau(v)$, which implies that $(u,v)$ is not red, as observed above. Suppose that $(v,u)$ is a green edge. Then, $\rightcorner{}{v}$ and $\leftcorner{}{u}$ coincide.
By \cref{obs:threeInAcorner}, there exists a vertex $z$ such that $\topcorner{}{z}$ coincides with $\rightcorner{}{v}$ and $\leftcorner{}{u}$. Thus, $\left<v,u,z\right>$ is a clockwise oriented facial cycle. Similarly, when $(v,u)$ is a blue edge, there is a vertex $z$ such that $\leftcorner{}{v}$, $\rightcorner{}{u}$, and $\topcorner{}{z}$ coincide. Therefore, $\left<v,u,z\right>$ is a counter-clockwise oriented facial cycle.

Finally, if $\tau(u_b)=\tau(u)=\tau(u_g)$ and $u_1$ and $u_2$ are the vertices such that $\left<u,u_g,u_1\right>$ is a clockwise facial cycle and  $\left<u,u_b,u_2\right>$ is a counter-clockwise facial cycle, then $\topcorner{}{u_1}$ touches $\rightcorner{}{u}$ and $\topcorner{}{u_2}$ touches $\leftcorner{}{u}$. Thus, $u_1 \neq u_2$.
\end{proof}

\begin{restatable}{lemma}{fromTandTautoR}\label{lem:from-T-and-Tau-to-R}
Let $T$ be a Schnyder wood of an $n$-vertex plane triangulation $G$, let $\tau$ be an ADT-labeling of $\DAG_r(T)$, and let $R_o = \Delta(X_r) \cup \Delta(X_g) \cup \Delta(X_b)$ be an RT-representation of the outer face of $G$ such that $\hside{}{X_i}$ has y-coordinate $\tau(X_i)$, with $i\in \{r, g, b\}$. Then, there exists a unique RT-representation $\RT(T, \tau, R_o)$ of $G$ corresponding to $T$ in which $\hside{}{v}$ has y-coordinate $\tau(v)$, for each vertex $v$ of $G$, and in which the outer face is drawn as in $R_o$.
\end{restatable}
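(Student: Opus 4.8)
The plan is to prove existence and uniqueness together by showing that $T$, $\tau$, and $R_o$ \emph{force} every coordinate of every triangle. First observe that all $y$-coordinates are already determined: the horizontal side of $\Delta(v)$ lies at height $\tau(v)$ by hypothesis, and the unique red out-edge $(v,v_r)$ of an inner vertex $v$ forces \topcorner{}{v} onto \hside{}{v_r}, so the top corner has height $\tau(v_r)$. Since ADT-condition~\ref{cond:tau-2} permits a tie $\tau(u)=\tau(w)$ only across a green or a blue edge, no red edge can be tight, whence $\tau(v)<\tau(v_r)$ and $\Delta(v)$ has positive height. Thus the only unknowns are, for each $v$, the abscissa $L(v)$ of \leftcorner{}{v} and the common abscissa $R(v)$ of \rightcorner{}{v} and \topcorner{}{v}.

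These are pinned down by the remaining two out-edges of $v$. The blue out-edge $(v,v_b)$ forces \leftcorner{}{v} onto \vside{}{v_b}, i.e.\ $L(v)=R(v_b)$; the green out-edge $(v,v_g)$ forces \rightcorner{}{v} onto \dside{}{v_g}, so $R(v)$ is the abscissa of the unique point of the segment \dside{}{v_g} at height $\tau(v)$. By Property~\ref{ITEM:SchnyderDAG}, $\DAG_r(T)$ is acyclic and hence admits a topological order; reversing the blue and green classes turns the out-edges $(v,v_b)$ and $(v,v_g)$ into the edges $v_b\to v$ and $v_g\to v$ of $\DAG_r(T)$, so both $v_b$ and $v_g$ precede $v$ in this order. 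Initializing the outer triangles from $R_o$ and processing the inner vertices in topological order, the values $L(v)=R(v_b)$ and $R(v)$ (read off the already-placed side \dside{}{v_g}) are therefore well-defined and uniquely determined when $v$ is treated. Applying the same forcing argument to an arbitrary RT-representation that meets the hypotheses yields, by induction along the order, that its coordinates coincide with the constructed ones; this proves uniqueness.

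It remains to show that the forced placement is a genuine RT-representation corresponding to $T$. I would argue by induction along the topological order, maintaining the invariant that the triangles placed so far constitute a valid (possibly degenerate) partial representation whose exposed upper boundary is a staircase-like contour made of the vertical sides, diagonal sides, and still-uncapped top corners of the placed triangles, in agreement with the embedding of $G$. The inductive step has to verify that $L(v)<R(v)$ (positive width), that \leftcorner{}{v} and \rightcorner{}{v} fall inside \vside{}{v_b} and \dside{}{v_g} and not beyond their endpoints, that \hside{}{v} caps precisely the top corners of the red in-neighbours of $v$, and that $\Delta(v)$ does not meet any previously placed triangle in its interior. When $\tau$ is strictly increasing along every edge this is exactly the classical wood-to-representation construction of de Fraysseix et~al.~\cite{fraysseix/mendez/rosenstiehl:94}, so the genuinely new work concerns the degeneracies, where a forced corner lands on an \emph{endpoint} of its target side. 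Such a coincidence occurs exactly when $\tau$ is constant across the green or blue edge involved, and ADT-condition~\ref{cond:tau-2} guarantees this happens only for a green edge of a clockwise facial cycle or a blue edge of a counter-clockwise facial cycle, that is, precisely in the three-corner meetings of \cref{obs:threeInAcorner}, which are admissible. Finally, ADT-condition~\ref{cond:tau-3} rules out the last dangerous configuration, in which the green and blue degeneracies at a single vertex $u$ would point to the \emph{same} third triangle and collapse $\Delta(u)$ to zero width; imposing $u_1\neq u_2$ keeps $L(u)<R(u)$.

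The main obstacle is this geometric step: setting up the contour invariant and proving that the forced coordinates never make two triangles overlap nor a corner overshoot its target side, while simultaneously confirming that every degeneracy thereby produced is of the benign, flip-able kind sanctioned by Conditions~\ref{cond:tau-2}--\ref{cond:tau-3}. Everything else is bookkeeping dictated by the three out-edge identities.
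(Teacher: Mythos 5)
Your proposal sets up exactly the same construction as the paper's proof: place the triangles along a topological order of $\DAG_r(T)$, with every $y$-coordinate forced by $\tau$ (horizontal side at $\tau(v)$, top corner at $\tau(v_r)$) and the $x$-coordinates forced by the contacts $(\leftcorner{}{v},\vside{}{v_b})$ and $(\rightcorner{}{v},\dside{}{v_g})$; your uniqueness argument and the positive-height observation are correct and coincide with the paper's. The problem is that the heart of the existence proof is announced rather than carried out --- you yourself end by calling it ``the main obstacle.'' Concretely, already the \emph{well-definedness} of your forcing map depends on it: $R(v)$ is defined as the abscissa of the point of \dside{}{v_g} at height $\tau(v)$, which exists only if $\tau(v)\le\ytop(v_g)$, and likewise $L(v)=R(v_b)$ yields a genuine contact on \vside{}{v_b} only if $\tau(v)\le\ytop(v_b)$. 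Since $\ytop(v_b)=\tau(r)$, where $r$ is the red out-neighbor of $v_b$ and is in general unrelated to $v$ in $\DAG_r(T)$, none of \cref{cond:tau-1,cond:tau-2,cond:tau-3} gives this directly, and appealing to the classical construction of de Fraysseix et al.\ does not cover arbitrary real-valued ADT-labelings. This is precisely where the paper does real work: it shows that the neighbors $w_1=v,\dots,w_k=r$ of $v_b$, taken in counter-clockwise order around $v_b$, are joined by edges that are either red leaving $w_j$ or green entering $w_j$ (forced by the cyclic color pattern of \cref{FIG:schnyderWood}), so that $w_1,\dots,w_k$ is a directed path in $\DAG_r(T)$ and hence $\tau(v)\le\tau(r)$; symmetrically for $v_g$.

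The second missing piece is the verification that $\Delta(v)$ meets previously placed triangles only in the prescribed contacts. You state the staircase-contour invariant but neither prove nor use it; the paper maintains the invariant that the top corners along the outer face of the partial representation have increasing $x$-coordinates, and deduces that every triangle crossing the line $y=\tau(v)$ strictly between $\Delta(v_b)$ and $\Delta(v_g)$ is a red in-neighbor $u$ of $v$ with $\ytop(u)=\tau(v)$, so that \hside{}{v} caps exactly those top corners and touches nothing else. Your treatment of the degeneracies is also thinner than required: besides the case $\tau(v_b)=\tau(v)=\tau(v_g)$, which you correctly dispatch with \cref{cond:tau-3}, positive width in the cases $\tau(v_b)<\tau(v)=\tau(v_g)$ and $\tau(v_b)=\tau(v)>\tau(v_g)$ needs \cref{cond:tau-2} to produce a third triangle $\Delta(u)$ whose positive-length horizontal side sits at height $\tau(v)$ between $\Delta(v_b)$ and $\Delta(v_g)$ --- an argument, not just the remark that the degeneracy is ``benign.'' In short: right scaffolding, correct uniqueness, but the load-bearing lemmas of the existence half are missing.
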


\paragraph{Outline of the proof.}
\begin{figure}[t]
%\hspace{-4mm}
\begin{subfigure}[b]{.28\textwidth}\centering
\includegraphics[page=10,scale=.76]{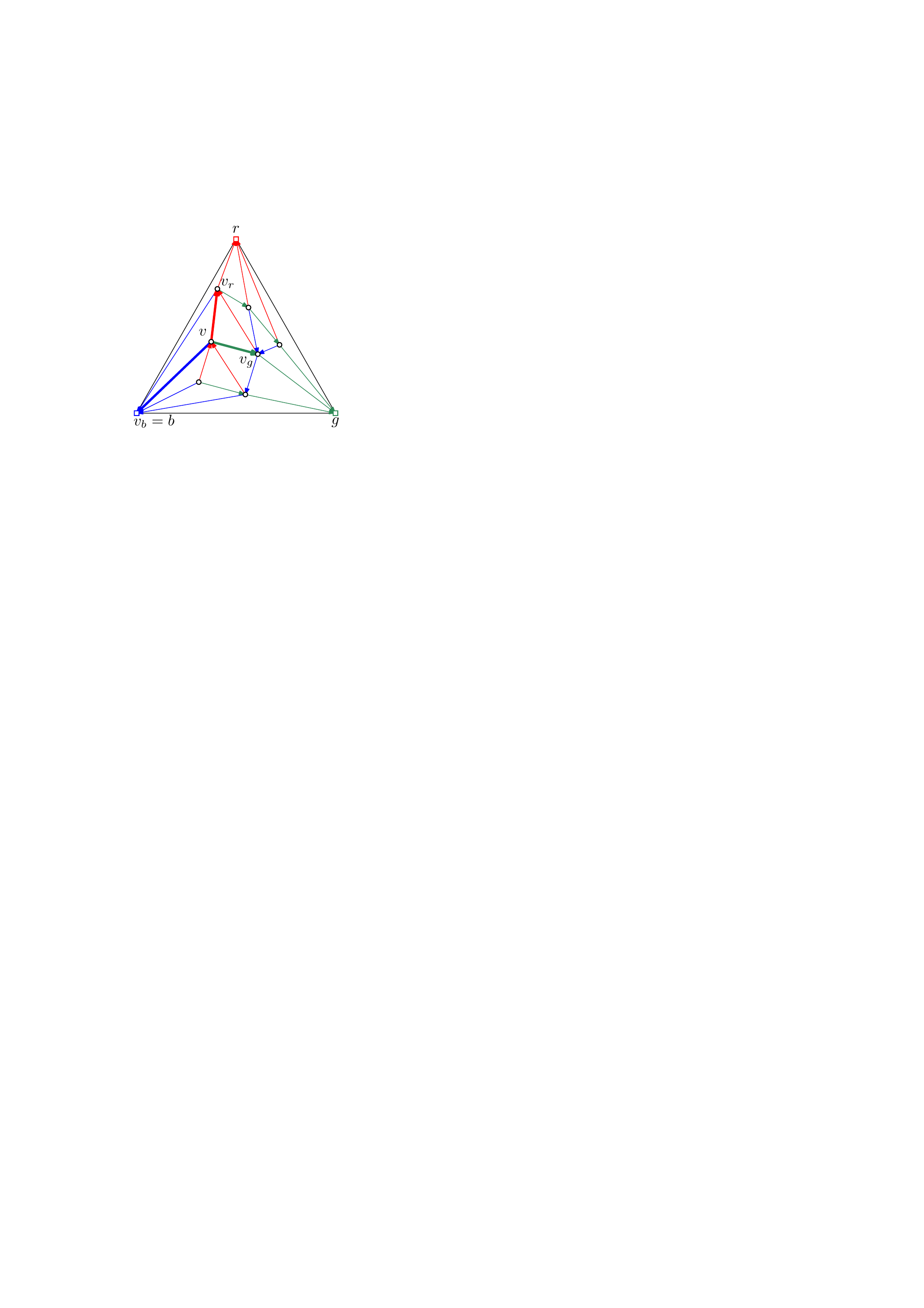}
\subcaption{Schnyder wood of $G$\label{FIG:schnyder-wood-trees}}
\end{subfigure}
%\hspace{-5mm}
\begin{subfigure}[b]{.28\textwidth}\centering
\includegraphics[page=13,scale=.76]{schnyder-wood}
\subcaption{Labeling $\tau$ of $G$\label{FIG:schnyder-wood-dag}}
\end{subfigure}
%\hspace{-2mm}
\begin{subfigure}[b]{.42\textwidth}\centering
\includegraphics[page=14,scale=.76]{schnyder-wood}
\subcaption{$\RT(T, \tau, R_o)$\label{FIG:schnyder-wood-to-representation}}
\end{subfigure}
\caption{(a) A Schnyder wood $T$ of a plane triangulation $G$; the edges connecting $v$ with vertices $v_r$, $v_g$, and $v_b$ are dashed. (b) Graph $\DAG_r(T)$ with ADT-labeling $\tau$. (c) An RT-representation of $G$ constructed from $T$, $\tau$, and an RT-representation $R_o$ of the outer face.}
\label{FIG:schnyder-wood}
\end{figure}
We process the vertices of $G$ according to a topological ordering $\tau'$ of $\DAG_r(T)$. In the first two steps, we draw triangles $\Delta(X_b)$ and $\Delta(X_g)$ as in $R_o$; see  \cref{FIG:schnyder-wood-to-representation}. At each of the following steps, we consider a vertex $v$, with $2 < \tau'(v)=i < n$. 

We draw $\Delta(v)$ with its horizontal side on $y=\tau(v)$ and with its top corner at $y=\tau(v_r)$, as follows. Since the blue edge $(v_b,v)$ and the green edge $(v_g,v)$ are entering $v$ in $\DAG_r(T)$, the triangles $\Delta(v_b)$ and $\Delta(v_g)$ have already been drawn. Also, by \cref{cond:tau-1} of ADT-labeling, we have that $\tau(v_b)\leq \tau(v)$ and $\tau(v_g) \leq \tau(v)$. Further, it can be shown that $\topcorner{}{v_b}$ and $\topcorner{}{v_g}$ have y-coordinate larger than or equal to $\tau(v)$, and that if a triangle $\Delta(u)$ intersects the line $y=\tau(v)$ between $\Delta(v_b)$ and $\Delta(v_g)$, then $u$ is a neighbor of $v$ such that $v = u_r$. By construction, $\topcorner{}{u}$ has y-coordinate equal to $\tau(v)$. Thus, we draw the horizontal side of $\Delta(v)$ on $y=\tau(v)$ between $\Delta(v_b)$ and $\Delta(v_g)$. The conditions of ADT-labelings guarantee that $\hside{}{v}$ has positive length. If $i=n$, and hence $v = X_r$, we draw $\Delta(X_r)$ as in $R_o$. 
\qedhere
%\end{proof}

\section{Geometric Tools}\label{SEC:charRTmorph}

In this section, we provide geometric lemmata that will be exploited in the subsequent sections. 
We first show that the incidence of a point and a line segment is
maintained during a linear morph if the line segment is
moved in parallel (with a possible stretch, but keeping the
orientation) or the ratio with which the point cuts the segment is
maintained; see~\cref{FIG:morphing_segment_plus_point}.

\begin{lemma}
   \label{LEMMA:sidewithpoint}
   For $i=0,1$ let $p_i$, $q_i$ be two points in the plane and let
   $x_i \in \overline{p_iq_i}$.  For $0 < t < 1$, further let
   $p_t = (1-t) p_0 + t p_1$ and $q_t = (1-t) q_0 + t q_1$. Then,
   $x_t = (1-t) x_0 + t x_1 \in \overline{p_t q_t}$ if
   \begin{enumerate}
   \item $\overline{p_0 q_0}$ and $\overline{p_1 q_1}$ are parallel with 
the same direction, or
   \item $x_0$ cuts $\overline{p_0q_0}$ with the same ratio as $x_1$ 
cuts $\overline{p_1q_1}$
   \end{enumerate}
\end{lemma}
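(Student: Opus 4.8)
The plan is to treat the two sufficient conditions separately, since the second is essentially immediate while the first requires a short collinearity argument. For the second condition, suppose $x_0$ and $x_1$ cut their respective segments with the same ratio $\lambda \in [0,1]$, so that $x_0 = (1-\lambda)p_0 + \lambda q_0$ and $x_1 = (1-\lambda)p_1 + \lambda q_1$. First I would substitute these into the definition of $x_t$ and regroup the terms by $t$:
\begin{align*}
x_t &= (1-t)x_0 + t x_1\\
&= (1-\lambda)\bigl[(1-t)p_0 + t p_1\bigr] + \lambda\bigl[(1-t)q_0 + t q_1\bigr]\\
&= (1-\lambda)p_t + \lambda q_t.
\end{align*}
Since $\lambda \in [0,1]$, this exhibits $x_t$ as a convex combination of $p_t$ and $q_t$, so $x_t \in \overline{p_t q_t}$; in fact $x_t$ cuts $\overline{p_t q_t}$ with the same ratio $\lambda$, which is slightly stronger than what is asked.

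For the first condition I would fix a unit vector $u$ pointing in the common direction of the two same-oriented parallel segments and write $q_0 - p_0 = a_0 u$ and $q_1 - p_1 = a_1 u$ with $a_0, a_1 \ge 0$; the hypothesis that the directions \emph{agree} is exactly what forces both $a_0$ and $a_1$ to be nonnegative rather than of opposite sign. Writing $x_0 = p_0 + \lambda_0 (q_0 - p_0)$ and $x_1 = p_1 + \lambda_1(q_1 - p_1)$ with $\lambda_0, \lambda_1 \in [0,1]$, I would then compute, using linearity of the interpolation,
\begin{align*}
x_t - p_t &= (1-t)(x_0 - p_0) + t(x_1 - p_1) = \bigl[(1-t)\lambda_0 a_0 + t\lambda_1 a_1\bigr]\,u,\\
q_t - p_t &= (1-t)(q_0 - p_0) + t(q_1 - p_1) = \bigl[(1-t)a_0 + t a_1\bigr]\,u.
\end{align*}
The key observation is that both displacements are scalar multiples of the \emph{same} vector $u$, so $p_t$, $x_t$, and $q_t$ are collinear and it remains only to check that the ratio $\mu = \bigl[(1-t)\lambda_0 a_0 + t\lambda_1 a_1\bigr]/\bigl[(1-t)a_0 + t a_1\bigr]$ lies in $[0,1]$.

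The hard part, though still routine, is verifying this range, and it is where the orientation hypothesis really pays off. I would first observe that the denominator is strictly positive whenever at least one segment is nondegenerate, which I may assume: if $p_0 = q_0$ or $p_1 = q_1$, then the corresponding point coincides with its endpoints and cuts its (degenerate) segment with an arbitrary ratio, so the situation reduces to the second condition. Nonnegativity of $\mu$ then follows from $\lambda_0, \lambda_1 \ge 0$ together with $a_0, a_1 \ge 0$, while $\mu \le 1$ follows by comparing numerator and denominator term by term, using $(1-t)a_0(\lambda_0 - 1) \le 0$ and $t a_1(\lambda_1 - 1) \le 0$. Hence $\mu \in [0,1]$ and $x_t \in \overline{p_t q_t}$, as required. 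The one conceptual point to get right is that ``same direction'' is indispensable here: with opposite orientations one of $a_0, a_1$ would be negative, so the denominator could vanish or change sign and the collinear point $x_t$ could leave the segment $\overline{p_t q_t}$ during the morph.
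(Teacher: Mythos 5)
Your proof is correct. For condition (2) your computation coincides with the paper's one-line argument, but for condition (1) you take a genuinely different route. The paper distinguishes two subcases: for collinear segments it reduces to an ordering of reals on the $x$-axis (the inequalities $p_i \leq x_i \leq q_i$ are preserved by convex combination), and for parallel non-collinear segments it argues geometrically that the quadrilateral $\langle p_0,q_0,q_1,p_1\rangle$ is convex and that $x_t$ and $\overline{p_t q_t}$ both lie on the line $\ell_t$ at distance $td$ from the line through $\overline{p_0 q_0}$, so that $x_t$ lies in the intersection of the quadrilateral with $\ell_t$, which is $\overline{p_t q_t}$. You instead write all displacements as scalar multiples of one direction vector $u$, which merges the collinear and non-collinear subcases into a single algebraic computation: collinearity of $p_t$, $x_t$, $q_t$ is immediate, and membership in the segment becomes the explicit bound $\mu \in [0,1]$, verified term by term. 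Your version buys three things: no case split, a precise localization of where ``same direction'' is needed (namely $a_0, a_1 \geq 0$, so the denominator $(1-t)a_0 + t a_1$ stays positive), and an explicit treatment of degenerate segments, which the paper's proof silently excludes when it assumes $p_0 < q_0$ and deduces $p_1 < q_1$. The paper's version, in exchange, is shorter in the collinear case and conveys the geometric intuition---the interpolating segment sweeps a trapezoid and $x_t$ stays inside it---though it leaves to the reader the verification that the quadrilateral's intersection with $\ell_t$ is exactly $\overline{p_t q_t}$, a fact you establish explicitly via the ratio $\mu$.
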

\begin{proof}
   Assume first that $\overline{p_0q_0}$ and $\overline{p_1q_1}$ are
   parallel.  If $\overline{p_0q_0}$ and $\overline{p_1q_1}$ are
   collinear, we may assume that they are both contained in the x-axis,
   that $p_i,q_i$, $i=0,1$, are real numbers, and that $p_0 <
   q_0$. Since $\overline{p_0q_0}$ and $\overline{p_1q_1}$ have the
   same direction, this implies that $p_1 < q_1$. Since $x_i$, $i=0,1$,
   is a point in $\overline{p_iq_i}$, it follows that
   $p_i \leq x_i \leq q_i$. Hence, we get for $t\in [0,1]$ that
   \[
     \underbrace{(1-t)p_0+t p_1}_{p_t} \leq \underbrace{(1-t)x_0+t
       x_1}_{x_t} \leq \underbrace{(1-t)q_0+t q_1}_{q_t}.
   \]
   If $\overline{p_0 q_0}$ and $\overline{p_1 q_1}$ are parallel with
   the same direction but not collinear, then the polygon
   $\left<p_0,q_0,q_1,p_1\right>$ is convex. Thus, $\overline{x_0 x_1}$
   must intersect $\overline{p_t q_t}$, for any $t$.  Also,
   $\overline{p_t q_t}$ and $x_t$ both lie on the same line
   $\ell_t$. More precisely, let $d$ be the distance between the lines
   through segments $\overline{p_0 q_0}$ and $\overline{p_1 q_1}$.
   Then, $\ell_t$ is the line with distance $td$ from
   $\overline{p_0 q_0}$.

   Finally, if $x_0$ cuts $\overline{p_0q_0}$ with the same ratio
   $\lambda$ as $x_1$ cuts $\overline{p_1q_1}$, then $x_t = (1-t)((1-\lambda) p_0 + \lambda q_0) + t ((1-\lambda) p_1 + \lambda 
 q_1) = (1-\lambda)p_t + \lambda q_t \in \overline{p_t q_t}$.
\end{proof}

\begin{figure}[t]
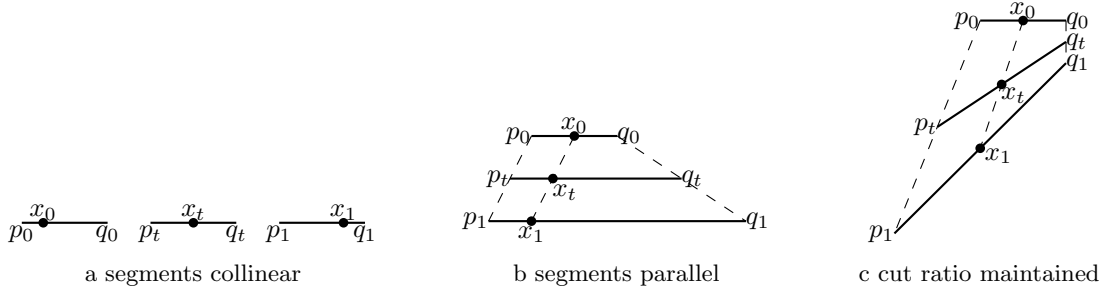

\centering
\begin{subfigure}[b]{.4\textwidth}\centering
\includegraphics[page=20]{figures}
\subcaption{\label{FIG:collinear}segments collinear}
\end{subfigure}%
\begin{subfigure}[b]{.3\textwidth}\centering
\includegraphics[page=18]{figures}
\subcaption{\label{FIG:parallel1}segments parallel}
\end{subfigure}%
\begin{subfigure}[b]{.3\textwidth}\centering
\includegraphics[page=21]{figures}
\subcaption{\label{FIG:ratio}cut ratio maintained}
\end{subfigure}
\caption{\label{FIG:morphing_segment_plus_point}Morphing a segment and a point.}
\end{figure}

\cref{LEMMA:sidewithpoint} implies the following sufficient criterion for a linear morph.

\begin{restatable}{lemma}{charRTmorph}\label{THEO:charRTmorph}
   Let $R_0$ and $R_1$ be two RT-representations of a triangulation $G$ corresponding to the same Schnyder wood such that the triangles of the outer face pairwise touch in their corners.
   The pair $\left<R_0,R_1\right>$ defines a linear morph if, for any two adjacent vertices $u$ and $v$ such that a corner $c_i(v)$ of $v$ touches a side $s_i(u)$ of $u$, where $c\in \{\topcornersymb, \leftcornersymb, \rightcornersymb \}$ and $s \in \{ \hsidesymb, \vsidesymb, \dsidesymb \}$, one of the following holds:\\
     \begin{inparaenum}
     \item \label{COND:inCHAR_parallel}
       $s_1(u)$ and $s_2(u)$ are parallel. 
     \item \label{COND:inCHAR_same_ratio}
       $c_1(v)$ cuts $s_1(u)$ with the same ratio as $c_2(v)$~cuts~$s_2(u)$. 
     \end{inparaenum} 
\end{restatable}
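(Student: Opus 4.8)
The plan is to prove that the straight-line interpolation $R_t$, $t\in[0,1]$, of the corner positions of $R_0$ and $R_1$ is a valid RT-representation of $G$ for every $t$; by definition this is exactly what it means for $\langle R_0,R_1\rangle$ to define a linear morph. I would split validity into three parts: (i) each $\Delta_t(v)$ is a non-degenerate right triangle of the correct (lower-right) type; (ii) every corner--side contact of $R_0$ --- equivalently of $R_1$, since both realize the same Schnyder wood --- is maintained throughout; and (iii) the interiors stay pairwise disjoint, so that $R_t$ induces exactly the faces of $G$.

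For (i), observe that \leftcorner{t}{v} and \rightcorner{t}{v} share a $y$-coordinate in $R_0$ and again in $R_1$, so by linearity they share a $y$-coordinate at every time, keeping \hside{t}{v} horizontal; the same argument applied to the common $x$-coordinate of \rightcorner{t}{v} and \topcorner{t}{v} keeps \vside{t}{v} vertical. Non-degeneracy is then automatic: the length of \hside{t}{v} is the convex combination $(1-t)\ell_0 + t\ell_1$ of its positive lengths $\ell_0,\ell_1$ in $R_0$ and $R_1$, hence positive for all $t\in[0,1]$, and similarly for \vside{t}{v}. Thus each $\Delta_t(v)$ is a genuine lower-right right triangle with its right angle at \rightcorner{t}{v}.

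For (ii), I would invoke \cref{LEMMA:sidewithpoint} for each contact, splitting on the type of the touched side, which is fixed by the compatible-pair structure. When a top corner touches a horizontal side, the host segment is horizontal in both $R_0$ and $R_1$ and, oriented from the left to the right corner, points in the $+x$-direction in both, so case~1 of \cref{LEMMA:sidewithpoint} applies; the left-corner-on-vertical-side contacts are identical with the $+y$-direction. The only sides whose slope may change are the diagonals, which are touched by right corners; for these the hypothesis provides either parallelism with a common direction (case~1) or a common cut ratio (case~2), and in either situation \cref{LEMMA:sidewithpoint} guarantees that the contact point remains on the moving side. The corner--corner contacts among the three outer triangles are the endpoint case (cut ratio $0$ or $1$, identical in $R_0$ and $R_1$), again covered by case~2. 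Hence every prescribed contact persists along the whole morph.

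The remaining and most delicate part is (iii). Here I would argue by continuity: $R_0$ is valid, and were validity to fail first at some $t^\star$, then either a triangle degenerates --- excluded by (i) --- or two interiors would begin to overlap at $t^\star$. Since the triangles are convex, a first overlap forces a corner of one triangle to reach a side (or corner) of a non-incident triangle, i.e.\ the appearance of a contact that is not an edge of $G$. But $G$ is a triangulation all of whose contacts are already accounted for and maintained by (ii), and the outer triangles remain in their fixed mutually-touching configuration, so no such new contact can arise without destroying an existing one or degenerating a triangle. Assembling the maintained per-face incidences inside the fixed outer boundary then shows that $R_{t^\star}$ still induces exactly the faces of $G$, a contradiction. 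I expect this disjointness/no-spurious-overlap bookkeeping to be the main obstacle, whereas steps (i) and (ii) follow directly from convexity and \cref{LEMMA:sidewithpoint}.
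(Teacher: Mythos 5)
Your steps (i) and (ii) are correct and coincide with the paper's own proof: it, too, first verifies that each interpolated triangle stays a lower-right half of an axis-parallel rectangle with positive area, and then applies \cref{LEMMA:sidewithpoint} to every corner--side contact (horizontal and vertical sides fall under the parallel case, diagonal sides under the two stated hypotheses, and the outer corner--corner contacts under the fixed-ratio case). The genuine gap is step (iii). Your justification --- ``no such new contact can arise without destroying an existing one or degenerating a triangle'', followed by ``assembling the maintained per-face incidences \dots shows that $R_{t^\star}$ still induces exactly the faces of $G$'' --- is an assertion of precisely the statement to be proved, not an argument for it. Nothing established in (i) and (ii) forbids, a priori, a triangle from drifting across the gap region of a face and touching a non-adjacent triangle while every prescribed contact persists; excluding exactly this scenario is the entire content of the disjointness claim, and it is the one place where the global structure of the Schnyder wood must enter. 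Your sketch never uses that structure beyond the contacts themselves, so the argument is circular at its crucial point.

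The paper resolves this pointwise in $t$, with no first-failure analysis: for non-adjacent $u$ and $w$ it invokes the property that there is a vertex $z$ (possibly $z=w$) and two unicolored paths, of distinct colors, from $u$ to $z$ and from $w$ to $z$. All contacts along a unicolored path are of a single compatible type, and since by step (ii) these contacts persist for every $t$, the triangles $\Delta_t(u)$ and $\Delta_t(w)$ are confined, at every time $t$, to two different quadrants determined by \rightcorner{t}{z}; those quadrants meet only in the corners of $\Delta_t(z)$, and a coincidence of corners of $\Delta_t(u)$ and $\Delta_t(w)$ there would force the same coincidence in $R_0$ and $R_1$, contradicting non-adjacency. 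If you wish to keep your continuity route, the hole can be patched, but by an argument you do not give: at the first failure time $t^\star$ all interiors are still pairwise disjoint, every intersection of two triangles is a single point (a shared segment would force overlapping interiors, given the fixed orientations of the three side types), so the contact system at $t^\star$ has a planar contact graph; that graph would contain the maximal planar graph $G$ plus an edge between two non-adjacent vertices, which is impossible. Even then you would still owe a separate argument that the induced faces and outer face cannot change while all contacts are maintained. As submitted, step (iii) is a genuine gap.
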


By \cref{obs:lattice-dimension}, an RT-representation $R$ of a plane triangulation $G$ corresponds to a set $\mathcal T_R$ of Schnyder woods that differ from each other by flipping a set of edge disjoint triangles. The \emph{topmost} vertex of $R$ is the vertex $v$ of $G$ maximizing the y-coordinate of $\hside{}{v}$.

\begin{restatable}{lemma}{ledifferentflips}\label{le:triangulations-linear-morph}
  Let $R_0$ and $R_1$ be two RT-representations of the same plane
  triangulation $G=(V,E)$ such that $\left<R_0,R_1\right>$ is a linear morph. 
  Then $\mathcal T_{R_0} \cap \mathcal T_{R_1} \neq \emptyset$.
\end{restatable}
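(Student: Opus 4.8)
The plan is to produce a single Schnyder wood that lies in both $\mathcal T_{R_0}$ and $\mathcal T_{R_1}$ by reading it off the contacts at a generic interior time of the morph. Write $R_t$ for the intermediate representation, in which every corner sits at $(1-t)$ times its position in $R_0$ plus $t$ times its position in $R_1$. The heart of the argument is the structural claim that \emph{during a linear morph the combinatorial type of every inner contact is constant on the open interval $(0,1)$; that is, no inner face changes orientation (no flip occurs) at an interior time, and any degeneracy present at an interior time is present for all $t$.}

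To prove this claim I would rule out interior flips. Suppose the contact of an inner edge changed type at some $t_1\in(0,1)$. By \cref{obs:threeInAcorner} the corner realizing the contact would have to reach an endpoint of the side it touches, producing at $t_1$ a point where three triangles $\Delta(u),\Delta(v),\Delta(w)$ meet, i.e.\ the inner face $\left<u,v,w\right>$ shrinks to a point. For $t$ past $t_1$ the corner would lie strictly beyond that endpoint, which either separates the two triangles of the edge or forces the region bounded by the three contacts of $\left<u,v,w\right>$ to reappear with the reverse orientation; both contradict that $R_t$ is an RT-representation inducing the fixed embedding of $G$. For red contacts this is concrete: while $\topcorner{}{u}$ lies on $\hside{}{v}$ their heights agree on a neighborhood of any interior time and, being affine in $t$, agree for all $t$, so validity is governed by the horizontal gap between $\topcorner{}{u}$ and the relevant endpoint $\leftcorner{}{v}$ of $\hside{}{v}$; this gap is affine in $t$ and hence is either identically zero (a persistent degeneracy) or changes sign exactly once, which can occur inside $(0,1)$ only through the forbidden separation or reversal. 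Blue contacts are handled symmetrically, with the roles of the two coordinates exchanged, and green contacts, where $\dside{}{v}$ may rotate, are handled by the orientation argument for the face $\left<u,v,w\right>$. Momentary degeneracies that do not change type are harmless and occur at only finitely many times.

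Granting the claim, pick any $t^*\in(0,1)$ at which $R_{t^*}$ exhibits only the persistent degeneracies (the finitely many other degeneracies sit at $t=0$, at $t=1$, or at isolated non-flip times, so a generic $t^*$ qualifies), and let $T$ be any Schnyder wood in $\mathcal T_{R_{t^*}}$. I claim $T\in\mathcal T_{R_0}\cap\mathcal T_{R_1}$. By \cref{obs:lattice-dimension} the degenerate inner faces of $R_0$ are edge-disjoint and the members of $\mathcal T_{R_0}$ arise by choosing, independently on each such face, one of its two flip states, while every other edge keeps one fixed color. I would match $T$ to one of these members edge by edge: an edge lying in no degeneracy of $R_0$ has a non-degenerate contact at $t=0$ whose type, by the structural claim, equals its constant type on $(0,1)$, so its color in $T$ is the fixed color in $\mathcal T_{R_0}$; an edge whose only degeneracy is at $t=0$ realizes on $(0,1)$ precisely one of the two interpretations of that triple point, and selecting that flip state reproduces the color of $T$; and an edge in a persistent degeneracy offers $R_0$ and $R_{t^*}$ the identical two-fold choice, so $T$'s choice is available. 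As the flip states are chosen independently, the member of $\mathcal T_{R_0}$ so assembled agrees with $T$ on every edge, whence $T\in\mathcal T_{R_0}$; the same argument with $t=1$ in place of $t=0$ gives $T\in\mathcal T_{R_1}$, and the lemma follows.

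I expect the main obstacle to be the structural claim, and within it the green case: because $\dside{}{v}$ can change slope during the morph, the quantity controlling a green contact is not affine, so the clean ``affine function changes sign at most once'' reasoning used for red and blue is unavailable, and one must instead argue through the orientation of the bounded inner face at the triple point together with the requirement that every $R_t$ realize the fixed embedding of $G$. The remaining ingredients---genericity of $t^*$ and the independence of the flip choices supplied by \cref{obs:lattice-dimension}---are routine bookkeeping.
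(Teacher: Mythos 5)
Your high-level route---prove that contact types are constant on $(0,1)$, then read a common Schnyder wood off a generic intermediate time---is genuinely different from the paper's proof, which fixes a single edge whose contact would have to ``go around a corner'' (times $j<k<\ell$ with corner-on-side, corner coincidence, then side-on-corner reversed) and refutes it with the intercept theorem applied to the straight-line, constant-speed trajectories of the two corners. The route is viable, but your proof of the structural claim fails exactly where you lean on it hardest. You assert that a flip at an interior time $t_1$ would force the face $\left<u,v,w\right>$ to ``reappear with the reverse orientation,'' and that this ``contradicts that $R_t$ is an RT-representation inducing the fixed embedding of $G$.'' That is false: the configuration on the far side of a facial flip is a perfectly valid RT-representation of the same plane triangulation with the same embedding; it merely corresponds to the flipped Schnyder wood $T_C$. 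Indeed, \cref{THEO:adjustCCC} realizes facial flips geometrically, and reparametrizing its two linear morphs as one continuous morph gives a family of valid RT-representations of $G$ in which a flip does occur at an interior time. Since your orientation argument uses only validity and the fixed embedding---properties that hold for that continuous morph too---it proves too much and cannot be sound. The true obstruction to flips at interior times of a \emph{linear} morph is metric (affineness of corner trajectories), not topological; and since you route the entire green case, as well as the ``reversal'' branch of the red and blue cases, through the embedding argument, those cases are unproven.

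The gap is fixable with the tool you already trust. Every corner coincidence between touching triangles is of one of three kinds, and each kind has at least one interpretation using an axis-parallel side: $(\topcornersymb,\hsidesymb)$ vs.\ $(\leftcornersymb,\vsidesymb)$ when a top and a left corner meet; $(\rightcornersymb,\dsidesymb)$ vs.\ $(\topcornersymb,\hsidesymb)$ when a right and a top corner meet; $(\rightcornersymb,\dsidesymb)$ vs.\ $(\leftcornersymb,\vsidesymb)$ when a right and a left corner meet. So at an alleged interior transition into or out of a green contact, apply the affine argument to the \emph{other} interpretation: if $\rightcorner{t}{u}$ is interior to $\dside{t}{w}$ for $t<t_1$ and $\topcorner{t}{w}$ lies on $\hside{t}{u}$ for $t>t_1$, then the difference of the y-coordinates of $\topcorner{t}{w}$ and $\rightcorner{t}{u}$ is affine in $t$, vanishes on an interval (hence identically), yet is strictly positive for $t<t_1$---a contradiction that never mentions the rotating diagonal. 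The same device closes the ``reversal'' branch for red and blue: the reversed contact pins to zero an affine coordinate difference that has just changed sign, so only separation remains, and separation violates adjacency. With this repair your structural claim holds, and the remaining bookkeeping (genericity of $t^*$, coherence of the per-triple-point choices, and \cref{obs:lattice-dimension}) goes through.
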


\begin{theorem}[Necessary Condition]\label{THEO:nec}
  If there is a piecewise linear morph between two RT-representations
  of a plane triangulation $G$, then the corresponding Schnyder woods
  can be obtained from each other by a sequence of facial flips.  In
  particular the topmost vertex is the same in both representations if
  $G$ has more than three vertices.
\end{theorem}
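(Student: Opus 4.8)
The plan is to split the piecewise linear morph into its linear pieces, extract from each piece a shared Schnyder wood via \cref{le:triangulations-linear-morph}, and then glue the resulting flip sequences together. Write the morph as $\langle R = Q_0, Q_1, \dots, Q_\ell = R'\rangle$, where $R$ and $R'$ are the two input RT-representations and each $\langle Q_{i-1}, Q_i\rangle$ is a linear morph. By \cref{le:triangulations-linear-morph}, for every $i$ there is a Schnyder wood $S_i \in \mathcal T_{Q_{i-1}} \cap \mathcal T_{Q_i}$. On the other hand, \cref{obs:lattice-dimension} states that the Schnyder woods inside a single $\mathcal T_{Q_i}$ differ only by flipping some of the edge-disjoint oriented triangles attached to its degeneracy points, and the discussion preceding \cref{obs:lattice-dimension} (via \cref{obs:threeInAcorner}) shows that each such triangle is a \emph{facial} cycle. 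Hence any two Schnyder woods in the same $\mathcal T_{Q_i}$ are related by a sequence of facial flips.

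For the first statement I would fix any $T \in \mathcal T_{R}$ and any $T' \in \mathcal T_{R'}$ and build a single facial flip sequence by concatenation: $T$ and $S_1$ both lie in $\mathcal T_{Q_0}$, the woods $S_1$ and $S_2$ both lie in $\mathcal T_{Q_1}$, and so on, with $S_\ell$ and $T'$ both lying in $\mathcal T_{Q_\ell}$. Each consecutive pair is joined by facial flips, and since flipping a cycle is an involution, ``related by facial flips'' is a symmetric (hence transitive) relation; concatenating the blocks yields a facial flip sequence from $T$ to $T'$.

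For the ``in particular'' part I would first show that, when $G$ has more than three vertices, the topmost vertex of \emph{any} RT-representation $R$ is precisely the red outer vertex $X_r$ of any $T \in \mathcal T_R$. Setting $\tau(v)$ to the y-coordinate of $\hside{}{v}$, \cref{lem:from-rt-to-tau} gives that $\tau$ is an ADT-labeling of $\DAG_r(T)$, and the argument in its proof shows that $\tau$ increases strictly along every red edge. As the red edges form a tree directed toward $X_r$ and every inner vertex has a red path to $X_r$, we get $\tau(X_r) > \tau(v)$ for all inner $v$. To rule out the other two outer vertices, note that for $n > 3$ the green (resp.\ blue) tree contains at least one edge, so $X_g$ (resp.\ $X_b$) has an inner neighbor $w$ reached by a reversed edge $(X_g,w)$ (resp.\ $(X_b,w)$) of $\DAG_r(T)$; the ADT inequality then gives $\tau(X_g) \le \tau(w) < \tau(X_r)$ (resp.\ $\tau(X_b) \le \tau(w) < \tau(X_r)$). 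Thus $X_r$ is the unique maximizer of $\tau$, i.e.\ the topmost vertex.

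Finally I would observe that no facial flip in the sequence from $T$ to $T'$ can change $X_r$: a flippable facial triangle is a directed cycle, but each outer vertex has out-degree $0$ and therefore cannot lie on any directed cycle, so every flip acts on an interior triangle disjoint from $\{X_r,X_g,X_b\}$ and leaves the three outer vertices and their colors fixed. Hence $X_r$ is constant along the whole sequence, and the topmost vertices of $R$ and $R'$ both equal this common $X_r$. The only genuinely delicate step is identifying the topmost vertex with $X_r$ — in particular excluding $X_g$ and $X_b$ — whereas the reduction to the per-step \cref{le:triangulations-linear-morph} and the invariance of $X_r$ under facial flips are routine once the facial nature of the degeneracy triangles is established.
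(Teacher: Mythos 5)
Your proof is correct and follows essentially the same route as the paper's: decompose the piecewise linear morph into its linear pieces, extract a shared Schnyder wood for each consecutive pair via \cref{le:triangulations-linear-morph}, note via \cref{obs:lattice-dimension} (and the degeneracy discussion) that woods corresponding to the same representation differ by edge-disjoint \emph{facial} flips, and concatenate. The only difference is that you additionally spell out the ``in particular'' claim (topmost vertex $=X_r$ via the ADT-labeling of $\DAG_r(T)$, and $X_r$ preserved by facial flips since oriented triangles avoid the out-degree-$0$ outer vertices), which the paper's proof leaves implicit; your argument there is correct, up to the harmless slip that symmetry of the flip relation does not ``hence'' give transitivity---transitivity comes from the concatenation of flip sequences, which you also invoke.
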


\begin{proof}
  Let $\left<R_1,\dots,R_\ell\right>$ be a sequence of linear
  morphs. \cref{le:triangulations-linear-morph} implies
  $\mathcal T_{R_i} \cap \mathcal T_{R_{i+1}} \neq \emptyset$ for
  $i=1,\dots,\ell-1$. Let
  $T_i \in \mathcal T_{R_i} \cap \mathcal T_{R_{i+1}}$ for
  $i=1,\dots,\ell-1$. Then $T_{i+1}$, $i=1,\dots,\ell-2$ can be
  obtained from $T_{i}$ by a sequence of edge-disjoint facial
  flips. Hence, the Schnyder wood $T_{\ell-1}$ of $R_\ell$ can be
  obtained from the Schnyder wood $T_1$ of $R_1$ by a sequence of
  facial~flips. 
\end{proof}

%%%%%%%%%%%%%%%%%%%%%%%%%%%%%%%%%%%%%%%%%%%%%%%%%%%%%%%%%%%%%%%%%%%%%
\section{A Morphing Algorithm}\label{SEC:morphing_algorithm}
%%%%%%%%%%%%%%%%%%%%%%%%%%%%%%%%%%%%%%%%%%%%%%%%%%%%%%%%%%%%%%%%%%%%%

In this section, we prove the following theorem.

\begin{theorem}[Sufficient Condition]\label{THEO:morph_exists}
  Let $R_1$ and $R_2$ be two RT-representations of an $n$-vertex plane
  triangulation $G$ corresponding to the Schnyder woods $T_1$ and $T_2$, respectively. If
  $T_2$ can be obtained from $T_1$ by a sequence of $\ell$ facial
  flips, then there exists a piecewise linear morph between $R_1$ and $R_2$ of length
  $\mathcal O(n + \ell)$. Such a morph can be computed in $\mathcal O(n(n+\ell))$ time, provided that the respective sequence of $\ell$ facial flips is given.
\end{theorem}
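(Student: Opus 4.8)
The plan is to turn the given flip sequence into a concatenation of linear morphs, each routed through a degenerate RT-representation that simultaneously realizes the two woods exchanged by a flip. Write the sequence as $\langle C_1,\dots,C_\ell\rangle$, yielding Schnyder woods $T_1=S_0,S_1,\dots,S_\ell=T_2$ with $S_i$ obtained from $S_{i-1}$ by flipping the facial triangle $C_i$. For each $i$ I would build a degenerate representation $R_i^\ast$ in which exactly the three corners bounding $C_i$ meet in one point; by \cref{obs:lattice-dimension} (and the construction around \cref{FIG:commoncornercoloring}) such an $R_i^\ast$ realizes \emph{both} $S_{i-1}$ and $S_i$. Concretely I would take $R_i^\ast=\RT(S_{i-1},\tau_i,R_o)$ for a fixed outer representation $R_o$ and an ADT-labeling $\tau_i$ of $\DAG_r(S_{i-1})$ that forces the equality case of \cref{cond:tau-2} that collapses $C_i$; by \cref{lem:from-rt-to-tau} every RT-representation already induces such a labeling, so restricting attention to labelings is without loss of generality, and \cref{lem:from-T-and-Tau-to-R} guarantees that $R_i^\ast$ exists and is unique.

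The morph is then the concatenation $R_1\to R_1^\ast\to R_2^\ast\to\cdots\to R_\ell^\ast\to R_2$. Any two consecutive members share a wood---$R_i^\ast$ and $R_{i+1}^\ast$ both realize $S_i$, while $R_1,R_1^\ast$ realize $S_0$ and $R_\ell^\ast,R_2$ realize $S_\ell$---so I would connect each pair by a linear morph via the sufficient criterion of \cref{THEO:charRTmorph}. Here the geometry of right triangles helps: horizontal and vertical sides stay axis-parallel throughout, so every contact arising from a red edge (top corner on a horizontal side) and from a blue edge (left corner on a vertical side) automatically meets \cref{COND:inCHAR_parallel}. Thus only the green-edge contacts---a right corner on a diagonal side---need care, and for each of these I would instead ensure \cref{COND:inCHAR_same_ratio}.

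This last point is the main obstacle. Since the right corner of $\Delta(v)$ sits at height $\tau(v)$ on the diagonal of $\Delta(u)$, whose endpoints $\leftcorner{}{u}$ and $\topcorner{}{u}$ have heights $\tau(u)$ and $\tau(u_r)$, the ratio with which it cuts $\dside{}{u}$ equals $(\tau(v)-\tau(u))/(\tau(u_r)-\tau(u))$ and hence depends only on the labeling; by \cref{LEMMA:sidewithpoint} the contact survives a linear morph exactly when this ratio coincides at the two endpoints. The difficulty is that forcing the collapse of $C_i$ changes $\tau$, and the $x$-coordinates produced by the $\RT$ construction may propagate this change and disturb diagonal contacts far from $C_i$; controlling this propagation so that every green cut ratio is preserved is the technical heart, and it is what lets each flip be realized by only a constant number of linear morphs that alter the labeling locally. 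For the bounds, computing each $R_i^\ast$ via \cref{lem:from-T-and-Tau-to-R} takes $\mathcal O(n)$ time, the $\ell$ flips contribute $\mathcal O(\ell)$ linear morphs, and aligning the arbitrary inputs $R_1,R_2$ with $R_1^\ast,R_\ell^\ast$ inside their common woods---by moving triangles in topological order while maintaining \cref{THEO:charRTmorph}---adds $\mathcal O(n)$ linear morphs; altogether this gives a morph of length $\mathcal O(n+\ell)$ computable in $\mathcal O(n(n+\ell))$ time.
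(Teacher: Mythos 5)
Your overall architecture is the paper's: route the morph through degenerate representations that simultaneously realize the two woods of a flip, observe that horizontal and vertical sides make every red and blue contact satisfy \cref{COND:inCHAR_parallel} of \cref{THEO:charRTmorph} automatically so that only the green (diagonal) contacts matter, and align the arbitrary inputs $R_1,R_2$ within their own woods at the two ends. But there is a genuine gap at exactly the step you yourself call ``the technical heart'': you never supply the mechanism that makes consecutive steps linear morphs. Sharing a Schnyder wood is only a necessary condition (\cref{le:triangulations-linear-morph}); for \cref{THEO:charRTmorph} to apply between $R_i^\ast$ and $R_{i+1}^\ast$, every right corner must cut its incident diagonal with the \emph{same ratio} in both representations. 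Since you build each $R_i^\ast=\RT(S_{i-1},\tau_i,R_o)$ from an ADT-labeling $\tau_i$ chosen independently (subject only to collapsing $C_i$), nothing coordinates the cut ratios across steps, and in general they will disagree at many vertices far from $C_i$; the single linear morph $R_i^\ast\to R_{i+1}^\ast$ you posit then simply need not exist. The same objection applies to your end alignment ``by moving triangles in topological order'': moving one triangle displaces every triangle resting on its diagonal, and without a propagation rule that provably preserves all other ratios, the claim that each such step is a linear morph is unsupported.

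What fills this hole in the paper is the procedure \textsc{adjust} of \cref{SEC:moving_a_triangle} together with its correctness proof (\cref{le:adjust}): the next labeling is never built from scratch but computed \emph{from the current one} by moving a single vertex along its incident diagonal and propagating the change backwards along red/green paths so that the ratio $\lambda(v)$ is preserved exactly for every other vertex $v$; proving that the result is again an ADT-labeling of $\DAG_r(T)$, that the propagation terminates (via acyclicity of $\DAG_b(T)$), and that the step is a linear morph is the bulk of the technical work. Each flip then costs two calls to \textsc{adjust} (collapse $C_g$ to the level of $C_b$, yielding a representation of both woods, then re-expand under $T_C$; this is \cref{THEO:adjustCCC}), and the end alignment is $O(n)$ further calls in a carefully chosen order (\cref{THEO:adjustCC}, which also needs the ordering lemma on point sets to keep each move within the admissible range). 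Your proposal presupposes exactly the property---preservation of all green cut ratios between consecutive representations---whose establishment constitutes the proof, so as it stands it is a correct outline of the strategy rather than a proof.
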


Since there is always a piecewise linear morph between two RT-representations of a plane triangle (see \cref{FIG:triangle}), we will assume that $G$ has at least four vertices. This implies especially that the topmost vertex, which always coincides with $X_r$, is the same in $R_1$ and $R_2$.
\begin{figure}[t]
  \centering
  \includegraphics[scale=.9]{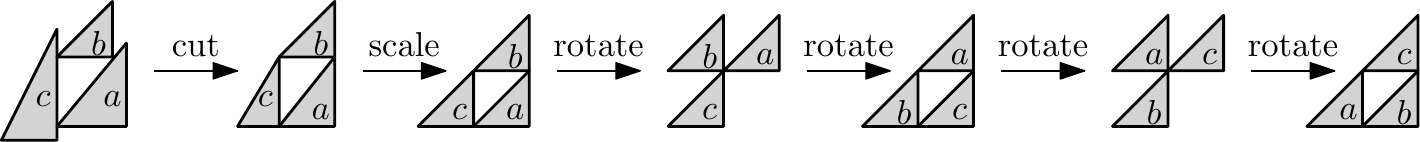}
  \caption{\label{FIG:triangle}Morphing an RT-representation of a triangle to a labeled canonical form: First, cut the extruding parts of the triangles, maintaining the slopes of the diagonal sides. Then, scale the triangles such that the horizontal and vertical sides have length one. Finally, keep rotating the triangles until the topmost vertex is as desired.}
\end{figure}

In \cref{SEC:moving_a_triangle}, we introduce our main procedure \textsc{adjust}, which moves a triangle in an RT-representation along an incident diagonal and adjusts the remaining triangles so that the result is a linear morph. Repeatedly applying \textsc{adjust}, we first morph $R_1$ to a non-degenerate RT-representation that still corresponds to $T_1$ (\cref{SEC:same-wood-different-proportions}); then, we perform a sequence of linear morphs to realize the $\ell$ facial flips geometrically (\cref{SEC:flipping-algorithm}), hence obtaining an RT-representation corresponding to $T_2$, which we finally morph to $R_2$ (\cref{SEC:same-wood-different-proportions}).

\subsection{Moving a Triangle Along a Diagonal}\label{SEC:moving_a_triangle}
Let $G=(V, E)$ be a plane triangulation and let $R$ be an RT-representation of $G$ corresponding to a Schnyder wood $T$ of $G$. Given an inner vertex $x$ of $G$ and a real value $y$ with some properties, \textsc{adjust} computes a new RT-representation $R'$ of $G$ corresponding to $T$ in which $\hside{}{x}$ has y-coordinate $y$ and $\Delta(x_g)$ remains unchanged, such that $\left<R,R'\right>$ is a linear morph. 

To achieve this goal, the y-coordinate of $\hside{}{v}$, for some vertex $v \neq x$, may also change; however, the ratio with which $\rightcorner{}{v}$ cuts $\dside{}{v_g}$ does not change, thus satisfying \cref{COND:inCHAR_same_ratio} of \cref{THEO:charRTmorph}. The y-coordinates of the horizontal sides are encoded by a new ADT-labeling $\tau$ of $G$, and $R'$ is the unique RT-representation $\RT(T, \tau, R_o)$ of $G$ that is obtained by applying \cref{lem:from-T-and-Tau-to-R} with input $G$, $T$, $\tau$, and the representation $R_o$ of the outer face of $G$ in $R$.

For a vertex $w \in V$, we denote by $\ytop(w)$ the y-coordinate of $\topcorner{}{w}$; recall that, in our construction, we have $\ytop(w) = \tau(w_r)$, if $w$ is an inner vertex. Also, let $v_1,\dots,v_\ell$ be the neighbors of $w$
such that $\rightcorner{}{v_1},\dots,\rightcorner{}{v_\ell}$ appear in this order from $\leftcorner{}{w}$ to $\topcorner{}{w}$ along $\dside{}{w}$. For a fixed $i \in \{1,\dots,\ell\}$, we say that \emph{moving $v_i$ to $y \in \mathds R$ respects the order along $\dside{}{w}$} if 
  (i) $i=1$ and $\tau(w) \leq y < \tau(v_2)$
(where equality is only allowed if $\leftcorner{}{v_1}$ does not lie on $\vside{}{w_b}$), 
(ii) $i=2, \dots,\ell-1$ and
$\tau(v_{i-1}) < y < \tau(v_{i+1})$, or (iii) $i=\ell$ and
$\tau(v_{i-1}) < y \leq \ytop(w)$ and $y< \ytop(v_\ell)$.
Further, for a vertex $v$, we consider the ratio $\lambda(v)$ with which $\rightcorner{}{v}$ cuts the incident diagonal side, i.e.,
$   \lambda(v) = \frac{\tau(v) - \tau(v_g)}{\ytop(v_g) - \tau(v_g)}$, if either $v$ is an inner vertex or $v \in \{X_b,X_r\}$,
$\rightcorner{}{v}$ is on $\dside{}{X_g}$, and $v_g:=X_g$.

\begin{figure}[b]
\begin{subfigure}[t]{.5\textwidth}\centering
\includegraphics[page=46]{figures}
\subcaption{\label{FIG:before_adjust}original RT-representation}
\end{subfigure}%
\begin{subfigure}[t]{.5\textwidth}\centering
\includegraphics[page=47]{figures}
\subcaption{\label{FIG:after_adjust}after applying \textsc{adjust}$(.,.,x,\tau(x_g))$.}
\end{subfigure}
\caption{\label{FIG:adjust}Moving $\Delta(x)$ down along $\protect\dside{}{x_g}$.}
\end{figure}

For the vertex $x$ and the y-coordinate $y$ that are part of the input of \textsc{adjust}, we assume that moving $x$ to $y$ respects the order along $\dside{}{x_g}$. Setting $\tau(x)\gets y$ may have implications on the neighbors of $x$ of the following type. 
\begin{inparaenum}
\item\label{impl:1} For every vertex $v$ such that $x = v_g$, the value of $\tau(v)$ has to be modified to ensure that the ratio $\lambda(v)$ with which $\rightcorner{}{v}$ cuts $\dside{}{x}$ is maintained; 
\item\label{impl:2} for every vertex $u$ such that $x = u_r$, we have to set $\ytop(u)=y$ to maintain the contact between $\Delta(u)$ and $\Delta(x)$.
\end{inparaenum}
Since these modifications may change the diagonal side of $\Delta(u)$ and $\Delta(v)$, they may trigger analogous implications for the neighbors of $u$ and $v$. 

Since the y-coordinate of $\hside{}{u}$ is not changed, only a type-\ref{impl:1} implication may be triggered for the neighbors of $u$. Further, the two implications correspond to following either a red or a green edge, respectively, in reverse direction with respect to the one in $T$. Hence, the vertices whose triangles may need to be adjusted are those that can be reached from the vertex $x$ by a reversed directed path in $T$ using only red and green edges, but no two consecutive red edges; see \cref{FIG:unchanged_cycle}. Note that, since the green and the red edges have opposite orientation in $T$ and in $\DAG_b(T)$, which is acyclic, this implies that \textsc{adjust} terminates. 

The procedure {\sc adjust} (see \Cref{ALGO:adjust} in the Appendix for its pseudo-code and \cref{FIG:adjust} for an illustration) first finds all the triangles that may need to be adjusted, by performing a simple graph search from $x$ following the above described paths of red and green edges. In a second pass, it performs the adjustment of each triangle $\Delta(w)$, by modifying $\tau(w)$ so that $\lambda(w)$ is maintained. We ensure that the new value of $\tau(w)$ is computed only after the triangle $\Delta(w_g)$ has already been adjusted. 

\begin{restatable}{lemma}{adjust}\label{le:adjust}
  Let $R_1$ be an RT-representation of a plane triangulation $G=(V,E)$ corresponding to the Schnyder wood $T$ and let the y-coordinate of
  $\hside{1}{v}$ be $\tau_1(v)$, $v \in V$. Let $x \in V$ be an inner
  vertex and let $y \in \mathds R$ be such that moving $x$ to $y$ respects the
  order along $\dside{}{x_g}$. Let $\tau_2$ be the output of
  $\textsc{adjust}(\tau_1,T,x,y)$. 

Then, we have that
\begin{inparaenum}[$(i)$]
  \item \label{ITEM:tau_xisy} $\tau_2(x)=y$,
  \item \label{ITEM:ratio_preserved} $\lambda(v)$ is maintained for any vertex $v \neq x$,
  \item \label{ITEM:ADT_labeling}$\tau_2 $ is an ADT-labeling of $DAG_r(T)$, and
  \item \label{ITEM:linear_morph}the morph between $R_1$ and $R_2=\RT(T, \tau_2, R_o)$ is linear, where $R_o=\Delta_1(X_b) \cup \Delta_1(X_g) \cup \Delta_1(X_r)$.
\end{inparaenum}
\end{restatable}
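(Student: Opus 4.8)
The plan is to prove the four properties in their stated order, using throughout the structural fact that the vertices whose labels \textsc{adjust} may touch are exactly those reachable from $x$ by a directed path in $\DAG_b(T)$ that uses only (red and green) edges and no two consecutive red edges. Since $\DAG_b(T)$ is acyclic, these vertices induce a sub-DAG with $x$ as its unique source; I fix a topological order $\sigma$ of this sub-DAG. Property~\ref{ITEM:tau_xisy} is then immediate: \textsc{adjust} sets $\tau(x)\gets y$ once, and no directed walk in $\DAG_b(T)$ returns to its source $x$, so the value at $x$ is never overwritten and $\tau_2(x)=y$.

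For property~\ref{ITEM:ratio_preserved} I would first verify that the second pass of \textsc{adjust} is well defined when it processes the affected vertices in the order $\sigma$. Recomputing $\tau(w)$ for a type-\ref{impl:1} vertex $w$ requires the final values of $\tau(w_g)$ and $\ytop(w_g)=\tau((w_g)_r)$. The green edge $w_g\to w$ of $\DAG_b(T)$ places $w_g$ before $w$ in $\sigma$, and the red edge $(w_g)_r\to w_g$ of $\DAG_b(T)$ places $(w_g)_r$ before $w_g$ (hence before $w$) whenever $(w_g)_r$ is affected at all; if it is not affected its label is simply unchanged. Thus each recomputation uses only finalized data, and by the defining identity $\tau(w)=(1-\lambda(w))\,\tau(w_g)+\lambda(w)\,\ytop(w_g)$ it leaves $\lambda(w)$ unchanged. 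Every unaffected vertex $v$ keeps $\tau(v)$, $\tau(v_g)$ and $\tau((v_g)_r)$ fixed (the affected set being closed under the implications that modify a diagonal), hence keeps $\lambda(v)$; this proves property~\ref{ITEM:ratio_preserved}.

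Property~\ref{ITEM:ADT_labeling} is the heart of the lemma, and I would split it along the three defining conditions of an ADT-labeling. For \cref{cond:tau-1} I must establish, at every inner vertex $w$, the green, blue and red inequalities $\tau_2(w_g)\le\tau_2(w)$, $\tau_2(w_b)\le\tau_2(w)$ and $\tau_2(w)\le\tau_2(w_r)=\ytop(w)$. The ``interior'' inequalities follow from the fact that every maintained ratio keeps a corner between the endpoints of the segment hosting it: the convex-combination identity with $\lambda(w)\in[0,1]$ forces $\tau_2(w)\in[\tau_2(w_g),\ytop(w_g)]$, and the analogous statement for left corners on vertical sides yields the blue inequalities. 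The ``boundary'' inequalities, namely the blue lower bound and the red upper bound exactly at $x$, are where the hypothesis that moving $x$ to $y$ respects the order along $\dside{}{x_g}$ enters: its range clauses ($\tau(v_{i-1})<y$, $y<\tau(v_{i+1})$, $y\le\ytop(x_g)$ and $y<\ytop(x)$) are precisely the monotonicities at $x$, and they are transported to each affected vertex along $\sigma$ because an affected right corner never slides above the (unchanged) top corner of the diagonal it rides on. The same hypothesis governs \cref{cond:tau-2}: its strict inequalities forbid all new label-equalities except the single one it explicitly permits, which by \cref{obs:threeInAcorner} occurs only on a green (clockwise) or blue (counter-clockwise) edge of the prescribed facial triangle, matching the requirement of \cref{cond:tau-2}. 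Finally \cref{cond:tau-3} is preserved because \cref{obs:threeInAcorner} forces any triple of coincident corners to bound a single facial triangle, so the clockwise and counter-clockwise facial triangles at a thrice-tied vertex cannot share their third vertex. I expect \cref{cond:tau-1} to be the main obstacle: unlike the ratios it couples affected and unaffected neighbours of three different colours, so one must run the convex-combination bound forward along $\sigma$ and invoke ``respects the order'' precisely at $x$ to seed the red and blue monotonicities.

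For property~\ref{ITEM:linear_morph} I would invoke \cref{THEO:charRTmorph} for $R_1$ and $R_2$: both correspond to $T$ (for $R_2$ by \cref{lem:from-T-and-Tau-to-R}, using that $\tau_2$ is an ADT-labeling) and both realize the same outer representation $R_o$, whose three triangles pairwise touch in their corners. I classify each corner--side contact by its edge colour. A red contact ($\topcorner{}{}$ on $\hside{}{}$) and a blue contact ($\leftcorner{}{}$ on $\vside{}{}$) lie on axis-parallel sides that keep their direction, so \cref{COND:inCHAR_parallel} applies. A green contact $\rightcorner{}{v}$ on $\dside{}{v_g}$ with $v\neq x$ keeps its ratio $\lambda(v)$ by property~\ref{ITEM:ratio_preserved}, so \cref{COND:inCHAR_same_ratio} applies; and the unique green contact at $x$ is again covered by \cref{COND:inCHAR_parallel}, since $\Delta(x_g)$ is unchanged and hence $\dside{}{x_g}$ has the same slope in $R_1$ and $R_2$. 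Therefore $\langle R_1,R_2\rangle$ is a linear morph.
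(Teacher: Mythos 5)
Your items (i) and (ii) are sound, and your skeleton for (iv) matches the paper's (axis-parallel sides handled by \cref{COND:inCHAR_parallel} of \cref{THEO:charRTmorph}, green contacts at $v\neq x$ by \cref{COND:inCHAR_same_ratio}, the contact at $x$ by parallelism of the diagonal of $\Delta(x_g)$). The genuine gap is item (iii), which is the heart of the lemma, and your own text concedes it (``I expect \cref{cond:tau-1} to be the main obstacle''). The identity $\tau_2(w)=(1-\lambda(w))\,\tau_2(w_g)+\lambda(w)\,\tau_2((w_g)_r)$ only pins $\tau_2(w)$ between $\tau_2(w_g)$ and $\tau_2((w_g)_r)$, and even that presupposes $\tau_2(w_g)\le\tau_2((w_g)_r)$, a red-edge inequality at $w_g$, so an induction is already unavoidable. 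More importantly, it gives neither of the two inequalities that actually require work: the red inequality $\tau_2(w)\le\tau_2(w_r)$ (note that $(w_g)_r\neq w_r$ in general) nor the blue inequality $\tau_2(w_b)\le\tau_2(w)$. Your proposed source for the latter --- ``the analogous statement for left corners on vertical sides'' --- is circular: \textsc{adjust} maintains no such ratio, and the position of a left corner on a vertical side is a property of the representation $R_2=\RT(T,\tau_2,R_o)$, which by \cref{lem:from-T-and-Tau-to-R} can only be constructed \emph{after} (iii) is established. The same circularity invalidates your use of \cref{obs:threeInAcorner} (a statement about representations, not labelings) for \cref{cond:tau-2,cond:tau-3}. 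Finally, your ``transport along $\sigma$'' rests on the premise that the top corner of the diagonal a right corner rides on is unchanged; this is false for affected vertices: if $v$ is affected because the diagonal of $v_g$ changed, that diagonal's top corner, at height $\tau((v_g)_r)$, may itself have moved. The paper closes exactly these gaps with machinery absent from your proposal: it first bounds the changed region by a cycle $C'$ (blue path from $u$ to $X_b$, edge $\{u,x\}$, red path from $x$ to $X_r$), then proves a stronger invariant (``Property DTL'': $\tau_2(v)\le\tau_2(w)$ on every edge of $\DAG_r(T)$, with equality only in three explicitly listed situations) by induction on two different lexicographic edge orderings --- one for $y<\tau_1(x)$, one for $y>\tau_1(x)$ --- in which red and blue edges are handled by chaining through the fan of neighbors $v=v_0,v_1,\dots,v_k=w_g$ around $w$, consecutive members of which are joined by green or blue edges; \cref{cond:tau-3} then gets its own case analysis from ``respects the order'', entirely at the level of labelings.

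There is also a secondary gap in (iv): you assert that $\Delta(x_g)$ is unchanged. Acyclicity of $\DAG_b(T)$ does give $\tau_2(x_g)=\tau_1(x_g)$ and $\tau_2((x_g)_r)=\tau_1((x_g)_r)$, hence the \emph{height} of $\Delta(x_g)$; but the slope of its diagonal also depends on its \emph{width}, i.e., on x-coordinates, which in $\RT(T,\tau_2,R_o)$ are produced by the incremental construction and depend on other triangles. The paper proves this, too, via $C'$: every vertex on which the construction of $\Delta(x_g)$ depends lies on the blue path $p_b$ or outside $C'$, and those triangles are built without reference to the interior of $C'$, so $\Delta(x_g)$ --- and hence the direction of its diagonal side --- is literally the same in $R_1$ and $R_2$.
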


\begin{figure}[t]
  \centering
  \includegraphics[page=45]{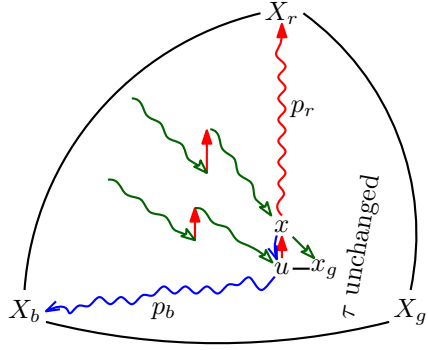}
  \caption{\label{FIG:unchanged_cycle} Let $u$ be the neighbor of $x$
    following $x_g$ in clockwise order. Then, the cycle $C'$ composed
    of the blue $u$-$X_b$-path $p_b$, the edge $\{u,x\}$, the red
    $x$-$X_r$-path $p_r$, and the edge $\{X_r,X_b\}$ encloses all
    vertices for which $\tau$ is changed. Vertex $x$ is the only
    vertex on $C'$ for which $\tau$ is changed.}
\end{figure}

\paragraph{Outline of the proof}.
  \cref{ITEM:tau_xisy,ITEM:ratio_preserved} are clear from
  the construction.  We establish a cycle $C'$ (see
  \cref{FIG:unchanged_cycle}) that encloses all vertices for which
  $\tau$ might be changed.  Distinguishing the cases $y<\tau_1(x)$ and
  $y>\tau_1(x)$, \cref{ITEM:ADT_labeling} can be shown by induction on
  a suitable ordering of the edges in $\DAG_r(T)$. Since all
  predecessors of $x_g$ in $\DAG_b(T)$ are outside or on $C'$, we have that $\Delta_1(x_g)=\Delta_2(x_g)$. Now,
  by \cref{THEO:charRTmorph}, $\left<R_1,R_2\right>$ is a
  linear morph.  
  \qedhere
%\end{proof}

\subsection{A Flipping Algorithm}\label{SEC:flipping-algorithm}
Recall that, given a Schnyder wood $T$ and an oriented cycle $C$ in $T$, the Schnyder wood $T_C$ is obtained from $T$ by flipping $C$. In the following theorem we show how to realize this flip geometrically with two linear morphs in the case in which $C$ is a facial cycle.
\begin{theorem}\label{THEO:adjustCCC}
  Let $R_1$ be a non-degenerate RT-representation of a plane triangulation
  $G$ corresponding to a Schnyder wood $T$. Let $C$ be an oriented
  facial cycle in $T$.  We can construct a sequence of two linear morphs
  $\left<R_1,R_2,R_3\right>$ such that $R_3$ is a non-degenerate
  RT-representation of $G$ corresponding to a Schnyder wood $T_C$.
\end{theorem}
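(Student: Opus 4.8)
The plan is to realize the flip of an oriented facial cycle $C = \langle u, v, w \rangle$ through two consecutive linear morphs that pass through a single degenerate intermediate representation $R_2$ in which the three corners meeting $C$ coincide at a common point. Recall from \cref{obs:threeInAcorner} and \cref{FIG:commoncornercoloring} that a degenerate RT-representation in which $\topcorner{}{u_1}$, $\leftcorner{}{u_2}$, and $\rightcorner{}{u_3}$ coincide at a point corresponds to \emph{two} Schnyder woods, namely $T$ and $T_C$, which differ exactly by flipping the facial cycle $C$. Thus if we can continuously move from the non-degenerate $R_1$ (corresponding to $T$) to a degenerate $R_2$ in which precisely the three corners on $C$ are brought together, and then continue moving the corners apart again but ``on the other side,'' we land in a non-degenerate $R_3$ that is forced (by the coloring rules of \cref{se:RT2wood}) to correspond to $T_C$.

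The first morph $\langle R_1, R_2 \rangle$ is the contracting half. Assume without loss of generality (by the symmetry of the three colors) that $C$ is oriented so that the relevant corner of the triangle at the ``bottom'' of $C$ must slide up its incident diagonal until it coincides with the apex corner of the neighboring triangle. I would implement this by invoking \textsc{adjust} from \cref{le:adjust}: I pick the inner vertex $x$ on $C$ whose horizontal side must rise, and set its target $y$-coordinate to the value that makes the three corners coincide (this is $\ytop$ of the appropriate neighbor, i.e.\ $\tau(x_r)$ for the red edge, or the corresponding value dictated by the degenerate configuration in \cref{FIG:commoncornercoloring}). By \cref{le:adjust}\eqref{ITEM:linear_morph}, this single call produces a linear morph $\langle R_1, R_2 \rangle$, and the resulting $\tau_2$ is an ADT-labeling in which the three corners of $C$ meet. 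The key point is that $R_2$ still corresponds to $T$ (as well as to $T_C$) by \cref{obs:lattice-dimension}, since the degeneracy introduces exactly the one oriented triangle $C$ into the set $\mathcal{T}_{R_2}$.

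The second morph $\langle R_2, R_3 \rangle$ is the expanding half and is the mirror image of the first, but now performed \emph{within the Schnyder wood $T_C$}. Concretely, I would apply \textsc{adjust} again — this time using $T_C$ as the governing Schnyder wood and moving the appropriate corner (the one that, under the $T_C$ coloring, is the movable corner of the flipped triangle) off the common point so as to separate the three corners. Because $R_2$ corresponds to $T_C$ as well, \cref{le:adjust} applies with input Schnyder wood $T_C$, again yielding a linear morph, and choosing the target $y$ so that moving respects the order along the relevant diagonal in $T_C$ guarantees $R_3$ is non-degenerate. The coloring procedure of \cref{se:RT2wood}, applied to the non-degenerate $R_3$, then reads off exactly $T_C$, since after the corners have separated on the $T_C$ side the contact types (which corner touches which side) are precisely those prescribed by the flipped orientation.

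The main obstacle I anticipate is verifying that the two \textsc{adjust} calls have legal targets — i.e.\ that moving the chosen corner to the coincidence point in the first half \emph{respects the order} along its diagonal (so that no \emph{other} neighbor's corner is passed on the way, which would force additional unwanted degeneracies or violate \cref{cond:tau-3} of the ADT-labeling), and symmetrically that a small enough separation in the second half keeps $R_3$ non-degenerate while corresponding to $T_C$. This requires a careful local analysis of the neighbors of the three vertices of $C$ along their diagonal sides, using that $C$ is a \emph{facial} cycle so that no vertex lies strictly inside $C$ and hence the only corners interacting at the common point are exactly the three on $C$. I would handle the two orientations of $C$ (clockwise ``flip'' versus counter-clockwise ``flop'') as the two symmetric cases, each reducing to a single \textsc{adjust} application per half as above, and confirm via \cref{THEO:charRTmorph}\eqref{COND:inCHAR_same_ratio} that every intermediate contact is maintained throughout.
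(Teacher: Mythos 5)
Your proposal is correct and follows essentially the same route as the paper's proof: the paper likewise realizes the flip by one \textsc{adjust} call with respect to $T$ that moves the vertex $C_g$ (the vertex of $C$ with outgoing green edge) to the height $\tau_1(C_b)$, producing a degenerate intermediate $R_2$ whose coincident corners make it correspond to both $T$ and $T_C$, followed by a second \textsc{adjust} call with respect to $T_C$ that separates the corners on the other side, yielding a non-degenerate $R_3$ corresponding to $T_C$, with linearity of both steps coming from \cref{le:adjust}. The only difference is precision: where you describe the moved vertex and its target as ``the value making the three corners coincide'' (your parenthetical guess $\tau(x_r)$ is not the right formula, but your hedge is), the paper pins these down as $C_g$ and $\tau_1(C_b)$, and resolves your anticipated obstacle about respecting the order along the diagonal by explicit max/min midpoint formulas for the second target in the clockwise and counter-clockwise cases.
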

\begin{proof}
  For the oriented facial cycle $C$, let $C_r$, $C_g$, and $C_b$
  be the vertices with outgoing red, green, and blue
  edge, respectively, in $C$. In order to flip $C$, we move $C_g$
  along the respective incident diagonal sides as sketched in the
  following figure.
  \begin{center}
    \includegraphics[page=48]{figures}
    \end{center}
    More precisely, let $\tau_1$ be the y-coordinates of the
  horizontal sides in $R_1$.
  We first compute $\tau_2\gets\textsc{adjust}(\tau_1,T,C_g,\tau_1(C_b))$.
  If $C$ is clockwise oriented, we then compute
  \begin{quote}
    $\tau_3\gets\textsc{adjust}(\tau_2,T_C,C_g,(\tau_2(C_g) + 
    \max\{\tau_2(u);\; u=C_r \textup{ or } u_g = C_r\})/2)$.
\end{quote}
If $C$ is counter-clockwise oriented, we
proceed as follows.
\begin{quote}
    $\tau_3\gets\textsc{adjust}(\tau_2,T_C,C_g,(\tau_2(C_g) + \min\{\tau_2(u);\; u=(C_b)_r \textup{ or } u_g = C_b\})/2)$.
  \end{quote}
  In each case the new y-coordinates $y$ for $C_g$ are chosen such that moving $C_g$ to $y$ respects the order along the respective incident diagonal. Thus, \textsc{adjust} can be applied. Also, $\tau_2$ is an ADT-labeling of both, $\DAG_r(T)$ and $\DAG_r(T_C)$, and $\tau_3$ is an ADT-labeling of $\DAG_r(T_C)$.  Let $R_2=\RT(T, \tau_2, R_o)=\RT(T_C, \tau_2, R_o)$ and let $R_3=\RT(T_C, \tau_3, R_o)$. Since $\tau_2$ and $\tau_3$ are produced by \textsc{adjust}, by \cref{le:adjust}, both
  $\left<R_1,R_2\right>$ and $\left<R_2,R_3\right>$ are linear~morphs.
\end{proof}

\subsection{Morphing Representations with the same Schnyder Wood}
\label{SEC:same-wood-different-proportions}

In this section, we consider RT-representations corresponding to the
same Schnyder Wood.

\begin{theorem}\label{THEO:adjustCC}
  Let $R_1$ and $R_2$ be two RT-representations of an $n$-vertex plane
  triangulation corresponding to the same Schnyder wood
  $T$. Then, there is a piecewise linear morph between
  $R_1$ and $R_2$ of length at most $2n$.
\end{theorem}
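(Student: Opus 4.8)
The plan is to exploit the fact, implicit in \cref{lem:from-T-and-Tau-to-R}, that an RT-representation corresponding to a fixed Schnyder wood $T$ is completely determined by its outer face $R_o$ together with its ADT-labeling $\tau$ (the $y$-coordinates of the horizontal sides), or equivalently by $R_o$ together with the vector of ratios $\lambda(v)$ with which each right corner cuts its incident diagonal. The engine of the morph is the procedure \textsc{adjust}: by \cref{le:adjust}, a single call $\textsc{adjust}(\tau,T,x,y)$ produces a linear morph that moves $\hside{}{x}$ to height $y$, preserves every other ratio $\lambda(v)$ as well as the entire outer face, and keeps the Schnyder wood $T$. Hence each call changes exactly one coordinate of the ratio vector while leaving the remaining combinatorial-geometric data untouched, and the whole morph will be a concatenation of such elementary steps.

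Write $R_1=\RT(T,\tau_1,R_o^1)$ and $R_2=\RT(T,\tau_2,R_o^2)$. I would first bring both representations to a common canonical target $N_T=\RT(T,\tau^*,R_o^*)$, where $\tau^*$ is the integer canonical ordering of $\DAG_r(T)$ (so $N_T$ is non-degenerate with well-separated heights) and $R_o^*$ is a fixed outer triangle; the desired morph is then the concatenation of the morph from $R_1$ to $N_T$ with the reverse of the morph from $R_2$ to $N_T$. It therefore suffices to show that an arbitrary RT-representation $R$ with wood $T$ reaches $N_T$ in at most $n$ linear morphs. Each such reduction has two phases: a short phase that brings the three outer triangles into the position $R_o^*$ by ratio-preserving moves of $\Delta(X_g),\Delta(X_b),\Delta(X_r)$ (each realized as a single linear morph via \cref{COND:inCHAR_same_ratio} of \cref{THEO:charRTmorph}), followed by the main phase below. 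Since the main phase uses $n-3$ steps and the outer phase uses only constantly many, the budget of $n$ per side, hence $2n$ overall, is respected.

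For the main phase I process the inner vertices in a topological order of $\DAG_b(T)$ and, for each $x$ in turn, call $\textsc{adjust}(\cdot,T,x,\tau^*(x))$. The key point is that $x_g$ and $(x_g)_r$ precede $x$ in this order (in $\DAG_b(T)$ we have $(x_g)_r\to x_g\to x$), so when $x$ is handled the diagonal $\dside{}{x_g}$ and the value $\ytop(x_g)$ are already canonical; since $\tau(x)=\tau(x_g)+\lambda(x)\,(\ytop(x_g)-\tau(x_g))$, moving the height to $\tau^*(x)$ installs the canonical ratio $\lambda(x)$ and fixes $x$ permanently. Conversely, the vertices whose labels \textsc{adjust} may disturb are exactly those reachable from $x$ by reversed red/green paths, i.e.\ the $\DAG_b(T)$-descendants of $x$, all of which come later in the order; hence no already-finalized vertex is moved. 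After the last inner vertex the labeling equals $\tau^*$ and the outer face equals $R_o^*$, so by the uniqueness in \cref{lem:from-T-and-Tau-to-R} the representation is exactly $N_T$.

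The step I expect to be the main obstacle is verifying the precondition of \cref{le:adjust} at each call, namely that moving $x$ to $\tau^*(x)$ ``respects the order along $\dside{}{x_g}$'' — that the right corner of $x$ never passes a neighbouring corner on that diagonal during the step. I would secure this by refining the topological order so that the green children of a common parent are processed in the order in which their right corners appear along the parent's diagonal, consistently with their canonical heights; then each move is monotone and stays strictly between the current positions of its two diagonal neighbours, the well-separatedness of $\tau^*$ providing the needed slack. The remaining routine verifications are that every intermediate labeling is again an ADT-labeling (so that \cref{lem:from-T-and-Tau-to-R} keeps applying) and the bookkeeping that the inner phase performs precisely $n-3$ adjustments.
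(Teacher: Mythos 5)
Your overall plan coincides with the paper's: bring both representations to a common canonical $\RT(T,\tau^*,R_o^*)$ using \textsc{adjust} as the engine, one call per inner vertex plus a constant number of outer-face morphs, giving $n$ steps per side and $2n$ in total. However, your main phase deviates from the paper's in one crucial way, and the deviation opens a genuine gap --- exactly at the step you flagged as the main obstacle. The paper never moves a vertex to its canonical \emph{height}: it processes each vertex $w$ in a \emph{reversed} topological order of $\DAG_b(T)$ and, for the green children $\mathcal G(w)=\{v : v_g=w\}$, moves each $v$ to the height that installs the canonical \emph{ratio} on the \emph{current} diagonal $\dside{}{w}$, sequencing the moves inside $\mathcal G(w)$ by the separate ordering lemma on two sorted sets of reals. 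That inner-loop freedom to reorder the siblings of a common parent is precisely what makes the hypothesis of \cref{le:adjust} satisfiable at every call. Your scheme (forward topological order of $\DAG_b(T)$, targets $\tau^*(x)$) needs the parent's diagonal to be canonical first, which is fine, but it also needs the sibling corners on that diagonal to cooperate, which they need not.

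Here is why your proposed repair --- ``refining the topological order so that the green children of a common parent are processed in the order in which their right corners appear along the parent's diagonal'' --- cannot work: there is no such freedom. If $v_i$ (lower) and $v_{i+1}$ (upper) are consecutive children on $\dside{}{w}$, the edge between them is either red $(v_i,v_{i+1})$ in $T$ (hence reversed in $\DAG_b(T)$) or blue $(v_{i+1},v_i)$ in $T$ (hence kept in $\DAG_b(T)$); in both cases $\DAG_b(T)$ contains the edge $(v_{i+1},v_i)$, so $\mathcal G(w)$ forms a directed chain $v_\ell \to v_{\ell-1} \to \dots \to v_1$ and \emph{every} topological order of $\DAG_b(T)$ is forced to process the siblings from top to bottom along the diagonal. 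With that forced order the precondition genuinely fails: if in $R_1$ the ratios of $v_{\ell-1},v_\ell$ on $\dside{}{w}$ are, say, $0.8$ and $0.9$ while canonically they are $0.1$ and $0.2$, then moving $v_\ell$ to $\tau^*(v_\ell)$ requires $\rightcorner{}{v_\ell}$ to pass the current position of $\rightcorner{}{v_{\ell-1}}$, violating ``moving $x$ to $y$ respects the order along $\dside{}{x_g}$''; and this is not a mere formality, since two right corners on the same diagonal can never coincide in any RT-representation (the triangles would overlap), so no single linear morph can realize this move at all. A correct repair must abandon one of your two pillars: either target ratios on the current diagonal instead of heights (the paper's choice, which removes the need for the parent's diagonal to be canonical and decouples the sibling targets from each other), or replace the topological order of $\DAG_b(T)$ by a topological order of the weaker ``disturbance'' relation induced by \textsc{adjust} (under which siblings in $\mathcal G(w)$ are incomparable) combined with the paper's ordering lemma to sequence each $\mathcal G(w)$.
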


The idea is to first transform the outer face to a canonical
form, and then to move one vertex $v$ per step to a new y-coordinate $y$
such that the ratio $\lambda(v)$ is set to how it should be in
$R_2$. The order in which we process the vertices is such
that $\textsc{adjust}$ can be applied to the vertex $v$ and the
y-coordinate $y$.  Recall that \textsc{adjust} does not
alter the ratio $\lambda$, except for the currently processed vertex $v$.
The following lemma can be proven by induction on $n$.

\begin{restatable}{lemma}{ordering}
  Let $P=\{p_1 < \dots < p_n\}$ and $Q=\{q_1 < \dots < q_n\}$ be two
  sets of $n$ reals each. If $P \neq Q$ then there is an $i$ such that
  $p_i \neq q_i$ and $P$ has no element between $p_i$ and
  $q_i$.
\end{restatable}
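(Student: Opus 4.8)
The plan is to read ``$P$ has no element between $p_i$ and $q_i$'' as: the open interval with endpoints $p_i$ and $q_i$ contains no point of $P$ (the closed reading would be vacuous, since $p_i\in P$ is always an endpoint). I would then prove the statement by a direct extremal choice of the index, splitting on the sign of $p_i-q_i$ rather than running the suggested induction; the extremal version is self-contained and makes the one delicate configuration transparent.

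First I would handle the case in which some index has $p_i<q_i$, and set $i^\ast=\max\{i:p_i<q_i\}$. I claim $i^\ast$ works. Certainly $p_{i^\ast}\ne q_{i^\ast}$, and if some $p_k$ lay in the open interval $(p_{i^\ast},q_{i^\ast})$, then $p_k>p_{i^\ast}$ forces $k>i^\ast$; by maximality of $i^\ast$ we have $p_k\ge q_k$, whereas $k>i^\ast$ together with sortedness gives $q_k\ge q_{i^\ast+1}>q_{i^\ast}>p_k$, i.e.\ $q_k>p_k$, a contradiction. Hence no point of $P$ lies strictly between $p_{i^\ast}$ and $q_{i^\ast}$.

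If instead no index satisfies $p_i<q_i$, then $p_i\ge q_i$ for every $i$, and since $P\neq Q$ some index has $p_i>q_i$; I would take $i_\circ=\min\{i:p_i>q_i\}$. For every $k<i_\circ$ we have neither $p_k>q_k$ (minimality) nor $p_k<q_k$ (this case), so $p_k=q_k$. If some $p_k$ lay in $(q_{i_\circ},p_{i_\circ})$, then $p_k<p_{i_\circ}$ forces $k<i_\circ$, so $p_k=q_k$; but $k<i_\circ$ and sortedness give $q_k\le q_{i_\circ-1}<q_{i_\circ}$, whence $p_k=q_k<q_{i_\circ}$, contradicting $p_k>q_{i_\circ}$. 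Thus $i_\circ$ works, completing the proof.

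The step I would flag as the main obstacle is choosing the \emph{right} extremum. Naive choices genuinely fail: the last differing index need not work, e.g.\ for $P=\{0,5,10\}$ and $Q=\{0,1,3\}$ the index $3$ gives the interval $(3,10)$, which contains $5\in P$, and minimizing $|p_i-q_i|$ can likewise land on a bad index (as with $P=\{0,10,11,100\}$, $Q=\{0,12,13,100\}$, where the index $10\mapsto 12$ traps $11$). The correct refinement is to take the largest index with $p_i<q_i$ (dually, the smallest with $p_i>q_i$): this forces any intruding $P$-point to an index strictly beyond $i^\ast$ (respectively before $i_\circ$), where sortedness of $Q$ immediately contradicts it, and it cleanly absorbs the awkward ``sign-change'' configuration in which differing indices of both signs coexist. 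As an alternative matching the stated hint, the same case split packages as an induction on $n$ deleting the maxima $p_n,q_n$: if $p_n<q_n$ then index $n$ works outright (nothing in $P$ exceeds $p_n$); otherwise one recurses on $P\setminus\{p_n\}$ and $Q\setminus\{q_n\}$, or, when these coincide, index $n$ works again, using that the deleted element $p_n=\max P$ exceeds both endpoints of the returned interval and so cannot spoil it.
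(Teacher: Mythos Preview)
Your proof is correct. The paper proves the lemma by induction on $n$: delete $p_n$ and $q_n$, apply the inductive hypothesis to obtain an index $i<n$, and if $p_n$ happens to land in the resulting interval $(p_i,q_i)$, observe that then $p_i<p_n<q_i<q_n$, so index $n$ works instead.

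Your route is genuinely different: rather than recursing, you pick the extremal index directly (largest $i$ with $p_i<q_i$, dually smallest with $p_i>q_i$) and derive a one-line contradiction from sortedness. This buys a self-contained argument with no recursion to unwind, and it makes explicit \emph{which} index works, whereas the induction may hop from the recursively returned $i$ to $n$. Your final paragraph's inductive sketch is essentially the paper's proof, with the added care of separating out the case $P\setminus\{p_n\}=Q\setminus\{q_n\}$, which the paper glosses over. One cosmetic remark: in your first case the intermediate term $q_{i^\ast+1}$ is not needed (and is undefined when $i^\ast=n$, though then no $k>i^\ast$ exists); the chain $p_k\ge q_k>q_{i^\ast}>p_k$ already gives the contradiction directly from $k>i^\ast$.
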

\begin{corollary}
  Let $P$ and $Q$ each be a set of $n$ points on a segment
  $s$. We can move $P$ to $Q$ in $n$ steps by moving one point per
  step and by maintaining the ordering of the points on $s$.
\end{corollary}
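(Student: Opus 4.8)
The plan is to derive the corollary from the lemma by an induction in which each use of the lemma provides exactly one admissible step of the motion. First I would identify every point with its coordinate along $s$, turning $P$ and $Q$ into two sets of reals $P=\{p_1<\dots<p_n\}$ and $Q=\{q_1<\dots<q_n\}$. Since the motion must preserve the order of the points, the point that starts as the $k$-th smallest must end as the $k$-th smallest; hence the target assignment is forced to be $p_k\mapsto q_k$, and I fix this bijection, regarding the $k$-th point as the one that has to travel from $p_k$ to $q_k$.

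Next I would induct on the number $m$ of indices $k$ with $p_k\neq q_k$. When $m=0$ we have $P=Q$ and there is nothing to do. When $m\geq 1$ we have $P\neq Q$, so the lemma yields an index $i$ with $p_i\neq q_i$ for which no element of $P$ lies strictly between $p_i$ and $q_i$. The single step I perform is to move the $i$-th point at constant speed along $s$ from $p_i$ to $q_i$ while keeping all other points fixed. Because no other point lies strictly between $p_i$ and $q_i$, the moving point overtakes none of them, so the (weak) order of the points along $s$ is preserved throughout the step. After the step the $i$-th point rests at its final position $q_i$ and every other point is unchanged, so the number of misplaced indices drops from $m$ to $m-1$; the induction hypothesis then completes the motion in at most $m-1$ further steps. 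Since $m\leq n$, the whole motion consists of at most $n$ steps, as claimed.

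The step I expect to require the most care is the behaviour at the very end of each move. The lemma rules out points of $P$ strictly between $p_i$ and $q_i$, but it does not prevent $q_i$ from equalling the current position of the immediate neighbour of $p_i$ on the side towards $q_i$; in that event the moving point reaches $q_i$ exactly as it meets that neighbour. This is harmless for us: coincident corners are admissible in a (possibly degenerate) RT-representation, and, more importantly, no two points ever reverse their relative positions, so ``maintaining the ordering'' in the weak sense is never violated. In the next step that neighbour departs towards its own target and the points separate again. I would therefore phrase ``maintaining the ordering'' as ``never reversing the relative order of two points'', under which reading the induction above is clean; the only genuine work is checking this endpoint coincidence, everything else being immediate from the lemma.
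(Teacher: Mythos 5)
Your main induction is exactly the argument the paper intends: the corollary is meant to follow immediately from the preceding lemma by repeatedly moving the point at the index the lemma supplies, with the number of mismatched indices dropping by one per step, hence at most $n$ steps. That part is sound and matches the paper.

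The genuine problem is your treatment of the tie case, which you correctly identify but then resolve the wrong way. In the paper this corollary is not free-standing: it is invoked in the proof of \cref{THEO:adjustCC}, where the points of $P$ are the y-coordinates $\tau(v)$ of the right corners of the triangles $\Delta(v)$, $v\in\mathcal G(w)$, lying on the common diagonal side of $\Delta(w)$, and each step of the motion is realized by a call to \textsc{adjust}. A coincidence of two right corners on the same diagonal side is \emph{not} an admissible degeneracy in the paper's sense: degenerate RT-representations are corner-corner contacts of triangles where three triangles meet (cf.~\cref{obs:threeInAcorner}), whereas two triangles sharing their bottom-right corner would have overlapping interiors, so the intermediate configuration would not be an RT-representation at all. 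Consistently, the precondition of \textsc{adjust}, namely that moving $v_i$ to $y$ ``respects the order along $\dside{}{w}$'', is stated with strict inequalities $\tau(v_{i-1})<y<\tau(v_{i+1})$, so a step that parks a point exactly on its neighbour's position cannot even be executed. Hence ``maintaining the ordering'' must be read strictly, and under that reading your proof has a hole whenever the target $q_i$ equals some current $p_j$, $j\neq i$. The fix is small: the lemma's induction actually proves the stronger statement in which ``between'' includes the endpoints, i.e.\ no element of $P\setminus\{p_i\}$ lies in the \emph{closed} interval with endpoints $p_i$ and $q_i$. (In the inductive step, if $p_n$ lies in that closed interval, then $p_i<p_n\le q_i<q_n$, so $p_n\neq q_n$ and no element of $P\setminus\{p_n\}$ can lie in $[p_n,q_n]$, since all such elements are smaller than $p_n$.) With that version of the lemma, the chosen move never creates a tie, each step preserves the order strictly, and your induction closes without any appeal to weak orderings or degeneracies.
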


%\begin{proof}[Proof of \cref{THEO:adjustCC}:]
\paragraph{Proof of \cref{THEO:adjustCC}.}
  Let $\tau'$ be a topological ordering of the inner vertices of
  $\DAG_r(T)$. We extend $\tau'$ to an ADT-labeling of $\DAG_r(T)$ by setting $\tau'(X_b)=0=\tau'(X_g)$, and 
  $\tau'(X_r)=n-2$. With a sequence of at most $n$
  linear morphs we transform $R_i$, $i=1,2$, into an
  RT-representation $R'=\RT(T,\tau',R_o)$, where $R_o$ has the following \emph{canonical form}:
  $\rightcorner{}{X_b} = \leftcorner{}{X_g} = (0,0)$,
  $\topcorner{}{X_b} = \leftcorner{}{X_r} = (0,n-2)$,
  $\topcorner{}{X_g} = \rightcorner{}{X_r} = (n-2,n-2)$, and the
  lengths of $\vside{}{X_r}$ and $\hside{}{X_b}$ are one.
  In the first morph, we cut the extruding parts of the outer
  triangles. In the second morph, we independently scale the x- and
  y-coordinates of the corners and translate the drawing, to
  fit the corners as indicated. In a third step, we adjust the lengths
  of $\vside{}{X_r}$ and $\hside{}{X_b}$.
  In the first morph the slope of no side is changed, in
  the second morph no ratio is changed, and in the third morph there
  are only four sides that are changed, which are not incident to
  any other triangle. Thus, the three morphs are linear.
  Let the resulting RT-representation be $R_i'$.
  
  We now process the vertices in a reversed topological ordering
  on $\DAG_b(T)$.
  We process a vertex $w$ as follows. Let $\tau$ be the current y-coordinates of the horizontal sides. Let $\mathcal G(w) = \{v \in V; w=v_g\}$ and let
  $P = \{\tau(v);\;v \in \mathcal G(w)\}$.  For
  $v \in \mathcal G(w)$ let $y(v)$ be such that

  \[
  \frac{y(v)-\tau(w)}{\tau(w_r)-\tau(w)}=\frac{\tau'(v)-\tau'(w)}{\tau'(w_r)-\tau'(w)},
  \]
  i.e., placing $\hside{}{v}$, $v \in \mathcal G(w)$ on the
  y-coordinate $y(v)$ cuts $\dside{}{w}$ in the the same ratio as
  in $R'$.  Let $Q=\{y(v);\;v \in \mathcal G(w)\}.$ By the above
  corollary, we can order $\mathcal G(w)=\{v_1,\dots,v_k\}$ such that
  replacing in the i$th$ step $\tau(v_i)$ by $y(v_i)$ maintains the
  ordering of $\{\tau(v);\;v \in \mathcal G(w)\}$. Since $\tau'$ is a
  topological ordering, we will not move $\rightcorner{}{v_i}$ to an
  end vertex of $\dside{}{w}$.
  For $i=1,\dots,k$ we now call
  $\tau \gets \textsc{adjust}(\tau,T,v_i,y(v_i))$. This yields one
  linear morphing step.

  After processing all vertices $w$ in a reversed topological ordering
  of $\DAG_b(T)$ and all vertices in $\mathcal G(w)$ in the order given
  above, we have obtained an RT-representation $R$ in which any right
  corner cuts its incident diagonal in the same ratio as in
  $R'$. Since the outer face is fixed, this implies that $R=R'$.
  Observe that $\mathcal G(w)$, $w \in V$, is a partition of the set of
  inner vertices. Hence, we get at most one morphing step for each
  of the $n-3$ inner vertices.
  \qedhere
  \smallskip
%\end{proof}

%%%%%%%%%%%%%%%%%%%%%%%%%%%%%%%%%%%%%%%%%%%%%%%%%%%%%%%%%%%%%%%%%%%%%

Combining the results of \cref{SEC:moving_a_triangle,SEC:flipping-algorithm,SEC:same-wood-different-proportions} yields the main result of the section.

\paragraph{Proof of \cref{THEO:morph_exists}.}
%\begin{proof}[Proof of \cref{THEO:morph_exists}]
  First, we transform $R_1$ into a non-degenerate RT-representation $R$
  with Schnyder wood $T_1$ and a canonical representation of the outer
  face in $\mathcal O(n)$ linear morphing steps, by \cref{THEO:adjustCC}. Then, we perform the
  $\ell$ facial flips as described in the proof of
  \cref{THEO:adjustCCC}, using two linear morphs for each
  flip. This yields an RT-representation $R'$ with Schnyder wood
  $T_2$. Finally, we transform $R'$ into $R_2$ in $\mathcal O(n)$
  linear morphing steps, by \cref{THEO:adjustCC}. This yields a total
  of $\mathcal O (n+\ell)$ linear morphs. Each linear morph can be
  computed by one application of $\textsc{adjust}$, which runs in
  linear time.
  \qedhere
%\end{proof}

%%%%%%%%%%%%%%%%%%%%%%%%%%%%%%%%%%%%%%%%%%%%%%%%%%%%%%%%%%%%%%%%%%%%%
\section{A Decision Algorithm}\label{SEC:decision_algorithm}

It follows from \cref{THEO:nec} and \cref{THEO:morph_exists} that
there is a piecewise linear morph between two RT-representations of
a plane triangulation if and only if the respective Schnyder woods can
be obtained from each other by flipping faces only. Note that this condition is always 
satisfied if the triangulation is 4-connected and the topmost vertex is the
same in both RT-representations. On the other hand, if the graph contains
separating triangles, we have to decide whether there is such a
sequence of facial flips. We will show that this can
be decided efficiently and that, in the positive case, there exists one sequence whose length is
at most quadratic in the number of vertices. This establishes our
final result.

\begin{theorem}\label{THEO:main}
  Let $R_1$ and $R_2$ be two RT-representations of an $n$-vertex plane
  triangulation. We can decide in
  $\mathcal O(n^2)$ time whether there is a piecewise linear morph
  between $R_1$ and $R_2$ and, if so, a morph with $\mathcal O(n^2)$ linear
  morphing steps can be computed in $\mathcal O(n^3)$ time.
\end{theorem}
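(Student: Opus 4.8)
The plan is to combine the necessary and sufficient conditions established earlier with an efficient search over the lattice of Schnyder woods. By \cref{THEO:nec} and \cref{THEO:morph_exists}, a piecewise linear morph between $R_1$ and $R_2$ exists if and only if the Schnyder wood $T_1$ corresponding to $R_1$ can be transformed into the Schnyder wood $T_2$ corresponding to $R_2$ by a sequence of facial flips. So the decision problem reduces entirely to a combinatorial question about Schnyder woods: is there a facial-flip sequence from $T_1$ to $T_2$? First I would handle the degeneracy issue flagged by \cref{obs:lattice-dimension}: each $R_i$ corresponds not to a single Schnyder wood but to a set $\mathcal T_{R_i}$ of $2^{|P_i|}$ of them, differing by flips of edge-disjoint oriented triangles at the points where three triangles meet. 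These differing flips are facial (they correspond to the faces $\langle u,v,w\rangle$ of \cref{FIG:commoncornercoloring}), so it suffices to pick any representative $T_1 \in \mathcal T_{R_1}$ and $T_2 \in \mathcal T_{R_2}$ and test reachability by facial flips between those two representatives; the choice does not affect the answer.

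The key structural tool is the distributive lattice on the 3-orientations of $G$, whose cover relation is the flip of a single oriented triangle (facial or separating). The plan is to exploit the fact that, in this lattice, the set of all flips needed to go between two elements is determined by the difference of their ``flip potentials'' on each face and each separating triangle; reversing an oriented separating triangle is never a single lattice step but decomposes into flips of the facial cycles it bounds together with the interior structure. Concretely, I would argue that $T_2$ is reachable from $T_1$ by facial flips alone if and only if, for every separating triangle $S$ of $G$, the ``winding'' or net number of times the flip sequence must reverse $S$ is forced to be zero — equivalently, $T_1$ and $T_2$ assign the same orientation to every separating triangle, or more precisely the potential difference across each separating triangle vanishes. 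This invariant can be read off directly from the two representations in $\mathcal O(n)$ time per separating triangle. Since the separating triangles of a plane triangulation form a laminar family of size $\mathcal O(n)$, computing all these invariants and checking their consistency takes $\mathcal O(n^2)$ time overall, giving the claimed decision bound.

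For the constructive part, once the invariants match, I would produce an explicit facial-flip sequence realizing $T_1 \to T_2$ of length $\mathcal O(n^2)$. The bound of $\mathcal O(n^2)$ facial flips follows from the known diameter bound on the flip lattice~\cite{barreraCruz/haxel/lubiw:18,brehm:diplom}: any two Schnyder woods of an $n$-vertex triangulation are connected by $\mathcal O(n^2)$ triangle flips, and when the separating-triangle invariants agree one can reroute each separating flip through its bounding facial flips so that the whole sequence uses only facial cycles while keeping the length $\mathcal O(n^2)$. Feeding this sequence into \cref{THEO:morph_exists} yields a piecewise linear morph of length $\mathcal O(n + \ell) = \mathcal O(n^2)$; the morph-construction cost is $\mathcal O(n(n+\ell)) = \mathcal O(n^3)$ by the per-flip cost of \textsc{adjust}, matching the stated $\mathcal O(n^3)$ computation bound.

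The main obstacle I anticipate is proving that the local invariants (the per-separating-triangle potentials) are not only necessary but also \emph{sufficient} for facial-flip reachability, and simultaneously controlling the length of the resulting all-facial sequence. The subtlety is that forbidding separating-triangle flips restricts which elements of the lattice are reachable, so one must show that the sublattice (or connected component) generated by facial flips alone is exactly the fiber of the separating-triangle-potential map; this requires a careful analysis of how flipping a facial cycle inside a separating triangle changes the potential of that triangle, and an argument that any admissible potential profile can be reached without ever needing a net separating flip. Establishing that this can be done within $\mathcal O(n^2)$ flips — rather than merely finitely many — is where the real work lies, and it is the step I would expect to dominate both the proof and the running-time analysis.
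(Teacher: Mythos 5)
Your reduction (via \cref{THEO:nec} and \cref{THEO:morph_exists}) and your choice of invariant (the flip potentials of the distributive lattice of $3$-orientations) coincide with the paper's, but the theorem's actual content is exactly the step you leave open: that vanishing potential differences on separating triangles \emph{suffices}, together with an explicit $\mathcal O(n^2)$ bound on the length of an all-facial flip sequence. Moreover, the route you sketch for closing this gap cannot work. You propose to take an arbitrary $\mathcal O(n^2)$ flip sequence and ``reroute each separating flip through its bounding facial flips.'' No such rerouting exists: flipping a facial triangle $t_0$ changes the potential of $t_0$ and of no other triangle (this is precisely how the paper proves necessity), so any sequence of facial flips leaves the potential of every separating triangle unchanged, whereas a flip of a separating triangle $C$ changes $\pi(C)$ by one. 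Since distinct $3$-orientations have distinct potential vectors, a separating flip can never be simulated by facial flips; the obstruction is not one of ordering or of length, the replacement is impossible outright. (Concretely, if $C$ encloses a single vertex $v$, all three edges at $v$ are outgoing, so no face inside $C$ is ever a directed cycle and hence never flippable.) Relatedly, your parenthetical ``equivalently, $T_1$ and $T_2$ assign the same orientation to every separating triangle'' is false in both directions: the paper's \cref{FIG:lattice} exhibits $3$-orientations that differ in the orientation of a separating triangle yet are connected by facial flips (III and II), and ones that agree on every separating triangle yet are not (III and I). Finally, the claim that the invariant ``can be read off directly from the two representations in $\mathcal O(n)$ time per separating triangle'' is unsubstantiated; the paper instead computes the entire potential vector $\pi_{T_i}$ in $\mathcal O(n^2)$ time by explicitly flipping down to the lattice minimum while maintaining a list of counter-clockwise triangles.

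The missing sufficiency argument is where the paper does something you did not anticipate, and it is short: instead of repairing an arbitrary flip sequence, route monotonically through the meet $T_1 \wedge T_2$, whose potential is $\min(\pi_{T_1}(t),\pi_{T_2}(t))$. Going from $T_1$ down to the meet flips each triangle $t$ exactly $\pi_{T_1}(t)-\min(\pi_{T_1}(t),\pi_{T_2}(t))$ times, and going from the meet up to $T_2$ flips $t$ exactly $\pi_{T_2}(t)-\min(\pi_{T_1}(t),\pi_{T_2}(t))$ times; when the potentials agree on separating triangles, both counts vanish there, so the combined sequence is automatically all-facial and has length $\sum_t \bigl(\pi_{T_1}(t)+\pi_{T_2}(t)\bigr) \in \mathcal O(n^2)$. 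Feeding that sequence into \cref{THEO:morph_exists} then yields the stated $\mathcal O(n^2)$ morphing steps computed in $\mathcal O(n^3)$ time, as in the final part of your proposal.
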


Since there is a one-to-one correspondence between Schnyder woods of a
plane triangulation and its 3-orientations, we will omit the colors in the
following.
%
% In the following,
A careful reading of Brehm~\cite{brehm:diplom} and
Felsner~\cite{felsner:04} reveals the subsequent properties of
3-orientations.  The set of 3-orientations of a triangulation forms a
distributive lattice with respect to the following ordering.  $T_1 \leq T_2$
if and only if $T_1$ can be obtained from $T_2$ by a sequence of flips
on some counter-clockwise triangles. The minimum element is the unique
3-orientation without counter-clockwise cycles. Moreover, given a
3-orientation $T$ and a triangle $t$, the number of occurrences of $t$
in any flip-sequence between $T$ and the minimum 3-orientation is the
same~--~provided that the flip sequence contains only
counter-clockwise triangles. Let this number be the potential~$\pi_T(t)$.
See \cref{FIG:lattice} in the Appendix for an example.

Observe that $\pi_T$ is distinct for distinct $T$. Moreover,
$\min(\pi_{T_1}(t),\pi_{T_2}(t))$, $t$ triangle, is the
potential of the \emph{meet} $T_1 \wedge T_2$ (i.e., the infimum) of two 3-orientations $T_1$
and $T_2$, while $\max(\pi_{T_1}(t),\pi_{T_2}(t))$, $t$ triangle,
is the potential of the \emph{join} $T_1 \vee T_2$ (i.e., the supremum) of $T_1$ and $T_2$.
The potential $\pi_{T}$ can be computed in quadratic
time for a fixed 3-orientation $T$ of an $n$-vertex triangulation: At
most $\mathcal O(n^2)$ flips have to be performed in order to reach
the minimum 3-orientation. With a linear-time preprocessing, we can
store all initial counter-clockwise triangles in a list. After each
flip, the list can be updated in constant time.

\begin{lemma}
  Let $T_1$ and $T_2$ be two 3-orientations of an $n$-vertex
  triangulation. $T_1$ can be obtained from $T_2$ by a sequence of
  facial flips if and only if $\pi_{T_1}(t) - \pi_{T_2}(t) = 0$
  for all separating triangles $t$. Moreover, if $T_1$ can be obtained
  from $T_2$ by a sequence of facial flips, then it can be obtained by
  $\mathcal O(n^2)$ facial flips.
\end{lemma}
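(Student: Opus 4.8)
The plan is to reduce both directions to a single fact about how one flip affects the potential $\pi_T$, and then to route the required flips through the meet $M := T_1 \wedge T_2$. The key observation I would establish first is: if $T'$ arises from $T$ by flipping a single triangle $s$, then $\pi_{T'}(s) = \pi_T(s) \pm 1$ while $\pi_{T'}(t) = \pi_T(t)$ for every triangle $t \neq s$. Indeed, if $s$ is counter-clockwise in $T$ (so $T' \le T$), then concatenating this single flip with any counter-clockwise flip sequence from $T'$ to the minimum yields such a sequence from $T$ in which $s$ occurs exactly once more and every other triangle equally often; the stated invariance of the occurrence count then gives the claim, and the clockwise case ($T' \ge T$) is symmetric. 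Thus each flip shifts exactly one coordinate of $\pi$ by one; in particular, a facial flip changes $\pi$ only at a facial triangle.

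Necessity is then immediate: a facial-flip sequence from $T_2$ to $T_1$ changes $\pi$ only at facial triangles, so $\pi_{T_1}(t) = \pi_{T_2}(t)$ for every separating triangle $t$. For sufficiency, I would assume this equality and set $M = T_1 \wedge T_2$, so that $\pi_M(t) = \min(\pi_{T_1}(t), \pi_{T_2}(t))$ for all $t$. Since $M \le T_1$ and $M \le T_2$, I concatenate a counter-clockwise flip sequence from $T_1$ down to $M$ with a clockwise flip sequence from $M$ up to $T_2$, obtaining a flip sequence between $T_1$ and $T_2$.

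By the key observation, in the downward part each triangle $t$ is flipped exactly $\pi_{T_1}(t) - \pi_M(t)$ times, and in the upward part exactly $\pi_{T_2}(t) - \pi_M(t)$ times. For a separating triangle $t$ the hypothesis forces $\pi_M(t) = \pi_{T_1}(t) = \pi_{T_2}(t)$, so both counts vanish: no separating triangle is ever flipped, and hence every flip in the concatenation is facial (and, being reversible, gives a facial-flip sequence between $T_1$ and $T_2$). For the length bound, the total number of flips is $\sum_t (\pi_{T_1}(t) - \pi_M(t)) + \sum_t (\pi_{T_2}(t) - \pi_M(t)) \le \sum_t \pi_{T_1}(t) + \sum_t \pi_{T_2}(t)$, and each of these two sums is exactly the length of a counter-clockwise flip sequence from $T_i$ to the minimum, which is $\mathcal O(n^2)$; hence $\mathcal O(n^2)$ facial flips suffice.

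The step I expect to be the main obstacle is the one-coordinate property of a single flip. The remaining bookkeeping with the meet and with the monotonicity of counter-clockwise versus clockwise flips is routine, but it is precisely this property that converts \emph{net potential change zero} into \emph{the triangle is never flipped}, thereby upgrading an arbitrary flip path through $M$ into a purely facial one; without it one could only conclude that separating triangles are flipped equally often in each direction, not that they are avoided altogether.
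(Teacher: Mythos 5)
Your proof is correct and takes essentially the same route as the paper: sufficiency by routing through the meet $T_1 \wedge T_2$ and counting, per triangle, the counter-clockwise flips down to the meet and the clockwise flips back up to $T_2$, and necessity from the fact that a single facial flip changes the potential only at the flipped face (which the paper packages as an induction on the length of the flip sequence). The only organizational difference is that you isolate and prove the one-flip/one-coordinate property explicitly via the occurrence-count invariance, whereas the paper invokes it implicitly in its induction base case.
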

\begin{proof}
  Observe that going from $T_1$ to the meet $T_1 \wedge T_2$ involves
  \[\pi_{T_1}(t) - \min(\pi_{T_1}(t),\pi_{T_2}(t)) \in
  \{0,\pi_{T_1}(t)-\pi_{T_2}(t)\}\] counter-clockwise flips on
  triangle $t$, and going from the meet $T_1 \wedge T_2$ to $T_2$
  involves
  \[\pi_{T_2}(t) - \min(\pi_{T_1}(t),\pi_{T_2}(t)) \in
  \{0,\pi_{T_2}(t)-\pi_{T_1}(t)\}\] clockwise flips on triangle
  $t$. Thus, if $\pi_{T_1}(t) - \pi_{T_2}(t) = 0$ for all
  separating triangles $t$, then no flip must be performed on a
  separating triangle. Then, the total number of flips is bounded by
  $\sum_{t \textup{ face}}(\pi_{T_1}(t) + \pi_{T_2}(t)) \in
  \mathcal O(n^2)$.

  Assume now that there is a sequence
  $T_1=T_0',T_1',\dots,T_\ell',T_{\ell+1}'=T_2$ of 3-orientations such
  that $T_{i+1}'$, $i=0,\dots,\ell$, is obtained from $T'_i$ by a
  (clockwise or counter-clockwise) facial flip. We show by induction on
  $\ell$ that $\pi_{T_1}(t) - \pi_{T_2}(t) = 0$ for all
  separating triangles $t$. If $\ell=0$, let $t_0$ be the triangle that
  has to be flipped in order to go from $T_1$ to $T_2$. Then, $t_0$
  is a face and
  $\pi_{T_1}(t) - \pi_{T_2}(t) = 0$ for $t \neq t_0$.
  Assume now that $\ell\geq 1$. Let $t$ be a separating triangle. Then
  \begin{align*}
    \pi_{T_1}(t) - \pi_{T_2}(t) &= \underbrace{\pi_{T_1}(t) - \pi_{T'_\ell}(t)}_{=0 \textup{ by IH}} + \underbrace{\pi_{T'_\ell}(t) - \pi_{T_2}(t)}_{=0 \textup{ by IH}} = 0. 
  \end{align*}
\end{proof}

Observe that there might be a piecewise linear morph between two RT-representations even though a separating triangle in the respective Schnyder woods is oriented in opposite directions. E.g., consider the 3-orientations III and II of the graph in \cref{FIG:lattice}. Moreover, there might be two RT-representations such that any separating triangle is oriented in the same way in the two respective Schnyder woods, however, there is no piecewise linear morph between the two representations. E.g., consider the 3-orientations III and I of the graph in \cref{FIG:lattice}.

\section{Conclusions and Open Problems}

We have studied piecewise linear morphs between RT-representations of plane triangulations, and shown that when such a morph exists, there is one of length $\mathcal{O}(n^2)$. 
It would be interesting to explore lower bounds on this length. Observe that the minimum length of a flip-sequence containing only facial cycles does not immediately imply such bound, since some flips could be parallelized.
Additionally, bounds on the resolution throughout our morphs would be worth investigating; however, it is unclear whether the ``ratio fixing'' we use would allow nice bounds. For this, it may help to return to integer y-coordinates between any two flips; however, this would result in a cubic number of linear morphing steps.
A major open direction is whether our results can be lifted to general plane graphs, e.g., through the use of compatible triangulations. Note that such a compatible triangulation would need to be formed while preserving the conditions for the existence of a linear morph, i.e., without introducing the need to flip a separating triangle.

Finally, beyond the context of RT-representations, many other families of geometric objects could be considered. 
For example, morphing degenerate contact representations of line segments generalizes planar morphing, by treating contact points as vertices. 

\paragraph{Acknowledgements.}
This research began at the Graph and Network Visualization Workshop 2018 (GNV'18) in Heiligkreuztal.
We thank Stefan Felsner, Niklas Heinsohn, and Anna Lubiw for~interesting~discussions on this subject.

\bibliographystyle{alpha} 
\bibliography{../biblio}

\clearpage

\appendix
\section{Appendix}

In this appendix we present the full version of the proofs that have been omitted or sketched in the paper.

\threeInACorner*

\begin{proof}
Let $f_\ell$ and $f_r$ be the two triangular faces to the left and the right of $\{u,v\}$, and let $w_\ell$ and $w_r$ be the vertex incident to $f_\ell$ and $f_r$, respectively, that is different from $u$ and $v$. Then, $\Delta(w_\ell)$ and $\Delta(w_r)$ must both touch $\Delta(u)$ and $\Delta(v)$, and they must be on different sides of $p$, unless one of $f_\ell$ and $f_r$ is the outer face. This implies that either $\Delta(w_\ell)$ or $\Delta(w_r)$ has a corner on $p$.
\end{proof}

\fromTandTautoR*

\begin{proof} 
  Let $\tau': V \leftrightarrow \{1,2,\dots,n\}$ be a topological ordering of $\DAG_r(T)$. We process the vertices of $G$ according to $\tau'$. By the construction of $\DAG_r(T)$, we have that either $\tau'(X_b)=1$ and $\tau'(X_g)=2$ or $\tau'(X_g)=1$ and $\tau'(X_b)=2$.  Thus, in the first two steps, we draw triangles $\Delta(X_b)$ and $\Delta(X_g)$ as in $R_o$; see  \cref{FIG:schnyder-wood-to-representation}. At each of the following steps, we consider a vertex $v$, with $\tau'(v)=i > 2$. Let $R_{i-1}$ be the RT-representation of the subgraph $G_{i-1}$ of $G$ induced by the vertices preceding $v$ in $\tau'$. 
  
  We first consider the case in which $v \neq X_r$, that is, $i < n$. We draw $\Delta(v)$ with its horizontal side on $y=\tau(v)$ and with its top corner at $y=\tau(v_r)$, as follows. Since the blue edge $(v_b,v)$ and the green edge $(v_g,v)$ are entering $v$ in $\DAG_r(T)$, we have that $\tau'(v_b)<\tau'(v)$ and $\tau'(v_g)<\tau'(v)$. Therefore, triangles $\Delta(v_b)$ and $\Delta(v_g)$ have already been drawn in $R_{i-1}$. Also, by \cref{cond:tau-1} of ADT-labeling, we have that $\tau(v_b)\leq \tau(v)$ and $\tau(v_g) \leq \tau(v)$.
  
  We show that the top corners of $\Delta(v_b)$ and $\Delta(v_g)$ have y-coordinate larger than or equal to $\tau(v)$. We present our proof for $\Delta(v_b)$, the arguments for $\Delta(v_g)$ are analogous. If $v_b = X_b$, this follows from the fact that $\topcorner{}{X_b}$ has y-coordinate larger than or equal to $\tau(X_r)$ in $R_o$. Otherwise, recall that the y-coordinate of  $\topcorner{}{v_b}$ has been set equal to $\tau(r)$, where $r$ is the neighbor of $v_b$ such that edge $(v_b,r)$ of $\DAG_r(T)$~is~red. 
  
  Let $w_1, \dots, w_k$ (with $w_1=v$ and $w_k=r$) be the neighbors of $v_b$ such that edges $(v_b, w_1), \dots, (v_b, w_k)$ appear in this counter-clockwise order around $v_b$. Since $(v_b, w_1)$ is blue and entering $v_b$ in $T$, while $(v_b, w_k)$ is red and exiting $v_b$ in $T$, we have that each edge $(v_b, w_i)$, with $2 \leq i \leq k-1$, is blue and entering $v_b$~in~$T$. 
  
  We claim that each edge $(w_j, w_{j+1})$, with $j=1, \dots, k-1$, is either red and exiting $w_j$ in $T$ or green and entering $w_j$ in $T$. Observe that if the claim holds, then $w_1,\dots,w_k$ form a directed path from $v_b$ to $r$ in $\DAG_r(T)$, and hence $\tau(r) \geq \tau(v)$. Thus, $\topcorner{}{v_b}$ has y-coordinate larger than or equal to $\tau(v)$, as~desired.
  
  To prove the claim, first consider the case $j \leq k-2$. Since $(v_b, w_j)$ and $(v_b, w_{j+1})$ are both blue, edge $(w_j, w_{j+1})$ cannot be blue. Further, if it was red and entering $w_j$ in $T$, we would have that the blue and the red edges exiting $w_{j+1}$ in $T$ are consecutive in counter-clockwise order around $w_{j+1}$, which is not possible. Analogously, if $(w_j, w_{j+1})$ was green and exiting $w_j$ in $T$, we would have that the blue and the green edges exiting $w_{j}$ in $T$ are consecutive in clockwise order around $w_j$, which is not possible. Thus, the claim follows in this case.
  Consider the case $j=k-1$. Since edge $(v_b,w_k)$ is red and entering $w_k$ in $T$, the edge $(w_{k-1},w_k)$, which follows $(v_b,w_k)$ in the counter-clockwise order around $w_k$, can be either red and entering $w_k$ in $T$, or green and exiting $w_k$ in $T$, and the claim follows.
  
  Further, it can be seen that our construction maintains the invariant that, if the vertices $v_1=X_b,\dots,v_q=X_g$ forming a path along the outer face of $G_{i-1}$, then the top corners of $\topcorner{}{v_1},\dots,\topcorner{}{v_q}$ have increasing x-coordinates in $R_{i-1}$. This implies that, if a triangle $\Delta(u)$ intersects the line $y=\tau(v)$ between $\Delta(v_b)$ and $\Delta(v_g)$, then $u$ is a neighbor of $v$ such that edge $(u,v)$ is red and oriented from $u$ to $v$ in $\DAG_r(T)$, that is, $v = u_r$. By construction, $\topcorner{}{u}$ has y-coordinate equal to $\tau(v)$. 
  
  Thus, by drawing the horizontal side of $\Delta(v)$ on $y=\tau(v)$ between $\Delta(v_b)$ and $\Delta(v_g)$, we obtain a representation $R_i$ of the graph $G_i=G_{i-1} \cup v$ in which $\Delta(v)$ touches $\Delta(v_b)$ with pair $(\leftcorner{}{v},\vside{}{v_b})$, it touches $\Delta(v_g)$ with pair $(\rightcorner{}{v},\dside{}{v_g})$, and it touches each vertex $u$ such that $v = u_r$ with pair $(\topcorner{}{u},\hside{}{v})$, and does not touch any other triangle. Note that these are exactly the compatible pairs that are enforced by $T$.
  
  Thus, to show that $R_i$ is an RT-representation, it remains to prove that the horizontal side of $\Delta(v)$ has positive length. This is clear if $\tau(v)$ is strictly greater than both $\tau(v_g)$ and $\tau(v_b)$. So assume first that $\tau(v_b)<\tau(v) = \tau(v_g)$. Then, by \cref{cond:tau-2} of ADT-labeling, there is a vertex $u$ such that $\left<v,v_g,u\right>$ is a clockwise facial cycle. It follows that $(u,v)$ is an incoming red edge of $v$ in $T$ and thus $\topcorner{}{u}$ lies between $\Delta(v_b)$ and $\Delta(v_g)$ on the horizontal line $y=\tau(v)$, as discussed above. Since the horizontal side of $\Delta(u)$ has a positive length in $R_{i-1}$ and since $\tau(v_b) < \tau(v)$, the vertical side of $\Delta(v_b)$ must be strictly to the left of $\topcorner{}{u}$, and thus strictly to the left of $\leftcorner{}{v_g}$. The case $\tau(v_b) = \tau(v) > \tau(v_g)$ is analogous.
  Finally, we consider the case $\tau(v_b) = \tau(v) = \tau(v_g)$. By \cref{cond:tau-3} of ADT-labeling, there exist two distinct vertices $u_1$ and $u_2$ such that $\left<v,v_g,u_1\right>$ is a clockwise facial cycle and  $\left<v,v_b,u_2\right>$ is a counter-clockwise facial cycle. Since the horizontal side of $\Delta(u_1)$ and $\Delta(u_2)$ have positive length in $R_{i-1}$, the right corner of $\Delta(v_b)$ and the left corner of $\Delta(v_g)$ cannot coincide.
  
  Therefore $R_i$ is an RT-representation when $i < n$. If $i=n$, and hence $v = X_r$, we draw $\Delta(X_r)$ as in $R_o$. This implies that $\Delta(X_r)$ touches $\Delta(X_g)$ and $\Delta(X_b)$. Since for each neighbor $u$ of $X_r$ different from $X_b$ and $X_g$, edge $(u,X_r)$ is red and entering $X_r$ in $\DAG_r(T)$, and since by construction $\topcorner{}{u}$ has y-coordinate equal to $\tau(X_r)$, we have that $\topcorner{}{u}$ touches $\hside{}{X_r}$. This concludes the proof that $R_n=\RT(T, \tau, R_o)$ is an RT-representation of $G$ satisfying the statement. The uniqueness of $\RT(T, \tau, R_o)$ follows from the fact that the y-coordinates of the horizontal sides are fixed by $\tau$, and for every inner vertex $v$ the y-coordinate of $\topcorner{}{v}$ must be $\tau(v_r)$, and the x-coordinates must be chosen such that $\Delta(v)$ touches both $\Delta(v_b)$ and $\Delta(v_g)$.
\end{proof}

\charRTmorph*

\begin{proof}
  Let $R_t$ be the representation at time instant $t$ of the linear morph from $R_0$ to $R_1$, with $t \in [1, 2]$, i.e., the set of triangles obtained by interpolating the corners of each triangle in $R_0$ to $R_1$. 
  It suffices to show that $R_t$ is an RT-representation of $G$.
  %\todo{Say that $R_t$ corresponds to the same Schnyder wood as $R_0$ and $R_1$?}
  
  We start by proving that, for every vertex $u$ of $G$, the triangle $\Delta_t(u)$ representing $u$ in $R_t$ is the lower-right half of an axis-parallel rectangle with positive area. Consider the triangles $\Delta_1(u)$ and $\Delta_2(u)$
  representing $u$ in $R_0$ and $R_1$, respectively. Since \vside{1}{u} and \vside{2}{u} are vertical segments with \topcorner{i}{u} lying above \rightcorner{i}{u}, with $i = 1, 2$, we have that \vside{t}{u} is a vertical segment with \topcorner{t}{u} lying above \rightcorner{t}{u}. A similar argument applies to prove that \hside{t}{u} is a horizontal segment with \leftcorner{t}{u} lying to the left of \rightcorner{t}{u}. Since all these segments have positive length, the claim follows.
  
  Next, we prove that $R_t$ is an RT-representation of $G$, i.e, we show that any two triangles touch in $R_t$ if and only if the corresponding vertices of $G$ are adjacent, and that no two triangles share more than~one~point.
  
  First, consider two vertices $u$ and $w$ that are adjacent in $G$. We show that $\Delta_t(u)$ and $\Delta_t(w)$ touch in exactly one point. Let $c_1(u)$ with $c\in \{\topcornersymb, \leftcornersymb, \rightcornersymb \}$ and $s_1(w)$ with $s \in \{ \hsidesymb, \vsidesymb, \dsidesymb \}$ be the corner of $\Delta_1(u)$ and the side of $\Delta_1(w)$, respectively, that touch in $R_0$.
  Since $R_0$ and $R_1$ correspond to the same Schnyder wood, and since the triangles of the outer face pairwise touch in their corners, we have that the corner $c_2(u)$ of $\Delta_2(u)$ touches the side $s_2(w)$ of $\Delta_2(w)$ in $R_1$. This, together with \cref{LEMMA:sidewithpoint}, implies that the corner $c_t(u)$ of $\Delta_t(u)$ touches the side $s_t(w)$ of $\Delta_t(w)$ in $R_t$. The fact that no other points are shared between $\Delta_t(u)$ and $\Delta_t(w)$ derives from the possible corner-side pairs $(c,s)$ that may appear in an RT-representation.
  This also implies that $R_t$ induces the same Schnyder wood as $R_0$ and $R_1$.
  
  Second, consider two vertices $u$ and $w$ that are not adjacent in $G$. We show that $\Delta_t(u)$ and $\Delta_t(w)$ do not share any point in $R_t$. We are going to exploit the following property of a Schnyder wood $T$ of $G$~\cite{DBLP:journals/ipl/Angelini17,DBLP:journals/tcs/GiacomoLM16,DBLP:journals/jocg/NollenburgPR16}:
  Let $T_r,T_g,T_b$, respectively, be the subtree of $T$ induced by the red, green, and blue edges, respectively.
  There exists a vertex $z$, possibly $z=w$, such that $T_i$ contains a path $P_{u,z}$ from $u$ to $z$ and $T_j$ contains a path $P_{w,z}$ from $w$ to $z$, for some $i \neq j \in \RGB$.
  We will show that $\Delta_t(u)$ and $\Delta_t(w)$ are in different quadrants of $\rightcorner{t}{z}$. See \cref{FIG:quadrants} for an illustration.
  \begin{figure}
    \centering
    \includegraphics[page=58]{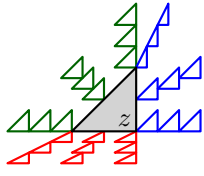}
    \caption{\label{FIG:quadrants}The three quadrants containing triangles on a red, green, or blue path to $z$.}
  \end{figure}

  Consider a unicolored path $P$ to $z$.  Then any corner-side pair
  determined by any edge in $P$ is of the same type, more precisely, of
  type $(\rightcornersymb,\dsidesymb)$ if $P$ is green, of type
  $(\leftcornersymb,\vsidesymb)$ if $P$ is blue, and of type
  $(\topcornersymb,\hsidesymb)$ if $P$ is red.
  Let $a$
  be a vertex. If there is a green $a$-$z$-path then $\Delta_t(a)$ is
  in the upper left quadrant of $\rightcorner{t}{z}$, including the
  boundary above $\topcorner{t}{z}$ or to the left of
  $\leftcorner{t}{z}$. If there is a blue $a$-$z$-path then
  $\Delta_t(a)$ is in the upper right quadrant of $\rightcorner{t}{z}$,
  including the boundary between $\topcorner{t}{z}$ and
  $\rightcorner{t}{z}$. Finally, if there is a red $a$-$z$-path then
  $\Delta_t(a)$ is in the lower left quadrant of
  $\rightcorner{t}{z}$, including the boundary between
  $\leftcorner{t}{z}$ and $\rightcorner{t}{z}$. The only three points
  where triangles of these three regions could intersect are the three
  corners of $\Delta_t(z)$. Assume that a corner of $\Delta_t(u)$ and a corner
  of $\Delta_t(w)$ coincide with the same corner of $\Delta_t(z)$. But
  that would imply that the same corners would also coincide in $R_0$
  and $R_1$~--~contradicting that $u$ and $w$ were not adjacent.
\end{proof}

\ledifferentflips*

\begin{figure}
  \centering
    \includegraphics[page=50]{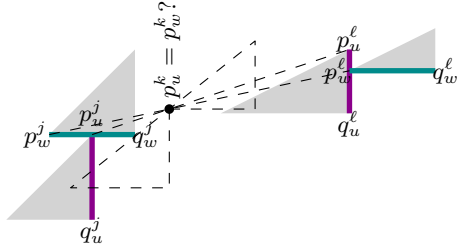}
  \caption{\label{FIG:linearMorphSameWood}A linear morph requires the same Schnyder wood.}
\end{figure}
\begin{proof}
  Assume that there is no Schnyder wood that corresponds to both $R_0$ and $R_1$. Then there must be an edge $\{u,w\}\in E$ such that $\Delta(u)$ would have to go ``around a corner'' of $\Delta(w)$ during a morph. To be more precise, for each $v \in V$, let $\Delta_t(v)$, $t \in [0,1]$, be the triangle representing $v$ at time point $t$.
  Then there must be $0 < j < k < \ell < 1$, a side
  $s^t_u=\overline{p^t_u,q^t_u}$ of $\Delta_t(u)$, and a side
  $s^t_w=\overline{p^t_w,q^t_w}$ of $\Delta_t(w)$ such that $p^j_u$ is
  an interior point of $s^j_w$, $p^k_w = p^k_u$, and $p^\ell_w$ is an
  interior point of $s^\ell_u$.  See~\cref{FIG:linearMorphSameWood}. If the trajectories of $p^t_u$
  and $p^t_w$ do not intersect then $p^k_u=p^k_w$ is impossible.  If
  the trajectories of $p^t_u$ and $p^t_w$ are collinear then
  $p_w^\ell$ cannot be an interior point of $s_u^\ell$.  So assume
  that they intersect in a point $p$. Since $s^j_u$ and $s^\ell_w$ are
  not parallel we have by the intercept theorem that
  $|\overline{p^j_wp}|/|\overline{pp^\ell_w}|\neq
  |\overline{p^j_up}|/|\overline{pp^\ell_u}|$. Thus, the trajectories
  of $p^t_u$ and $p^t_w$ do not pass at the same time $k$ through $p$.
\end{proof}

  \begin{algorithm}
  \Input{RT-representation $R$ of a plane triangulation $G=(V,E)$ with
    $\hside{}{v}$ on $\tau(v)$, $v \in V$, and Schnyder wood $T$. A vertex
    $x \in V$, a real number $y \in \mathds R$ such that moving $x$ to $y$
    respects the order along $\dside{}{x_g}$.}
  \Output{ADT-labeling $\tau$ of $\DAG_r(T)$ with
    (i) $\tau(x)=y$,
    (ii) the ratio $\lambda(v)$ is maintained for any vertex $v \neq x$, and
    (iii) the triangle $\Delta(x_g)$ is the same in $R$ and $\RT(T, \tau, R_o)$, where $R_o$ is the representation of~the~outer~face~in~$R$.}

  \Data{Stack $S$; Boolean $v.\textsc{red}, v.\textsc{green}$, for each $v\in V$ initialized to \textsc{false}, describing whether $v$ was found by traversing a red edge, or a green edge, or both (\textsc{true} in the first round and \textsc{false} in the second).}
\smallskip
  \Los{\textsc{adjust}(\textit{y-coordinates} $\tau$, \textit{Schnyder wood} $T$, \textit{vertex} $x$, \textit{real} $y$)}{ 
    \For{$u \in V$}{
      $\lambda(u) \gets (\tau(u) - \tau(u_g))/(\ytop(u_g) - \tau(u_g))$\;
    }
    $\tau(x) \gets y$\;
    \For{$i=1,2$}{
    $S.\textsc{push}(x)$\;
    \For{each incoming red edge $(u,x)$ of $x$ % from left to right
    }{
      $u.\textsc{red} \gets \neg u.\textsc{red}$\tcc*{\textsc{true} in first pass, \textsc{false} in second}
      \If(\tcc*[f]{always fulfilled in first pass}){$\neg u.\textsc{green}$}{
        $S.\textsc{push}(u)$\; 
      }
    }
    \While{$S \neq \emptyset$}{
      $w \gets S.\textsc{pop}$\; 
      \For{each incoming green edge $(v,w)$ of $w$% from top to bottom
      }{
        \If{$i=2$}{
          $\tau(v) \gets \lambda(v)\cdot(\tau(w_r) - \tau(w)) + \tau(w)$\tcc*{$\tau(w_r) = \ytop(w)$}
        }
        $v.\textsc{green} \gets \neg v.\textsc{green}$\;
                \If{$\neg v.\textsc{red}$}{
          $S.\textsc{push}(v)$\;
        }
        \For{each incoming red edge $(u,v)$ of $v$% from left to right
        }{
          $u.\textsc{red} \gets \neg u.\textsc{red}$\;
          \If{$\neg u.\textsc{green}$}{
            $S.\textsc{push}(u)$\; 
          }        
        }
      }
    }
  }
}
\caption{Pseudo-code of procedure {\sc adjust}}\label{ALGO:adjust}
\end{algorithm}

\adjust*

\begin{proof}
  The fact that $\tau_2(x)=y$ and that $\lambda(v)$ is maintained for any vertex $v \neq x$ follows by construction. 
To prove that $\tau_2$ is an ADT-labeling, we first determine a cycle bounding a subgraph of $G$ containing all the vertices whose $\tau$ might be modified by procedure \textsc{adjust}.
  \begin{description}
  \item[A cycle $C'$ enclosing all vertices where $\tau$ might
    change:] Let $u$ be the neighbor of $x$ following $x_g$ in clockwise order around $x$. Observe that either $x =u_r$ or $u=x_b$.     
    Consider the     blue $u$-$X_b$-path $p_b$ and the red $x$-$X_r$-path $p_r$ of $G$ (see
    \cref{FIG:unchanged_cycle}). Let $C'$ be the cycle composed by
    $X_b,p_b,u,x,p_r,X_r$. Let $v \neq x$ be a vertex along or outside
    $C'$: We claim that $\tau_2(v)=\tau_1(v)$. In fact, $\tau_2(v)\neq \tau_1(v)$ only if 
  there exists in $\DAG_b(T)$ a directed red-green-path $p$ with no two consecutive red edges starting at $x$ and ending with a green edge at $v$. This is due to the fact that, when processing a vertex $w$, procedure \textsc{adjust} might choose to process a neighbor $v$ of $w$ only if $T$ contains the green edge $(v, w)$ and then it might choose to process a neighbor $u$ of $v$ only if $T$ contains the red edge $(u,v)$. Recall that green and red edges have opposite orientation in $T$ and $\DAG_b(T)$.
  
  Note that $p$ starts at $x$, which lies along $C'$ and, after possibly reaching $u$ via a red edge, it necessarily enters the interior of $C'$ via a green edge. In the following steps $p$ can reach again a vertex $z$ of $C'$; however this can only happen if $z$ belongs to $p_b$ and it is reached via a red edge. This implies that the next edge of $p$ is green and hence $p$ enters again the interior of $C'$.
\item[$\tau_2$ is an ADT-labeling]
  \begin{figure}
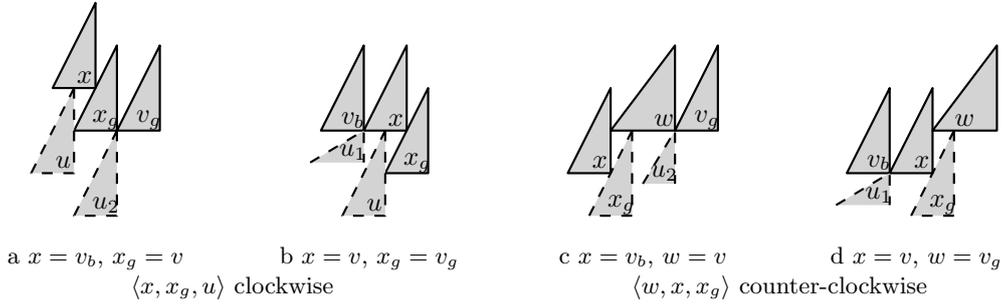

    \centering
    \begin{tabular}{cccc}
      \begin{subfigure}{.22\textwidth}
        \centering
        \includegraphics[page=51]{figures}
        \subcaption{$x=v_b$, $x_g=v$}
      \end{subfigure}&
      \begin{subfigure}{.22\textwidth}
        \centering
        \includegraphics[page=52]{figures}
        \subcaption{$x=v$, $x_g=v_g$}
      \end{subfigure}&
      \begin{subfigure}{.22\textwidth}
        \centering
        \includegraphics[page=53]{figures}
        \subcaption{$x=v_b$, $w=v$}
      \end{subfigure}&
      \begin{subfigure}{.22\textwidth}
        \centering
        \includegraphics[page=54]{figures}
        \subcaption{$x=v$, $w=v_g$}
      \end{subfigure}\\
      \multicolumn{2}{c}{$\left<x,x_g,u\right>$ clockwise}&
      \multicolumn{2}{c}{$\left<w,x,x_g\right>$ counter-clockwise}
    \end{tabular}
    \caption{\label{FIG:correctness_ADT3}If the horizontal sides of
      the solid triangles are on one line in $R_2$ then the two
      dashed triangles closing a counter-clockwise facial cycle with $(v,v_b)$
      and a clockwise facial cycle with $(v,v_g)$, respectively, are
      distinct. Situation drawn with respect to $R_1$.}
  \end{figure}
  If $y=\tau_1(x)$ then $\tau_1 = \tau_2$ is an ADT-labeling. Otherwise,
  distinguishing the cases $y<\tau_1(x)$ and $y>\tau_1(x)$
  we will prove by induction on a suitable ordering of the edges in
  $\DAG_r(T)$ the following property.
  \begin{description}
  \item[Property DTL:] $\tau_2(v) \leq \tau_2(w)$ for any edge $(v,w)$
    of $\DAG_r(T)$ where equality may only hold if (i)
    $\tau_1(v) = \tau_1(w)$, (ii) $w=x$, $v=x_g$, and
    $\left<w,v,v_b\right>$ is a clockwise oriented facial cycle in
    $T$, or (iii) $v=x=w_b$ and $\left<w,v,v_g\right>$ is a
    counter-clockwise oriented facial cycle in~$T$.
  \end{description}
  Since $\tau_1$ is an ADT-labeling of $\DAG_r(T)$ this
  immediately establishes Properties 1+2 of an ADT-labeling for
  $\tau_2$. In order to prove Property~3, assume that
  $\tau_2(v_g)=\tau_2(v)=\tau_2(v_b)$ for an inner vertex $v$. If
  $x \notin \{v_b,v\}$ then Property DTL implies
  $\tau_1(v_g)=\tau_1(v)=\tau_1(v_b)$ and thus, Property~3 is
  fulfilled.  So assume that $x \in \{v_b,v\}$. We distinguish the two
  cases (ii) and (iii) of Property DTL (see \cref{FIG:correctness_ADT3}):

  Assume first there is a clockwise facial cycle
  $\left<x,x_g,u\right>$ implying that $(x_g,u)$ is red, $(u,x)$
  is blue, and $y = \tau_2(x)=\tau_2(x_g)=\tau_1(x_g)$. Now, if
  $x=v_b$ then $v=x_g$ and
  $\tau_1(x_g) = \tau_2(x_g) = \tau_2(v_g) =\tau_1(v)$ implies that
  there is a vertex $u_2$ such that
  $\topcorner{1}{u_2} = \rightcorner{1}{x_g} = \leftcorner{1}{v_g}\neq
  \topcorner{1}{u}$. Thus $u \neq u_2$ close the two facial cycles
  with $(v_b,v)$ and $(v,v_g)$, respectively.

  If $x=v$ then $\tau_2(v_b) = \tau_2(v)$ implies
  $\tau_1(v_b) = \tau_1(v)$. Thus, there is $u_1 \in V$ with
  $\topcorner{1}{u_1}=\leftcorner{1}{x}$,
  i.e. $\left<v,v_b,u_1\right>$ is a counter-clockwise facial
  cycle. Since moving $x$ to $y=\tau_2(x_g)=\tau_1(x_g)$ respects the
  order along $\dside{}{x_g}$, it follows that $\leftcorner{1}{x}$ not
  on $\vside{1}{u}$. Thus $u_1 \neq u$.

  Assume now that there is a counter-clockwise oriented facial cycle
  $\left<w,x,x_g\right>$. Since moving $x$ to
  $y=\tau_2(x) = \tau_2(w) = \tau_1(w) = \ytop(x_g)$ respects the
  order along $\dside{}{x_g}$ it follows that $\rightcorner{1}{w}$ is
  not on $\dside{1}{x_g}$ and $\topcorner{1}{x}$ is strictly above
  $\ytop(x_g)$. Now, if $x=v_b$ then $w=v$ and
  $\tau_1(w) = \tau_2(w) = \tau_2(v_g) = \tau_1(v_g)$. Thus there is a
  clockwise facial cycle $\left<w,v_g,u_2\right>$ with
  $\topcorner{1}{u_2} = \rightcorner{1}{w} \neq \topcorner{1}{x_g}$. It
  follows that $u \neq x_g$

  If $x=v$ then $w=v_g$ and $\tau_2(v_b)=\tau_2(x)$ implies
  $\tau_1(v_b)=\tau_1(x)$.  Thus there is a clockwise facial cycle
  $\left<x,v_b,u_1\right>$ with $u_1 \neq x_g$.

  It remains to show property DTL: Let $(v,w)$ be an edge of $\DAG_r(T)$.
\item[Case $y < \tau_1(x)$:]
     \begin{figure}
    \centering
    \begin{subfigure}{.5\textwidth}\centering
    \includegraphics[page=36]{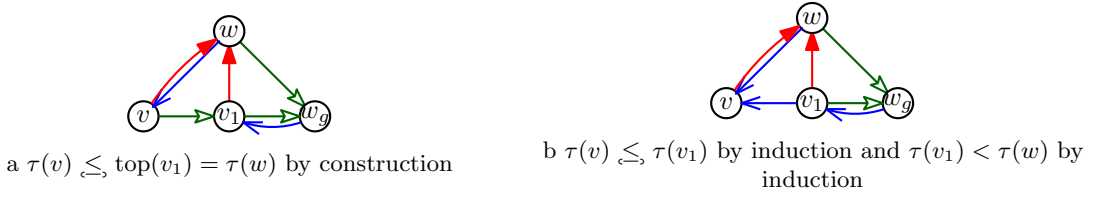}
    \subcaption{$\tau(v) \LEorL \ytop(v_1) = \tau(w)$ by construction}
    \end{subfigure}
    \begin{subfigure}{.45\textwidth}\centering
    \includegraphics[page=37]{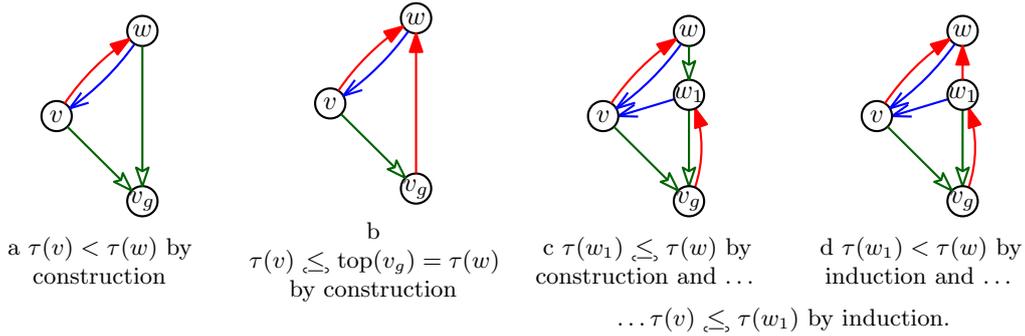}
    \subcaption{$\tau(v) \LEorL \tau(v_1)$ by induction and $\tau(v_1) < \tau(w)$ by induction}
  \end{subfigure}
    \caption{\label{FIG:correctness_vi}Property DTL in the case
    $y < \tau(x)$.}
  \end{figure}
  We use induction on the
  following lexicographical ordering of the edges of $\DAG_r(T)$:
  Consider any ordering $\prec$ extending the partial order given by
  $\DAG_r(T)$. An edge $(v_1,w_1)$ is considered before an edge
  $(v_2,w_2)$ if $w_1 \prec w_2$ or $w_1 = w_2 =: w$ and $v_1$ comes
  before $v_2$ in the clockwise order around $w$ starting from
  $(w_g,w)$.
                   
    Since $\tau_1$ denotes the y-coordinates of the horizontal sides
    in $R_1$, it follows that $\tau_1(v) \leq \tau_1(w)$. Moreover
    $\tau$ never increases for any vertex. Thus, if $\tau(w)$ does not
    change then
    $\tau_2(v) \leq \tau_1(v) \LEorL \tau_1(w) = \tau_2(w)$ for any
    descendant $v$ of $w$. This is especially true if $w$ is the first
    vertex with incoming edges with respect to $\prec$, i.e., the
    first vertex after $X_b$ and~$X_g$.
    
    Let now $(v,w)$ be an edge of $\DAG_r(T)$ such that $\tau(w)$
    changes. If $(v,w)$ is a green edge, i.e. if $v=w_g$ then either
    $w=x$, and $\tau_2(v) = \tau_1(v) \leq y = \tau_2(w)$ (where
    equality is only allowed if $\rightcorner{}{w}$ is the bottommost
    corner on $\dside{}{v}$ and $\leftcorner{}{v}$ and
    $\leftcorner{}{w}$ are not both on the vertical side of the same
    triangle, i.e. if $\left<w,v,v_b\right>$ is an oriented face in $T$), or
    $\tau(w)$ is changed according to the ratio on
    $\dside{}{v}$. Thus, the relationship $\tau(v) \LEorL \tau(w)$ is
    maintained.

    Consider now the case that $(v,w)$ is an incoming red edge of $w$
    or an outgoing (in $T$) blue edge of $w$. Let
    $v = v_0, v_1, \dots, v_k = w_g$ be the neighbors of $w$ in
    counter-clockwise order from $v$ to $v_g$. Observe that $(v_i,w)$,
    $i=1,\dots,k-1$ (if any) are
    red.  By induction, we have
    $\tau_2(v_i) \LEorL \tau_2(w)$, $i=1,\dots,k$. Observe further that
    for any $i=1,\dots,k$ either $(v_{i-1},v_{i})$ is a green edge or
    $(v_{i},v_{i-1})$ is a blue edge. See~\cref{FIG:correctness_vi}.

    Assume first that $k=1$, i.e. $v_1=w_g$. If $w=x$ then
    $\tau_2(v) \leq \tau_1(v) < y = \tau_2(x)$. If $w\neq x$ and
    $(v,w_g)$ is green then both, $\tau(v)$ and $\tau(w)$ have been
    modified according to the ratio on $\dside{}{w_g}$. Hence, the
    relationship $\tau(v) < \tau(w)$ is maintained.

    If $k \geq 1$ and $(v_{1},v)$ is blue we can apply the
    inductive hypothesis on both, $(v,v_1)$ and $(v_1,w)$, and get
    $\tau_2(v) \LEorL \tau_2(v_1) \LEorL \tau_2(w)$.
    Finally, if $k>1$ and $(v_0,v_1)$ is green
    then $\tau(v_0)$ was set according to the ratio on the diagonal of
    $\tau(v_1)$ which implies especially that
    $\tau_2(v_0) \LEorL \tau_2((v_1)_r) = \tau_2(w)$.
    \begin{figure}
      \begin{tabular}{cccc}
        \begin{subfigure}{.22\textwidth}\centering
          \includegraphics[page=39]{figures}
          \subcaption{\label{}$\tau(v) < \tau(w)$ by construction}
        \end{subfigure}
        &
          \begin{subfigure}{.22\textwidth}\centering
            \includegraphics[page=40]{figures}
            \subcaption{\label{}$\tau(v) \LEorL \ytop(v_g) = \tau(w)$ by construction}
          \end{subfigure}
        &
          \begin{subfigure}{.22\textwidth}\centering
            \includegraphics[page=41]{figures}
            \subcaption{\label{}$\tau(w_1) \LEorL \tau(w)$ by construction and \dots}
          \end{subfigure}
        &
          \begin{subfigure}{.22\textwidth}\centering
            \includegraphics[page=42]{figures}
            \subcaption{\label{}$\tau(w_1) < \tau(w)$ by induction and \dots}
          \end{subfigure}\\
        && \multicolumn{2}{c}{\dots $\tau(v) \LEorL \tau(w_1)$ by induction.}
      \end{tabular}
      \caption{\label{FIG:correctness_wi}Property DTL in the case $y > \tau(x)$.}
    \end{figure}
  \item[Case
    $y > \tau_1(x)$.]  For the case $y > \tau_1(x)$, we use the following
    order for the induction. An edge $(v_1,w_1)$ is considered before
    an edge $(v_2,w_2)$ if $v_2 \prec v_1$ or $v_1 = v_2 =: v$ and
    $w_1$ comes before $w_2$ in the counter-clockwise order around $v$
    starting from $(v,v_g)$.

    Since $y < \tau(X_r)$ and $\tau$ is always updated according to
    the ratio on the incident diagonal, it follows immediately by
    induction on the order in which the vertices are processed that
    $\tau_2(v) < \tau_2(X_r)$ for all vertices $v \neq X_r$. Hence, this
    is especially true if $v$ is the last vertex according to $\prec$
    before $X_r$. 

    Further, observe that $\tau$ is never decreased if $\tau(x)$ is
    increased. Thus, if $\tau(v)$ was not changed then we have
    $\tau_2(v) = \tau_1(v) \LEorL \tau_1(w) \leq \tau_2(w)$.  Assume
    now that $w \neq X_r$ and that $\tau(v)$ has changed.

    If $(v,w)$ is a green edge, i.e. if $v=w_g$ then $\tau(w)$ is
    changed accordingly to the ratio on the diagonal of
    $\Delta(v)$. Thus, the relationship $\tau(v) \LEorL \tau(w)$ is
    maintained.

    Consider now the case that $(v,w)$ is an outgoing red edge of $v$
    or an incoming (in $T$) blue edge of $v$. Let
    $w = w_0, w_1, \dots, w_k=v_g$ be the neighbors of $v$ in
    clockwise order from $w$ to $v_g$~--~where $k=1$ is
    possible. Observe that $(w_i,v)$, $i=1,\dots,k-1$ (if any) are
    blue. See~\cref{FIG:correctness_wi}. Observe further that for any
    $i=1,\dots,k$ either $(w_{i-1},w_{i})$ is a green edge or
    $(w_{i},w_{i-1})$ is a red edge.

    If $k=1$ and $v=x$ then
    $\tau_2(v) = y \LEorL \tau_1(w) \leq \tau_2(w)$, where equality is
    only possible if $(w,v)$ is blue and $(v_g,w)$ is red.

    We distinguish four more cases: ($k=1$ and $v\neq x$) and $k>1$
    and for each of them whether the edge between $w$ and $w_1$ is
    green or red.  We start with $k=1$: If $(w,v_g)$ is red then
    $\tau_2(v) \LEorL \tau_2((v_g)_r) = \tau_2(w)$. If $(w,v_g)$ is green
    then $w$ and $v$ are both neighbors on the diagonal of
    $\Delta(v_g)$ and the relationship $\tau(v) < \tau(w)$ is
    maintained. Consider now $k>1$: Since $w_i$, $i=1,\dots,k-1$ is
    before $w$ in the counter-clockwise ordering around $v$ after
    $v_g$, we have by induction that $\tau_2(v) \LEorL \tau_2(w_i)$. If
    $(w_1,w)$ is red, we observe that $v \prec w_1$ and we have again
    by induction that $\tau_2(w_1) < \tau_2(w)$. If $(w,w_1)$ is green
    then either $\tau(w_1)$ had not been changed at all or $\tau(w)$
    was changed accordingly, thus maintaining the relationship
    $\tau(w_1) \LEorL \tau(w)$. Thus, in both cases, we have
    $\tau_2(v) \LEorL \tau_2(w)$.
  \item[$\dside{2}{x_g}=\dside{1}{x_g}$:] Observe that all descendants
    of $x_g$ in $\DAG_r(T)$ are on the blue path $p_b$ or in the
    exterior of the cycle $C'$. Further, observe that the horizontal
    sides of the vertices in $p_b$ can be constructed without knowing
    anything about the interior of $C'$. Thus $\hside{}{x_g}$ does not
    change.  Since $(x_g)_r$ is also in the exterior of $C'$ or on
    $C'$ it follows that also the height of $\Delta(x_g)$ remains
    unchanged. Thus $\dside{}{x_g}$ is not changed.
  \item[$\left<R_1,R_2\right>$ is a linear morph:]
    \cref{COND:inCHAR_parallel} in \cref{THEO:charRTmorph} is always
    fulfilled if the side is vertical or horizontal. Thus it suffices
    to consider the diagonal sides.
    The only vertex $v$
    for which the ratio $\lambda(v)$ changed was $v=x$. However, since
    $\dside{}{x_g}$ did not change it follows that $\dside{1}{x_g}$
    and $\dside{2}{x_g}$ are parallel. Thus, \cref{THEO:charRTmorph}
    implies that $\left<R_1,R_2\right>$ is a linear morph.  
   \end{description}
\end{proof}

\ordering
\begin{proof}
  We prove the lemma by induction on $n$. There is nothing to show if
  $n=1$. If $n>1$ then we apply the inductive hypothesis to
  $P \setminus \{p_n\}$ and $Q \setminus \{q_n\}$ which yields
  $p_i \neq q_i$, $i < n$ such that no element of
  $P \setminus \{p_n\}$ is between $p_i$ and $q_i$. If $p_n$ is not
  between $p_i$ and $q_i$ we are done. Otherwise we have
  $ \dots p_i < \dots < p_n < \dots < q_i < \dots < q_n$. Thus,
  $p_n \neq q_n$ and no element of $P$ can be between $p_n$ and $q_n$.
\end{proof}

\begin{figure}
  \begin{center}
    \includegraphics[page=43]{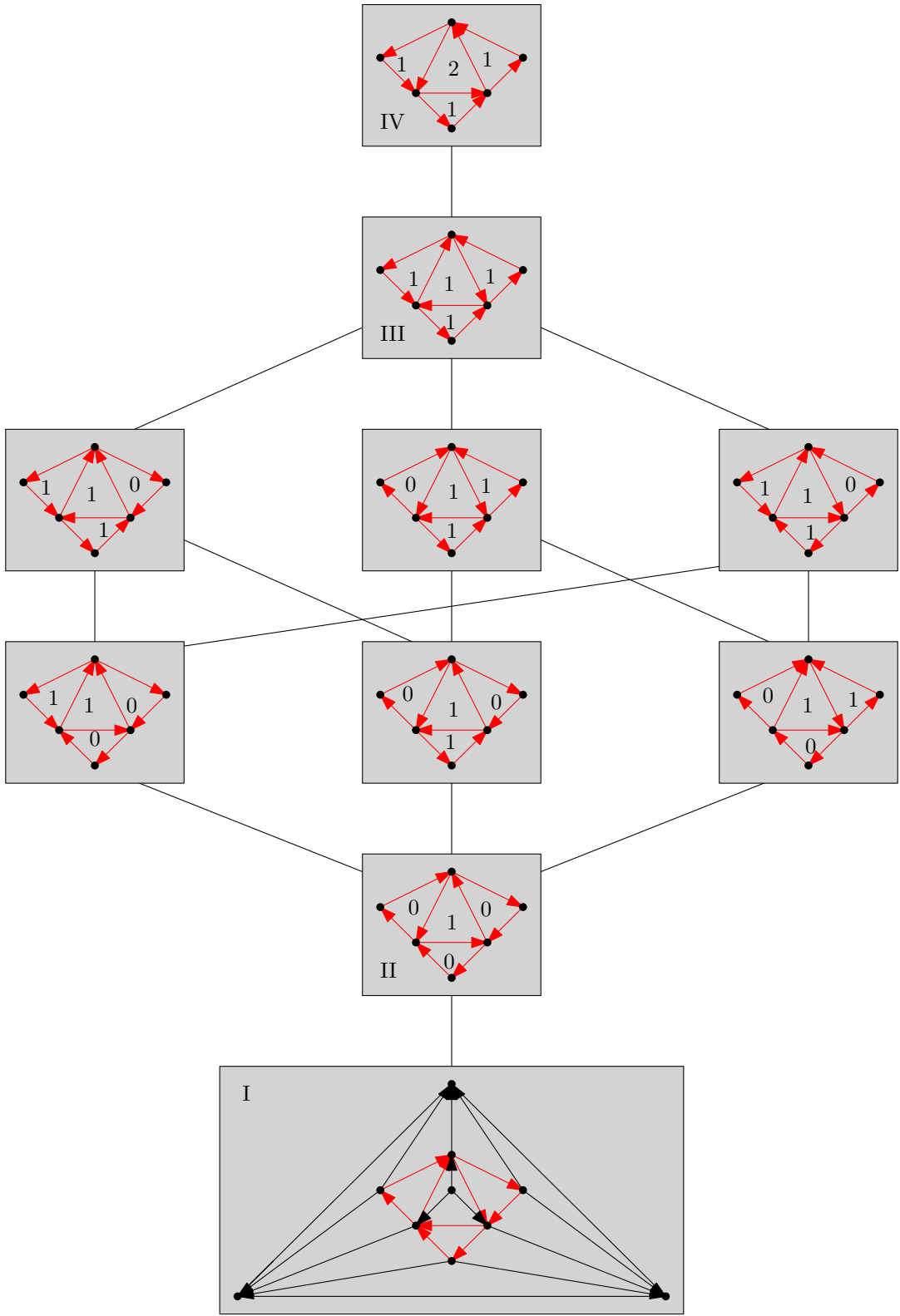}
  \end{center}
  \caption{\label{FIG:lattice} The lattice of all 3-orientations
    of a graph. The whole graph is only drawn in the minimum
    3-orientation. The rigid black edges are not repeated in the
    other drawings~--~they do not change their direction. Face
    labels indicate potentials. Observe that the inner red face is
    actually a separating triangle (there are not-drawn rigid black
    edges inside). }
\end{figure}

\end{document}